\documentclass[11pt]{article}
\setlength{\textwidth}{14cm}
\usepackage{natbib}
\usepackage{booktabs} 
\usepackage[ruled]{algorithm2e} 
\usepackage{times}
\usepackage{soul}
\usepackage{url}
\usepackage[utf8]{inputenc}
\usepackage{graphicx}
\usepackage{amsmath}
\usepackage{amsthm}
\usepackage[switch]{lineno}
\usepackage{comment}
\usepackage{xcolor}
\usepackage{multirow}
\usepackage{makecell}
\usepackage{tabu}
\usepackage{enumitem}
\usepackage{hyperref}
\usepackage{cleveref}
\usepackage{amsfonts}
\usepackage{lipsum}
\usepackage{balance} 
\usepackage{blindtext}
\usepackage{algpseudocode}
\usepackage[english]{babel}
\usepackage{array}
\usepackage{authblk}
\usepackage{float}

\SetAlFnt{\small}
\SetAlCapFnt{\small}
\SetAlCapNameFnt{\small}
\SetAlCapHSkip{0pt}
\IncMargin{-\parindent}

\usepackage[bibliography=common]{apxproof}

\newcommand{\BibTeX}{\rm B\kern-.05em{\sc i\kern-.025em b}\kern-.08em\TeX}

\newcommand{\wmin}{w_{\min}}
\newcommand{\wmax}{w_{\max}}
\newcommand{\ww}{\mathbf{w}}
\newcommand{\gcdw}{\operatorname{gcd}(\ww)}

\usepackage{soul}
\usepackage{color}

\newcommand{\erel}[1]{}

\newtheoremrep{theorem}{Theorem}[section]
\newtheoremrep{proposition}[theorem]{Proposition}
\newtheoremrep{lemma}[theorem]{Lemma}
\newtheorem{observation}[theorem]{observation}

\theoremstyle{definition}
\newtheorem{example}[theorem]{Example}
\newtheorem{definition}[theorem]{Definition}
\newtheorem{remark}[theorem]{Remark}

\setcitestyle{authoryear}

\makeatletter
\newlength{\phaserulewidth}

\makeatother

\makeatletter

\title{Whoever Said Money Won't Solve All Your Problems? \\ \Large Weighted Envy-free Allocation with Subsidy\footnote{This article combines two works (\cite{aziz2024weighted, elmalem2024weighted}), both set to appear as extended abstracts at AAMAS.}}

\author[1]{Noga Klein Elmalem}
\author[3]{Haris Aziz}
\author[1]{Rica Gonen}
\author[4]{Xin Huang}
\author[4]{Kei Kimura}
\author[4]{Indrajit Saha}
\author[2]{Erel Segal-Halevi}
\author[4]{Zhaohong Sun}
\author[3]{Mashbat Suzuki}
\author[4]{Makoto Yokoo}
\affil[1]{The Open University of Israel}
\affil[2]{Ariel University, Israel}
\affil[3]{UNSW Sydney, Australia}
\affil[4]{ Kyushu University, Japan}

\begin{document}

\maketitle
\begin{abstract}
We explore solutions for fairly allocating indivisible items among agents assigned weights representing their entitlements.
Our fairness goal is \textbf{weighted-envy-freeness (WEF)}, where each agent deems their allocated portion relative to their entitlement at least as favorable as any other’s relative to their own.
Often, achieving WEF necessitates monetary transfers, which can be modeled as third-party subsidies. The goal is to attain WEF with bounded subsidies.

Previous work relied on characterizations of unweighted envy-freeness (EF), that fail in the weighted setting. This makes our new setting challenging. 
We present polynomial-time algorithms that compute WEF allocations with a guaranteed upper bound on total subsidy for monotone valuations and various subclasses thereof.

We also present an efficient algorithm to compute a fair allocation of items and money, when the budget is not enough to make the allocation WEF. This algorithm is new even for the unweighted setting.
\end{abstract}

\section{Introduction}
\label{sec:intro}
The mathematical theory of fair item allocation among agents has practical applications, such as inheritance and partnership dissolutions. When agents have equal entitlements, each expects a share at least as good as others', known as an \emph{envy-free (EF)} allocation.
For indivisible items, an EF allocation might not exist. A common solution is using \emph{money} to compensate for envy. Recent studies assume a hypothetical third-party provides a non-negative \emph{subsidy} for each agent, and focus on minimizing the \emph{total subsidy} needed for envy-freeness. 


\cite{HaSh19a} showed that for any allocation, there exists a permutation of bundles that is \emph{envy-freeable (EF-able)}, meaning it can be made EF with subsidies. They proved the required subsidy is at most $(n-1)mV$, where $m$ is the number of items, $n$ the number of agents, and $V$ the maximum item value, and this bound is tight when the allocation is given. \cite{Brustle2020} presented an algorithm using iterative maximum matching, that computes an EF-able allocation with subsidy at most $(n-1)V$, also tight.
We extend the study to agents with different entitlements, or \emph{weights}, as in partnership dissolutions, where agents hold varying numbers of shares. For example, an agent with twice the entitlement of another expects a bundle worth at least twice as much.

Formally, an allocation is \emph{weighted envy-free (WEF)} (see e.g., \cite{robertson1998cake,zeng2000approximate,CISZ21}) if, for any two agents $i$ and $j$, $\frac{1}{w_i}$ times the utility $i$ assigns to their own bundle is at least $\frac{1}{w_j}$ times the utility $i$ assigns to $j$'s bundle, where $w_i$ and $w_j$ are their entitlements. 


We define \emph{weighted envy-freeability (WEF-ability)} analogously to EF-ability: An allocation is WEF-able if it can be made WEF with subsidies. Specifically, for any two agents $i$ and $j$, $\frac{1}{w_i}$ times the sum of the utility that $i$ assigns to his own bundle and the subsidy he receives is at least as high as $\frac{1}{w_j}$ times the sum of the utility that $i$ assigns to the bundle of $j$ and the subsidy $j$ receives.
Here, we assume quasi-linear utilities.

To illustrate the challenges in the generalized setting of unequal entitlements, we demonstrate that the results from \cite{HaSh19a,Brustle2020} for additive valuations fail when agents have different entitlements. 

\begin{example}[No permutation of bundles is WEF-able]
\label{example_intro}
There are two items $o_1,o_2$ and two agents $i_{1}, i_{2}$, with weights $w_{1} = 1, w_{2} = 10$ 
and valuation functions 
$$
\begin{bmatrix}
   & o_1 & o_2 \\
  i_1 & 5 & 7 \\
  i_2 & 10 & 8 \\
\end{bmatrix}
$$
Consider the bundles $X_1 = \{o_1\}$ and $X_2 = \{o_2\}$, where $i_1$ receives $X_1$ and $i_2$ receives $X_2$. Let $p_1$ and $p_2$ represent the subsidies for $i_1$ and $i_2$, respectively. The utility of $i_1$ for their own bundle is $5 + p_1$, and for $i_2$'s bundle, it is $7 + p_2$. To satisfy WEF, we need:
$\frac{5+p_1}{1} \geq \frac{7+p_2}{10}$, which implies $p_2 \leq 43 + 10 p_1$. 
Similarly, for agent $i_2$, WEF requires: $\frac{8+p_2}{10} \geq \frac{10+p_1}{1}$, which implies $p_2 \geq 92 + 10 p_1$. These two conditions are contradictory, so no subsidies can make this allocation WEF.
Next, consider the permutation where $i_1$ receives $X_2$ and $i_2$ receives $X_1$. In this case, WEF requires $\frac{7 + p_1}{1} \geq \frac{5+p_2}{10}$, which implies $p_2 \leq 65 + 10 p_1$, 
and for $i_2$: $\frac{10 +p_2}{10} \geq \frac{8 + p_1}{1}$, which implies $p_2 \geq 70 + 10 p_1$. Again, these conditions are contradictory, proving that no permutation of bundles is WEF-able.
\end{example}
\Cref{example_intro} also implies that the \emph{Iterated Maximum Matching algorithm} (\cite{Brustle2020}) does not guarantee WEF, as that algorithm yields an allocation where all agents receive the same number of items.

When valuations are not additive, even more results from the unweighted setting fail to hold.

\begin{example}
[Welfare-maximizing allocation is not WEF-able]
\label{ex:incompatibility-NW-WEFable}
There are two agents with weights $w_1=1, w_2=3$. There are two identical items. 
The agents have unit demand: agent 1 values any bundle with at least one item at $30$, and agent 2 at $90$. 
We show that, contrary to the result in \cite{HaSh19a}, the allocation maximizing the social welfare (sum of utilities) is not WEF-able.

The social welfare is maximized by allocating one item to each agent. Note that this is also the only allocation that is non-wasteful (\Cref{def:non-wasteful}).
WEF requires $\frac{30+p_1}{1}\geq \frac{30+p_2}{3}$, which implies $p_2\leq 3 p_1 + 60$; and $\frac{90+p_2}{3}\geq \frac{90+p_1}{1}$, 
which implies $p_2\geq 3 p_1 + 180$ --- a contradiction.
\end{example}
These examples demonstrate that the weighted case is more challenging than the unweighted case and requires new ideas.

Of course, since the unweighted case is equivalent to the weighted case where each weight $w_i = 1/n$, all
negative results from the unweighted setting extend to the weighted case. In particular, it is NP-hard to compute the minimum subsidy required to achieve (weighted) envy-freeness, even in the binary additive case, assuming the allocation is non-wasteful (as shown in \cite[Corollary 1]{HaSh19a}). 
Thus, following previous work,
we develop polynomial-time algorithms that, while not necessarily optimal, guarantee an upper bound on the total subsidy. 

\subsection{Our Results}
Our subsidy bounds for achieving WEF allocations are expressed as functions of $n$ (number of agents), $m$ (number of items), $\wmin$ (smallest agent weight), 
and $V$ (maximum item value); see \Cref{sec:model} for definitions.

For \emph{general monotone valuations}, we show that a total subsidy of $\left(\frac{W}{\wmin} - 1\right)mV$ is sufficient for WEF, and that this bound is tight in the worst case (\Cref{sec:characterization}).

For \emph{supermodular} and \emph{superadditive} valuations, we further show that WEF can be attained simultaneously with maximizing the social welfare and truthfulness 
(\Cref{sec:supermdular_val}).

For \emph{additive valuations}, assuming all weights are integers, we further show an upper bound that is independent of $m$: it is $\frac{W - \wmin}{\gcdw}V$,
where $\gcdw$ is the greatest common divisor of all weights --- largest number $d$ such that $w_i/d$ is an integer for all $i\in N$ (\Cref{sec:general_additive_val}).
Our algorithm extends the one in \cite{Brustle2020}; when all entitlements are equal it guarantees the same upper bound $(n-1)V$.

We also study WEF relaxations introduced for the setting without subsidy: WEF$(x,y)$ \cite{chakraborty2022weighted} and WWEF$1$ \cite{chakraborty2021picking}. 
We prove that WEF-ability is incompatible with WWEF1 or with WEF$(x,y)$ for $x+y<2$, but show an algorithm for two agents that finds a WEF-able and WEF$(1,1)$ allocation (\Cref{sec:app_EF}).

For \emph{identical additive valuations} (\Cref{sec:identical_additive_val}), we compute a WEF-able and WEF$(0,1)$ allocation with total subsidy at most $(n-1)V$, which is tight even in the unweighted case.

For \emph{binary additive valuations}, we adapt the General Yankee Swap algorithm (\cite{viswanathan2023general}) to compute a WEF-able and WEF$(0,1)$ allocation with total subsidy at most $\frac{W}{\wmin} - 1$, reducing to $n-1$ for equal weights (\Cref{sec:binary_additive_val}). For matroidal valuations, we show a linear lower bound in $m$ (\Cref{sec:matroidal}).

For identical items, 
we derive an almost tight bound of $V \sum_{2 \leq i \leq n} \left(w_i \sum_{1 \leq j \leq i} \frac{1}{w_j}\right)$,
where agents are sorted by descending order of their value for a single-item (\Cref{sec:identical_items}).
In particular, with nearly equal but different weights, the required subsidy may be in $\Omega(n^2 V)$, unlike the $O(n V)$ bound for equal weights. Additionally, we present a polynomial-time algorithm for computing a WEF-able allocation that requires the smallest possible amount of subsidy for each specific instance.
This is in contrast to the other sub-cases of additive valuations (binary additive and identical additive), in which computing the minimum subsidy per instance is known to be NP-hard.

Finally, we address scenarios with limited subsidy budgets (e.g., leftover cash in partnership dissolutions), exploring relaxed fairness under subsidy constraints (\Cref{sec:MWEF}).

Our contributions are summarized in Table~\ref{tbl:results}.
\begin{table}[h!]
\caption{Upper and lower bounds on worst-case 
total subsidy in weighted envy-freeable allocations. All subsidy upper bounds are attainable by polynomial-time algorithms.
Here, $w_2$ represents the second-smallest weight.
}
\label{tbl:results}
\begin{center}
\resizebox{\columnwidth}{!}{
\begin{tabular}{|c|c|c|}
\hline 
\makecell{\textbf{Valuation}} &\textbf{Lower bound}& \textbf{Upper bound}\\
\hline 
\makecell{\textbf{General,}\\\textbf{superadditive},\\\textbf{supermodular}} &  $\left(\frac{W}{\wmin} - 1\right)mV$ 
& $\left(\frac{W}{\wmin} - 1\right)mV$\\
       &  [~\Cref{prop:subsidy-ub-general}]     & [~\Cref{prop:subsidy-ub-general}] \\ \hline
\vspace{-2mm}& & \\
\makecell{\textbf{Additive}} & $\left(\frac{W}{\wmin} - 1\right)V$ & $\frac{W-\wmin}{\gcdw}V$ \\ 
       &  [~\Cref{worst case allocation can be chosen with weights}]     & [~\Cref{cor: sub general additive}] \\\hline
\vspace{-2mm}& & \\
\makecell{\textbf{Identical additive}} &  $(n-1)V$ &  $(n-1)V$ \\
       &  [~\Cref{thm:subsidy-lb-identical-additive}]    & [~\Cref{thm:poly-on-subsidy-identical-additive}] \\\hline
\vspace{-2mm}& & \\
\makecell{\textbf{Binary additive}} & $\frac{W}{w_2} - 1$ & $\frac{W}{\wmin} - 1$ \\ 
       &  [~\Cref{prop:lower-bound-binary}]    & [~\Cref{app:theorem_21}] \\\hline
\vspace{-2mm}& & \\
\makecell{\textbf{Matroidal}} & $\frac{m}{n}\left(\frac{W}{\wmin} - n\right)$ 
& $\left( \frac{W}{\wmin} - 1\right)m$ \\ 
       &  [~\Cref{thm:subsidy-lb-matroidal}]     & [~\Cref{prop:subsidy-ub-general}] \\\hline
\vspace{-2mm}& & \\
\makecell{\textbf{Additive},\\\textbf{identical items}}& 
{$\displaystyle\sum_{2 \leq i \leq n} \left( V w_i \sum_{1 \leq j < i} \frac{1}{w_j} \right)$} 
 & 
{$\displaystyle\sum_{2\leq i \leq n} \left( V w_i \sum_{1\leq j \leq i} \frac{1}{w_j} \right)$}
\\
&  [~\Cref{thm:subsidy-lb-additive-identical-items}]     & [~\Cref{thm:subsidy-ub-additive-identical-items}] \\ 

\hline
\end{tabular}
}
\end{center}
\end{table}

In \Cref{sec:exp}, we present preliminary experiments on the required subsidy in random instances and compare 
our algorithms with theoretical bounds.

\subsection{Related Work}
\label{sec:related_work}
 \paragraph{\textbf{Equal entitlements.}}
 \cite{steinhaus1948problem} initiated fair allocation with the cake-cutting problem, followed by \cite{foley1966resource} advocacy for envy-free resource allocation. Challenges with indivisible items were outlined by \cite{schmeidler1971fair}. 
 
 \paragraph{\textbf{Fair allocation with Monetary Transfers.}}
 The concept of compensating an indivisible resource allocation with money has been explored in the literature ever since 
 \cite{demange1986multi} introduced an ascending auction for envy-free allocation using monetary payments for unit demand agents.
 
 Recently, the topic of multi-demand fair division with subsidies has attracted significant attention (see, e.g., a survey article \cite{LLSW24}).
 \cite{HaSh19a} showed that
 an allocation is envy-freeable with money if and only if the agents cannot increase
 social welfare by permuting bundles.
 \cite{Brustle2020} study the more general class of monotone valuations. They demonstrate that a total subsidy of $2(n-1)^2   V$ is sufficient to guarantee the existence of an envy-freeable allocation. Moreover, they showed that 
 for additive valuations where the value of each item is at most $V$,
 giving at most $V$ to each agent (i.e., a total subsidy of at most $(n-1)V$) is sufficient to eliminate envies.
 \cite{kawase2024towards} improved this bound to $\frac{n^2 - n - 1}{2}$.
 
 \cite{caragiannis2020computing} developed an algorithm that approximates the minimum subsidies with any required accuracy for a constant number of agents, though with increased running time. However, for a super-constant number of agents, they showed that minimizing subsidies for envy-freeness is both hard to compute exactly and difficult to approximate.
 
 \cite{Aziz21a} presented a sufficient condition and an algorithm to achieve envy-freeness and equitability (every agent should get the same utility) when monetary transfers are allowed for agents with quasi-linear utilities and superadditive valuations (positive or negative).
 
 \cite{barman2022achieving} studied agents with  dichotomous valuations (agents whose marginal value for any good is either zero or one), without any additivity requirement.
 They proved that, for $n$ agents, there exists an allocation that achieves envy-freeness with total required subsidy of at most $n-1$, which is tight even for additive valuations.
 
 \cite{Goko2024} study an algorithm for an envy-free allocation with subsidy, that is also \emph{truthful}, when agents have submodular \emph{binary} valuations. The subsidy per agent is at most $V=1$.
 Their algorithm works only for agents with equal entitlements.
 
 The case where multiple items can be allocated to each agent while the agents pay some amount of money to the mechanism designer, is extensively studied in combinatorial auctions \citep{cramton2006combinatorial}. A representative mechanism is the well-known Vickrey-Clarke-Groves (VCG)  mechanism (~\cite{clarke,groves:econometrica:1973,vickrey:1961}), 
 which is truthful and maximizes social welfare. Envy-freeness is not a central issue in combinatorial auctions, with a notable exception presented by \cite{Papa03b}.
 
 \paragraph{\textbf{Different entitlements.}}
 In the past few years, several researchers have examined a more general model in which different agents may have different entitlements, included weighted fairness models like \emph{weighted maximin share fairness (WMMS)} and \emph{weighted proportionality up to one item (WPROP1)} (\cite{chakraborty2021picking,BEF23a,ACL19a}).
 \cite{CISZ21} established \emph{maximum weighted Nash welfare (MWNW)} satisfies \emph{Pareto optimality} and introduced a weighted extension of EF1. \cite{suksompong2022maximum} demonstrated MWNW properties under binary additive valuations and its polynomial-time computability. They further extended these findings to various valuation types. 
 
 \paragraph{\textbf{Different entitlements with subsidies.}}
 There is a long-standing tradition in fair division to revisit settings and extend them to the case of weighted entitlements. In a classic book by \cite{BrTa96a}, many algorithms and results are extended to the cases of weighted entitlements. This tradition continues in the context of the allocation of indivisible items. 
 \cite{WuZZ23} presented a polynomial-time algorithm for computing 
 a PROP allocation of \emph{chores} among agents with additive valuations, with total subsidy at most $\frac{n  V}{4}$, which is tight.
 For agents with different entitlements, they compute a WPROP allocation with total subsidy at most $\frac{(n-1)  V}{2}$. 
 In a subsequent work (\cite{wu2024tree}), they further improved this bound to $(\frac{n}{3} - \frac{1}{6})V$. 
 
 As far as we know, weighted envy-freeness with subsides has not been studied yet. Our paper aims to fill this gap.
 
 \paragraph{\textbf{Matroid-rank valuations.}}
 Recent studies have considered \emph{matroid-rank} valuation (binary submodular). \cite{montanari2024weighted} introduce a new family of weighted envy-freeness notions based on the concept of \emph{transferability} and provide an algorithm for computing transferable allocations that maximize welfare. \cite{babaioff2021fair} design truthful allocation mechanisms that maximize welfare and are fair. Particularly relevant to our work is a recent work by \cite{viswanathan2022yankee}, who devised a fair allocation method inspired by Yankee Swap, achieving efficient and fair allocations when agents have submodular binary valuations. We use some of their techniques in our algorithms. Later, \cite{viswanathan2023general} generalize the Yankee Swap algorithm to efficiently compute allocations that maximize any fairness objective, called General Yankee Swap. 

\section{Model}
\label{sec:model}
\paragraph{\textbf{Agents and valuations.}}
Let $[t] = \{1, 2, \ldots, t \}$ for any positive integer $t$. 
We consider $n$ agents in $N$ and $m$ items in $M$.
Each agent $i\in N$ has a valuation function $v_i:2^M \rightarrow \mathbb{R}^{+}_{0}$,
specifying a non-negative real value $v_i(A)$ for a given bundle $A\subseteq M$. 
We write $v_{i}(o_1 ,...,o_t)$ instead of $v_{i}(\{o_1, ..., o_t\})$.

We assume that the valuation functions $v_i$ are \textit{normalized}, that is $v_i(\emptyset) = 0$ for all $ i \in N$ (\cite{suksompong2023weighted}), and \emph{monotone}, i.e., for each $i\in N$ and $A \subseteq B \subseteq M$,  $v_i(A)\le v_i(B)$.
We denote $V := \displaystyle\max_{i\in N, A \subseteq M, A\neq \emptyset} \frac{v_i(A)}{|A|}$.
Note that when the valuations are additive, $V$ equals to $\displaystyle\max_{i\in N, o\in M} v_i(o)$.%
\footnote{In previous work it was assumed that $V=1$.}
 Also, $\displaystyle\max_{i\in N} v_i(M) \leq m\cdot V$.

An \emph{allocation} $X=(X_1,\ldots, X_n)$ is a partitioning of the items into $n$ bundles where $X_i$ is the bundle allocated to agent $i$.
We assume allocation $X$ must be \emph{complete}, i.e., 
$\bigcup_{i \in N} X_i = M$ holds; 

An \emph{outcome} is a pair consisting of the allocation and the subsidies received by the agents, that is, a pair $(X,\mathbf{p})$ 
is the allocation that specifies bundle $X_i\subseteq M$ for agent $i$ and $\mathbf{p}\in (\mathbb{R}^+_0)^n$ specifies the subsidy $p_i$ received by agent $i$.

An agent $i$'s \emph{utility} for a bundle-subsidy pair $(X_j,p_j)$ is $v_i(X_j)+p_j$. In other words, we assume quasi-linear utilities.

\paragraph{\textbf{Envy.}}
An outcome $(X,\mathbf{p})$  is \emph{envy-free (EF)} if for all $i,j\in N$, it holds that $$v_i(X_i)+p_i\geq v_i(X_j)+p_j.$$ 
An allocation $X$ is \emph{envy-freeable (EF-able)} if there exists a subsidy vector $\mathbf{p}$ such that $(X,\mathbf{p})$ is EF. 

\paragraph{\textbf{Entitlements.}}
Each agent $i \in N$ is endowed with a fixed \emph{entitlement}  $w_i \in \mathbb{R}_{>0}$. We also refer to entitlement as \emph{weight}.
We assume, without loss of generality, that the entitlements are ordered in increasing order, i.e., $\wmin = w_1 \leq w_2 \leq \ldots \leq w_n = \wmax$. 
We denote $W := \sum_{i\in N} w_i$. 

\begin{definition}[Weighted envy-freeability]
An outcome $(X,\mathbf{p})$ is \emph{weighted envy-free (WEF)} 
if for all $i, j \in N$: 
$$\frac{v_i(X_i)+p_i}{w_i}\geq \frac{v_i(X_j)+p_j}{w_j}.$$

An allocation $X$ is \emph{weighted envy-freeable (WEF-able)} if there is subsidy vector $\mathbf{p}$, such that $(X,\mathbf{p})$ is WEF. 
\end{definition}
The term $\frac{v_i(X_i)}{w_{i}}$ represents the value per unit entitlement for agent $i$ in their allocation. The WEF condition ensures that this value is at least as high as $\frac{v_i(X_j)}{w_{j}}$, denoting the corresponding value per unit entitlement for agent $j$ in the same allocation.
WEF seamlessly reduces to envy-free (EF) concept when entitlements are equal, i.e., $w_i = w_j$ for all $i,j \in N$.

\paragraph{\textbf{Efficiency concepts.}}
\begin{definition}[Pareto efficiency]
We say allocation $X$ dominates another allocation $X'$ if 
$\forall i\in N, v_i(X_i)\geq v_i(X'_i)$ and 
$\exists j\in N$, $v_j(X_j) > v_j(X'_j)$ hold. 
We say $X$ is Pareto efficient 
if it is not dominated by any other allocation. 
\end{definition}
\begin{definition}[Non-wastefulness]
\label{def:non-wasteful}
We say allocation $X$ is \emph{non-wasteful} if $\forall i \in N$, $\forall o \in X_i$, if 
$v_i(X_i)= v_i(X_i \setminus \{o\})$ holds, then $v_j(X_j \cup  \{o\})=v_j(X_j)$ for all $j\neq i$. 
\end{definition}
In other words, no item can be transferred from one agent to another, and the transfer results in a Pareto improvement. 

For 
$S\subseteq N$ and 
$A \subseteq M$, let ${\mathcal X}^{S,A}$ denote all possible allocations of $A$ among $S$.
For an allocation $X$ and a subset $S\subseteq N$ of agents,  the \emph{social welfare} among the agents in $S$ is defined as $SW^{S}(X) := \sum_{i\in S} v_i(X_i)$. 

\begin{definition}[Maximizing social welfare allocation (MSW)]
We say allocation $X^{S,A}\in {\mathcal X}^{S,A}$ maximizing social welfare with respect to $S$ and $A$ ($MSW^{S,A}$) if 
$SW^{S}(X^{S,A}) \geq  SW^{S}(X)$
 holds for any $X \in {\mathcal X}^{S,A}$ \footnote{If $S = N$ and $A=M$, we omit ``with respect to $N$ and $M$'' and just say a maximizing social welfare (MSW) allocation.}.
\end{definition}

Any MSW allocation is Pareto efficient, 
but not vice versa. 

Weighted envy-freeability and non-wastefulness are generally incompatible (\Cref{ex:incompatibility-NW-WEFable}). To address this, we propose a weaker efficiency property:
\begin{definition}[Non-zero social welfare]
We say allocation $X$ satisfies non-zero social welfare property if
$SW^N(X)=0$, then for any other allocation $X'$, $SW^N(X')=0$ holds. 
\end{definition}
This property allows choosing an allocation $X$ such that $SW^N(X)=0$ only if social welfare is zero for all allocations.

Pareto efficiency implies both non-wastefulness and non-zero social welfare, but the reverse is not true. Non-wastefulness and non-zero social welfare are independent properties.

\section{Characterization for General Monotone Valuations}
\label{sec:characterization}
In this section we allow agents to have arbitrary monotone valuations.
We give a characterization of WEF-able allocations. For the characterization, we generalize a couple of previously studied mathematical objects to the weighted case. 

\begin{definition}[Weighted reassignment-stability]
We say that an allocation $X$ is \emph{weighted reassignment-stable} if 
\begin{align}\label{eq:WRS}
\sum_{i\in N} \frac{v_i(X_i)}{w_i}\geq \sum_{i\in N} \frac{v_i(X_{\pi(i)})}{w_{\pi(i)}}
\end{align}
for all permutations $\pi$ of $N$. 
\end{definition}
With equal weights, reassignment-stability implies that $X$ maximizes the sum of utilities. But with different weights, it does \emph{not} imply that $X$ maximizes any function.

\begin{definition}[Weighted envy-graph]
For any given allocation $X$, the corresponding \emph{weighted envy-graph}, $G_{X,w}$,
is a complete directed graph with vertex set $N$, each assigned a weight  $w_{i}$.

For any pair of agents $i,j\in N$, $cost_X(i,j)$ is the cost of edge $(i,j)$ in $G_{X,w}$, 
which presents the fact that agent $i$ has envy toward agent $j$ under the allocation $X$:  
$$cost_X(i,j) \ =\  \frac{v_i(X_j)}{w_j}-\frac{v_i(X_i)}{w_i}.$$
Note that $cost_X$ can take negative values.
For any path or cycle $C$ in the graph, $cost_X(C)$ is the cost of the $C$ under allocation $X$, which is the sum of costs of edges along $C$ in $G_{X,w}$.

With these definitions, $\ell_{i,j}(X)$ 
 represents the cost of the maximum-cost path from $i$ to $j$ in $G_{X,w}$, and $\ell_i(X) = cost_{X}(P_i(X))$ represents cost of the maximum-cost path in $G_{X,w}$ starting at $i$, denoted as $P_i(X)$.
\end{definition}

Similar to the previous work of \cite{HaSh19a} (in the unweighted setup), we provide necessary and sufficient conditions for a WEF-able allocation:
\begin{theorem}
\label{th:ef} The following are equivalent for  allocation $X$:
\begin{enumerate}
\item $X$ is WEF-able;  \label{th:ef:condition1}
\item $X$ is weighted reassignment-stable; \label{th:ef:condition2}
\item The graph $G_{X,w}$ has no positive-cost cycle. 
\label{th:ef:condition3}
	\end{enumerate}
	\end{theorem}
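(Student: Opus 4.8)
The plan is to prove the three conditions equivalent via the cyclic chain \ref{th:ef:condition1}$\Rightarrow$\ref{th:ef:condition2}$\Rightarrow$\ref{th:ef:condition3}$\Rightarrow$\ref{th:ef:condition1}, generalizing the unweighted argument of \cite{HaSh19a}. The guiding observation is that the correct ``potential'' in the weighted world is $p_i/w_i$ rather than $p_i$, and that the weighted envy-graph costs $cost_X(i,j)=\frac{v_i(X_j)}{w_j}-\frac{v_i(X_i)}{w_i}$ have already been set up so that the WEF inequality $\frac{v_i(X_i)+p_i}{w_i}\ge\frac{v_i(X_j)+p_j}{w_j}$ is exactly equivalent to the potential inequality $\frac{p_i}{w_i}-\frac{p_j}{w_j}\ge cost_X(i,j)$. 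Keeping this reformulation in mind makes the two ``easy'' implications almost mechanical and isolates the real work in the last one.

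For \ref{th:ef:condition1}$\Rightarrow$\ref{th:ef:condition2}, I would take a witnessing subsidy vector $\mathbf{p}$ and, for an arbitrary permutation $\pi$, sum the WEF inequalities over the pairs $(i,\pi(i))$. The subsidy contributions are $\sum_{i}\frac{p_i}{w_i}$ on the left and $\sum_{i}\frac{p_{\pi(i)}}{w_{\pi(i)}}$ on the right; since $\pi$ permutes $N$, these two sums coincide and cancel, leaving $\sum_i \frac{v_i(X_i)}{w_i}\ge \sum_i \frac{v_i(X_{\pi(i)})}{w_{\pi(i)}}$, i.e.\ weighted reassignment-stability. For \ref{th:ef:condition2}$\Rightarrow$\ref{th:ef:condition3} I would argue the contrapositive: given a positive-cost cycle $C=(i_1,\dots,i_k,i_1)$, define the cyclic permutation $\pi$ sending $i_t\mapsto i_{t+1}$ (indices mod $k$) and fixing every other agent. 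All fixed agents contribute nothing to the difference $\sum_i \frac{v_i(X_{\pi(i)})}{w_{\pi(i)}}-\sum_i\frac{v_i(X_i)}{w_i}$, and the cycle agents contribute exactly $\sum_{t} cost_X(i_t,i_{t+1})=cost_X(C)>0$, so $\pi$ strictly increases the weighted sum and $X$ is not reassignment-stable.

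The substantive direction is \ref{th:ef:condition3}$\Rightarrow$\ref{th:ef:condition1}, where I must actually build nonnegative subsidies. I would set $p_i := w_i\,\ell_i(X)$, where $\ell_i(X)$ is the cost of the maximum-cost path starting at $i$ in $G_{X,w}$. Two points need care and are the main obstacle. First, well-definedness: I would pass through \emph{walks} rather than paths and note that, since no positive-cost cycle exists, every closed sub-walk has nonpositive cost, so deleting cycles from any walk does not decrease its cost; hence the maximum over walks from $i$ is finite and attained by a simple path, making $\ell_i(X)$ well-defined. The empty (single-vertex) walk has cost $0$, giving $\ell_i(X)\ge 0$ and thus $p_i\ge 0$ since $w_i>0$. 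Second, the defining inequality: for any $j$, prepending the edge $(i,j)$ to a maximum-cost walk from $j$ yields a walk from $i$ of cost $cost_X(i,j)+\ell_j(X)$, whence $\ell_i(X)\ge cost_X(i,j)+\ell_j(X)$. Rewriting this as $\frac{p_i}{w_i}-\frac{p_j}{w_j}\ge cost_X(i,j)$ is precisely the WEF condition for the pair $(i,j)$, so $(X,\mathbf{p})$ is WEF and $X$ is WEF-able. I expect the only genuinely delicate step to be the walk-versus-simple-path reduction that legitimizes the longest-path potential; the nonnegativity and the final algebraic rearrangement are routine once that is in place.
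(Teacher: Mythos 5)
Your proposal is correct and follows essentially the same route as the paper: the same cyclic chain of implications, the same permutation-summation argument for (1)$\Rightarrow$(2), the same cycle-shift permutation for the contrapositive of (2)$\Rightarrow$(3), and the same subsidy definition $p_i = w_i\,\ell_i(X)$ with the inequality $\ell_i(X)\ge cost_X(i,j)+\ell_j(X)$ for (3)$\Rightarrow$(1). Your added care on two points the paper glosses over --- reducing walks to simple paths to justify well-definedness of $\ell_i(X)$, and using the empty walk to get $\ell_i(X)\ge 0$ so the subsidies are nonnegative --- is a welcome tightening of the same argument rather than a different one.
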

\begin{proof}
$1 \Rightarrow 2$. Suppose the allocation $X$ is WEF-able. Then, there exists a subsidy vector $\mathbf{p}$ such that for all agents $i,j$ 
$\frac{v_i(X_i)+p_i}{w_i}\geq \frac{v_i(X_j)+p_j}{w_j}$. Equivalently, 
$\frac{v_i(X_j)}{w_j}- \frac{v_i(X_i)}{w_i}\leq \frac{p_i}{w_i} - \frac{p_j}{w_j}$.
Consider any permutation $\pi$ of $N$. Then,
$$\sum_{i\in N} \left(\frac{v_i(X_{\pi(i)})}{w_{\pi(i)}}- \frac{v_i(X_i)}{w_i}\right)\leq \sum_{i\in N}\bigg(\frac{p_i}{w_i} - \frac{p_{\pi(i)}}{w_{\pi(i)}}\bigg)=0.$$ The last entry is zero as all the weighted subsidies are considered twice, and they cancel out each other. 
Hence the allocation $X$ is weighted reassignment-stable.


$2 \Rightarrow 3$.
Suppose some allocation $X$ has a corresponding weighted envy-graph with a cycle $C = (i_1, \ldots, i_r)$ of strictly positive cost. Then consider a permutation $\pi$, defined for each agent $i_k \in N$ as follows: \[\pi(i_k) = \begin{cases}
    i_k, &i_k\notin C \\
    i_{k+1}, & k \in \{1, \ldots, r-1 \} \\
    i_1 & k = r 
\end{cases}. \]
 In that case 
 $0 < cost_X(C) \Leftrightarrow \sum_{i\in N} \frac{v_i(X_i)}{w_i}< \sum_{i\in N} \frac{v_i(X_{\pi(i)})}{w_{\pi(i)}}$, which means that $X$ is not weighted reassignment-stable. 


$3 \Rightarrow 1$.
Suppose (3) holds. 
As there are no positive-weight cycles in $G_{X,w}$, we can define, for each agent $i$,
the maximum cost of any path in the weighted envy-graph that starts at $i$.
We denote this path by $\ell_i(X)$. 
Let each agent $i$'s subsidy be $p_i=\ell_i(X)\cdot w_i$.
Then for any other agent $j\neq i \in N$, 
$\frac{p_i}{w_i}=\ell_i(X)\geq cost_X(i,j)+\ell_j(X)=\frac{v_i(X_j)}{w_j}-\frac{v_i(X_i)}{w_i} + \frac{p_j}{w_j}.$
This implies that $(X,\mathbf{p})$ is WEF, and hence $X$ is WEF-able. 
\end{proof}




\Cref{th:ef} presents an effective method for verifying whether a given allocation is WEF-able. 

\begin{proposition}
Given an allocation $X$, it can be checked in polynomial time whether $X$ is WEF-able.
\end{proposition}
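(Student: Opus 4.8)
The plan is to leverage the characterization in \Cref{th:ef}, specifically the equivalence of WEF-ability with condition~\ref{th:ef:condition3}: the weighted envy-graph $G_{X,w}$ has no positive-cost cycle. Thus the task reduces to showing that one can check, in polynomial time, whether the complete directed graph $G_{X,w}$ contains a cycle of strictly positive cost.

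First I would construct $G_{X,w}$ explicitly. This is a complete directed graph on the $n$ vertices in $N$, and for each ordered pair $(i,j)$ the edge cost is $cost_X(i,j) = \frac{v_i(X_j)}{w_j} - \frac{v_i(X_i)}{w_i}$. Computing each such cost requires two valuation queries (for $v_i(X_j)$ and $v_i(X_i)$) plus arithmetic, so all $O(n^2)$ edge costs can be obtained with polynomially many valuation queries and polynomial additional work.

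Second, I would reduce ``no positive-cost cycle'' to a standard shortest-path / negative-cycle detection problem. The natural move is to negate all edge costs: define $c'(i,j) = -cost_X(i,j)$. Then $G_{X,w}$ has a \emph{positive}-cost cycle (with respect to $cost_X$) if and only if the graph with costs $c'$ has a \emph{negative}-cost cycle. Detecting a negative-cost cycle in a directed graph with arbitrary (possibly negative) edge weights is solvable in polynomial time by the Bellman--Ford algorithm (running it from an auxiliary source connected to all vertices, or equivalently running the all-pairs Floyd--Warshall algorithm and inspecting the diagonal), in $O(n^3)$ time. Since condition~\ref{th:ef:condition3} is exactly the absence of a positive-cost cycle, this directly decides WEF-ability.

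The step requiring the most care is not algorithmic difficulty but making the equivalence airtight: I must invoke \Cref{th:ef} to justify that testing the cycle condition genuinely decides WEF-ability, and I should note that the edge costs may be negative, which is precisely why a simple reachability or greedy approach fails and why Bellman--Ford (rather than Dijkstra) is the appropriate tool. I would also remark that, as a byproduct, the proof of $3 \Rightarrow 1$ in \Cref{th:ef} gives a constructive witness: once no positive-cost cycle exists, the quantities $\ell_i(X)$ are well-defined and computable as longest-path costs (equivalently, shortest paths under the negated weights), yielding an explicit certifying subsidy vector $p_i = \ell_i(X)\cdot w_i$ in the same $O(n^3)$ running time.
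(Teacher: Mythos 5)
Your proposal is correct and takes essentially the same approach as the paper's own proof: the paper likewise invokes \Cref{th:ef} and (as its second method) detects positive-cost cycles in $G_{X,w}$ by negating all edge costs and running Floyd--Warshall in $O(n^3)$ time, for an overall $O(mn+n^3)$ bound once graph construction is counted. Your mention of Bellman--Ford as an alternative negative-cycle detector is a cosmetic variation, not a different argument.
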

\begin{proof}
    According to \Cref{th:ef}, determining whether $X$ is is WEF-able is equivalent to verifying if $X$ is weighted reassignment-stable. To analyze this, consider a complete bipartite graph $G$ with $N$ nodes on each side. The weight of each edge connecting $i$ on the left to $j$ on the right is defined as $\frac{v_i(X_j)}{w_j}$.  This graph can be constructed in $O(mn + n^2)$ time. 
    Next, the maximum-value matching for $G$ can be computed in $O(n^3)$ time  (\cite{edmonds1972theoretical}). 
    If the value of the matching that contains edge $(i,i)$ for all $i\in N$ is equal to the maximum value, it follows that $\sum_{i\in N} \frac{v_i(X_i)}{w_i} \geq \sum_{i\in N} \frac{v_i(X_\pi(i))}{w_{\pi(i)}}$ for any permutation $\pi:N\rightarrow N$. This condition aligns with the definition of weighted reassignment-stability. The total running time is $O(mn + n^3)$.

    Another approach involves verifying the absence of positive-cost cycles in the weighted envy-graph $G_{X,w}$. This can be achieved by transforming the graph by negating all edge weights and then checking for the presence of negative-cost cycles. Using the Floyd-Warshall algorithm 
    (\cite{weisstein2008floyd,wimmer2017floyd}) on the graph obtained by negating all edge cost in $G_{X,w}$ requires $O(n^3)$ time. Constructing the initial graph $G_{X,w}$ takes $O(mn)$ time, resulting in an overall complexity of $O(mn + n^3)$.   
\end{proof}

Deciding that an allocation $X$ is WEF-able is not enough. we also need to find a minimal subsidy vector, $\mathbf{p}$, that ensures WEF for $(X, \mathbf{p})$.
The following theorem is similar to \cite[Theorem 2]{HaSh19a}, which states the minimum subsidy required when given a WEF-able allocation.

\begin{theorem}\label{thm:minsubsidy}
    For any WEF-able allocation $X$, let $\mathbf{p}^{*}$ be a subsidy vector defined by 
$p^{*}_{i} := w_i \ell_i(X)$,
for all $i\in N$. 
Then
    \begin{enumerate}
    \item $(X,\mathbf{p}^{*})$ is WEF;
    \item Any other subsidy vector $\mathbf{p}$, such that $
    (X,\mathbf{p})$ is WEF, 
    satisfies $p^{*}_{i} \leq p_{i}$ for all $i\in N$;
    \item $\mathbf{p}^{*}$ can be computed in $O(nm+n^{3})$ time;
    \item There exists at least one agent $i\in N$ for whom $p^*_i=0$. \label{agent with 0 subsidy}
\end{enumerate}
\end{theorem}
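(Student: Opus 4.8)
The plan is to prove the four claims of \Cref{thm:minsubsidy} by leveraging the structure of the weighted envy-graph $G_{X,w}$ and the maximum-cost path quantities $\ell_i(X)$ that were already introduced in the proof of \Cref{th:ef}. Since $X$ is WEF-able, by \Cref{th:ef} the graph $G_{X,w}$ has no positive-cost cycle, so every $\ell_i(X)$ is well-defined (finite), and the subsidy vector $p^*_i = w_i\,\ell_i(X)$ is well-defined and non-negative (note $\ell_i(X)\ge 0$ since the trivial path of length zero starting at $i$ has cost $0$). The first claim, that $(X,\mathbf{p}^*)$ is WEF, is exactly the computation already carried out in the implication $3\Rightarrow 1$ of \Cref{th:ef}: for any $j\neq i$, the maximality of $\ell_i(X)$ over all paths starting at $i$ gives $\ell_i(X)\ge cost_X(i,j)+\ell_j(X)$, which rearranges to the WEF inequality. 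So the first part is essentially a restatement.

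For the second claim (minimality), first I would observe that WEF of $(X,\mathbf{p})$ is equivalent to the edge inequalities $\frac{p_i}{w_i}-\frac{p_j}{w_j}\ge cost_X(i,j)$ for all ordered pairs $i,j$. Summing these along any path $P=(i=i_0,i_1,\dots,i_k=j)$ telescopes the left side to $\frac{p_i}{w_i}-\frac{p_j}{w_j}$ and the right side to $cost_X(P)$, so $\frac{p_i}{w_i}-\frac{p_j}{w_j}\ge cost_X(P)$ for every $i$-to-$j$ path, hence $\frac{p_i}{w_i}-\frac{p_j}{w_j}\ge \ell_{i,j}(X)$. The key trick is to choose $j$ to be the endpoint of the maximum-cost path $P_i(X)$ starting at $i$; then $\frac{p_i}{w_i}\ge \ell_i(X)+\frac{p_j}{w_j}\ge \ell_i(X)$ using $p_j\ge 0$. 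Multiplying by $w_i$ yields $p_i\ge w_i\,\ell_i(X)=p^*_i$, giving pointwise minimality.

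The fourth claim is a quick corollary of the graph structure: there must exist some agent $i$ whose maximum-cost outgoing path has cost $0$, so that $p^*_i=0$. The clean way is to take the agent $i$ that is the \emph{endpoint} of a globally maximum-cost path in $G_{X,w}$ (a longest path exists because there are no positive-cost cycles, so maximum-cost walks are attained by simple paths). Any edge leaving this endpoint $i$ would extend the path to a strictly or weakly higher cost; more carefully, since the path ending at $i$ is maximum, no outgoing edge from $i$ can have positive cost that strictly increases beyond what is already captured, forcing $\ell_i(X)=0$. I would argue this via extremality: let $i^*$ maximize $\ell_{i}(X)$'s \emph{terminal} vertex value, or equivalently pick the last vertex of an overall longest path, and show appending any edge cannot create a higher-cost path, so $\ell_{i^*}(X)=0$, whence $p^*_{i^*}=0$.

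The third claim on running time reuses the polynomial check already given: constructing $G_{X,w}$ takes $O(nm)$ time, and the quantities $\ell_i(X)$ (longest paths in a directed graph with no positive cycle, equivalently shortest paths after negating costs) can be computed by the Floyd–Warshall algorithm on the negated graph in $O(n^3)$ time, for an overall $O(nm+n^3)$ bound. The main obstacle, and the part requiring the most care, is the fourth claim: one must argue rigorously that a maximum-cost path exists (ruling out unbounded-cost walks via the no-positive-cycle hypothesis so that it suffices to consider simple paths) and then correctly identify the extremal agent whose outgoing maximum path cost is exactly zero, being careful that $\ell_i(X)\ge 0$ always holds so that "zero" is indeed the minimum attainable value.
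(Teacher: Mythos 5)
Your proof of parts (1)--(3) follows the paper's proof essentially verbatim: part (1) is the $3 \Rightarrow 1$ step of \Cref{th:ef}, part (2) is the same telescoping of the edge inequalities $\frac{p_i}{w_i}-\frac{p_j}{w_j}\ge cost_X(i,j)$ along the maximum-cost path $P_i(X)$ followed by $p_{i_r}\ge 0$, and part (3) is the same Floyd--Warshall-on-the-negated-graph accounting.

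For part (4), however, you take a genuinely different route. The paper argues by contradiction: assuming $\ell_i(X)>0$ for every $i$, it repeatedly jumps from each agent to the endpoint of that agent's maximum-cost path; finiteness forces a repeated agent, and the concatenated positive-cost segments yield a positive-cost cycle, contradicting \Cref{th:ef}. You instead argue directly by extremality: the terminal vertex $i^*$ of a globally maximum-cost simple path must satisfy $\ell_{i^*}(X)=0$, since otherwise appending a positive-cost path from $i^*$ would beat the global maximum. Your version is arguably cleaner (it exhibits the zero-subsidy agent constructively rather than by contradiction), but as stated it has a small unfinished step: the appended path $Q$ starting at $i^*$ may revisit vertices of the maximal path $P^*$, so $P^*Q$ is only a \emph{walk}, not a simple path, and maximality of $P^*$ is over simple paths. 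The fix is standard and uses exactly the hypothesis you already invoke: decompose the walk $P^*Q$ into a simple path plus cycles; since \Cref{th:ef} guarantees every cycle has non-positive cost, the resulting simple path has cost at least $cost_X(P^*)+cost_X(Q)>cost_X(P^*)$, the desired contradiction. It is worth noting that the paper's own proof of part (4) glosses over the symmetric technicality (its concatenation of segments is a closed walk, from which a positive \emph{simple} cycle must still be extracted by the same decomposition), so with this one sentence added your argument is at the same level of rigor as, and somewhat more direct than, the paper's.
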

\begin{proof}
\begin{enumerate}
    \item The establishment of condition \ref{th:ef:condition3} implying condition \ref{th:ef:condition1} in  \Cref{th:ef} has already demonstrated that $(X,\mathbf{p}^{*})$ is WEF. 
    \item Let $\mathbf{p}$ be a subsidy vector, such that $(X,\mathbf{p})$ is WEF, and $i \in N$ be fixed. Consider the highest-cost path originating from $i$ in the graph $G_{X,w}$,  $P_i(X)=(i = i_{1}, ..., i_{r})$, with $cost_{X}(P_i(X)) = \ell_i(X) = \frac{p^{*}_{i}}{w_i}$.
   
    Due to the WEF nature of $(X,\mathbf{p})$, it follows that for each $k \in [r-1]$, the following inequality holds:
    \begin{align*}
        & \frac{v_{i_{k}}(X_{i_{k}})+ p_{i_{k}}}{w_{i_{k}}}  \geq \frac{v_{i_{k}}(X_{i_{k+1}}) + p_{i_{k+1}}}{w_{i_{k+1}}}
     \Rightarrow \frac{p_{i_{k}}}{w_{i_{k}}} - \frac{p_{i_{k+1}}}{w_{i_{k+1}}} \geq \\ &\frac{v_{i_{k}}(X_{i_{k+1}})}{w_{i_{k+1}}} - \frac{v_{i_{k}}(X_{i_{k}})}{w_{i_{k}}} = cost_X(i_{k}, i_{k+1}).
    \end{align*}
    Summing this inequality over all $k \in [r-1]$, the following relation is obtained:
    \begin{align*}
   &\frac{p_{i}}{w_{i}} - \frac{p_{i_{r}}}{w_{i_{r}}} = \frac{p_{i_{1}}}{w_{i_{1}}} - \frac{p_{i_{r}}}{w_{i_{r}}}\geq cost_X(P_i(X)) = \frac{p_{i}^{*}}{w_i}
   \Rightarrow \frac{p_{i}}{w_{i}} \geq \frac{p_{i}^{*}}{w_i} + \frac{p_{i_{r}}}{w_{i_{r}}} \geq \frac{p_{i}^{*}}{w_i}.
   \end{align*}
    The final transition is valid due to the non-negativity of subsidies and weights, that is, $\frac{p_{i_{r}}}{w_{i_{r}}} \geq 0$. 
    \item The computation of $\mathbf{p}^{*}$ can be executed as follows: Initially, the Floyd-Marshall algorithm 
(\cite{weisstein2008floyd}, \cite{wimmer2017floyd}) is applied to the graph derived by negating all edge costs in $G_{X,w}$ (This has a linear time solution since there are no cycles with positive costs in the graph). Hence, determining the longest path cost between any two agents, accomplished in $O(nm+n^{3})$ time. Subsequently, the longest path starting at each agent is identified in $O(n^{2})$ time.
    \item Assume, for the sake of contradiction, that $p^*_i > 0$ for every agent $i \in N$. This implies that $\ell_i(X) > 0$, since $w_i > 0$. 
Starting from an arbitrary agent $i_1$,
we trace the highest-cost path starting at $i_1$ and arrive at some agent $i_2$; then we trace the highest-cost path starting at $i_2$ and arrive at some agent $i_3$; and so on. 
Because the number of agents is finite, eventually we will arrive at an agent we already visited. We will then have a cycle with positive cost in $G_{X,w}$, contradicting \Cref{th:ef}.

Therefore, there must be at least one agent whose subsidy under $\mathbf{p}^{*}$ is $0$.
\end{enumerate}
\end{proof}

Now we can find the minimum subsidy needed in the worst-case scenario for agents with different entitlements, whether the allocation is given or can be chosen. 
\begin{theorem} \label{worst case allocation is given with weights} 
For every weight vector
and every given WEF-able allocation $X$, 
letting $p_i := p^*_i = w_i \ell_i(X)$,
the total subsidy $\displaystyle \sum_{i\in N}p_i$
is at most $\left(\frac{W}{\wmin} - 1\right)   m   V$. This bound is tight in the worst case.
\end{theorem}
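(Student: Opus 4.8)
The plan is to bound the optimal subsidy $p^*_i = w_i\ell_i(X)$ for each agent separately and then sum, using \Cref{th:ef} to control the paths that define $\ell_i(X)$. Since $X$ is WEF-able, $G_{X,w}$ has no positive-cost cycle, so a maximum-cost path $P_i(X) = (i = i_1, \dots, i_r)$ realizing $\ell_i(X)$ can be taken to be \emph{simple}: any repeated vertex would create a cycle of non-positive cost whose removal does not decrease the path cost. This is the fact I will use to ensure the agents $i_1, \dots, i_r$ are distinct, and hence that their bundles $X_{i_1}, \dots, X_{i_r}$ are pairwise disjoint subsets of $M$.

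Next I would expand $\ell_i(X) = \sum_{k=1}^{r-1}\big(\frac{v_{i_k}(X_{i_{k+1}})}{w_{i_{k+1}}} - \frac{v_{i_k}(X_{i_k})}{w_{i_k}}\big)$, drop the non-positive terms $-\frac{v_{i_k}(X_{i_k})}{w_{i_k}}$, and bound each remaining numerator by $v_{i_k}(X_{i_{k+1}}) \le |X_{i_{k+1}}|\,V$ (definition of $V$) and each denominator below by $\wmin$. Because the bundles $X_{i_2},\dots,X_{i_r}$ are disjoint and do not include $X_i$, their sizes sum to at most $m - |X_i|$, giving $\ell_i(X) \le \frac{V}{\wmin}(m - |X_i|)$, hence $p^*_i \le \frac{w_i V}{\wmin}(m - |X_i|)$. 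Summing over $i$ and using $\sum_i |X_i| = m$ yields $\sum_i p^*_i \le \frac{V}{\wmin}\big(mW - \sum_i w_i|X_i|\big)$.

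I expect the one genuinely delicate point to be extracting the exact $-1$ in the constant. A crude bound $\ell_i(X)\le \frac{mV}{\wmin}$ loses the $|X_i|$ term and gives only $\frac{W}{\wmin}mV$, which is off by $mV$. The fix is to retain the starting bundle: the bundle $X_i$ of the path's \emph{start} never appears among the positively-counted bundles $X_{i_2},\dots,X_{i_r}$, so the per-agent bound carries the factor $m-|X_i|$; then the elementary inequality $\sum_i w_i|X_i|\ge \wmin\sum_i|X_i| = \wmin m$ (valid since every $w_i\ge\wmin$) converts $mW - \sum_i w_i|X_i|$ into the desired $m(W-\wmin)$, i.e. $\big(\frac{W}{\wmin}-1\big)mV$.

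For tightness I would exhibit, for any weight vector, an instance meeting the bound. Take $m$ items and identical additive valuations with every item worth $V$ to everyone, and give \emph{all} items to the minimum-weight agent $1$ (weight $\wmin$), leaving the others empty-handed. A direct check gives $\ell_j(X) = \frac{mV}{\wmin}$ for every $j\ne 1$ and $\ell_1(X)=0$, so $\sum_i p^*_i = \frac{mV}{\wmin}(W-\wmin) = \big(\frac{W}{\wmin}-1\big)mV$. The subtlety here is that the allocation must be WEF-able: I must let agent $1$ value its own (full) bundle at the maximum $mV$, so that each length-two cycle $1\to j\to 1$ has cost exactly $\frac{mV}{\wmin} - \frac{mV}{\wmin}=0$ and no positive-cost cycle arises; the naive variant in which agent $1$ values everything at $0$ would create positive $2$-cycles and fail to be WEF-able.
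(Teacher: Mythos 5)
Your proposal is correct, and it differs from the paper's proof in both halves. For the upper bound, the paper performs the same expansion of path costs along a (simple) path, but stops at the cruder per-agent estimate $\ell_i(X)\le mV/\wmin$; it then extracts the ``$-1$'' structurally, by invoking part~(4) of \Cref{thm:minsubsidy} (in any WEF-able allocation some agent receives zero subsidy), so the total is at most $\sum_{i\ne i_0} w_i\, mV/\wmin \le (W-\wmin)mV/\wmin$. You instead keep the refined per-agent estimate $\ell_i(X)\le (m-|X_i|)V/\wmin$ --- exploiting that the starting bundle never occurs among the positively counted bundles of a simple path --- and close with the averaging inequality $\sum_i w_i|X_i|\ge \wmin\sum_i|X_i| = \wmin m$. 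Your route is self-contained: it never needs the zero-subsidy-agent lemma (whose proof in the paper is a separate cycle-tracing argument), at the price of slightly more bookkeeping; the paper's route gives a weaker per-agent bound but a shorter finish. For tightness, the paper perturbs all-or-nothing (non-additive) valuations, $v_1(M)=mV$ and $v_i(M)=mV-\varepsilon$ for $i\neq 1$, gives all items to the minimum-weight agent, and lets $\varepsilon\to 0$; your example with identical additive valuations (every item worth $V$ to everyone, all items given to agent $1$) is equally valid and arguably cleaner, since WEF-ability is immediate (every cycle through agent $1$ has cost exactly $0$, as you verify) and the bound $\left(\frac{W}{\wmin}-1\right)mV$ is attained exactly rather than only in the limit.
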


\begin{proof}
The proofs extend those of \cite{HaSh19a}. 
By \Cref{thm:minsubsidy}, to bound the subsidy required for $i$, we bound the highest cost of a path starting at $i$.
We prove that, for every WEF-able allocation $X$ and agent $i$, the highest cost of a path from $i$ in $G_{X,w}$ is at most $\frac{m   V}{\wmin}$.

For every path $P$ in $G_{X,w}$,
\begin{align*}
&cost_X(P) =
\sum_{(i,j)\in P} cost_X(i,j)  = \sum_{(i,j)\in P}
\frac{v_{i}(X_{j})}{w_{j}} - \frac{v_{i}(X_{i})}{w_{i}} 
 \leq \\ &\sum_{(i,j)\in P} \frac{v_{i}(X_{j})}{\wmin} \leq 
\sum_{(i,j)\in P}\frac{V\cdot |X_{j}|}{\wmin}
\leq 
 \frac{m V}{\wmin}.  
\end{align*} 
Therefore, the cost of every path is at most $\frac{m V}{\wmin}$, so agent $i$ needs a subsidy of at most $w_i \frac{m V}{\wmin}$. 
By part (4) of \Cref{thm:minsubsidy}, at least one agent has a subsidy of 0. This implies a total subsidy of at most $\frac{W - \wmin}{\wmin}  m   V 
   = \left(\frac{W}{\wmin} - 1\right) m V $.

To establish tightness,
   let us assume all agent has an all-or-nothing valuation for $M$, such that 
   such that $v_1(M) = mV$ and $v_i(M) = mV-\varepsilon$ for $i\neq 1$.
   Consider the allocation $X$, which assigns all items to a single agent $1$ (with the minimum entitlement $\wmin$). It's evident that: (1) $X$ is WEF-able, (2) satisfies non-zero social welfare property, (3) $X$ is MSW (and hence, Pareto efficient) and (4) its optimal subsidy vector $\mathbf{p}$ satisfies $p_1 = 0$ and $p_i = \frac{w_i}{\wmin} (mV-\varepsilon)$ for $i\neq 1$. Therefore, we require 
   $\frac{W - \wmin}{\wmin} (mV-\varepsilon) 
   = \left(\frac{W}{\wmin} - 1\right) (mV-\varepsilon)$ in total.
As $\varepsilon$ can be arbitrarily small, we get a lower bound of $\left(\frac{W}{\wmin} - 1\right) mV$.  
\end{proof}

In the unweighted case $W/\wmin=n$, so the upper bound on the subsidy becomes $(n-1)mV$.
This is the same upper bound proved by 
\cite{HaSh19a} for the unweighted case and additive valuations.

The following lemma is useful for showing weighed envy-freeability and the subsidy bounds. 
\begin{lemma}
\label{lem:max-value}
For an allocation $X$, if for all $i, j \in N$, 
$v_i(X_i)\geq v_j(X_i)$ holds, then $X$ is WEF-able, and the cost of any path from agent $i$ to $j$ is at most $\frac{v_j(X_j)}{w_j}-\frac{v_i(X_i)}{w_i}$.
\end{lemma}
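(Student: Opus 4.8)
The plan is to exploit the ``ownership-maximality'' hypothesis $v_i(X_i)\ge v_j(X_i)$ — which, rephrased per bundle, says that the owner $j$ of any bundle $X_j$ values it at least as much as anyone else, i.e.\ $v_i(X_j)\le v_j(X_j)$ for all $i,j\in N$ — to dominate each edge cost by a telescoping quantity. Concretely, for a single edge $(i,j)$ I would bound $cost_X(i,j)=\frac{v_i(X_j)}{w_j}-\frac{v_i(X_i)}{w_i}\le \frac{v_j(X_j)}{w_j}-\frac{v_i(X_i)}{w_i}$, where the inequality uses $v_i(X_j)\le v_j(X_j)$ together with $w_j>0$. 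This single per-edge estimate is the only place the hypothesis enters; everything afterward is bookkeeping.

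Next I would establish the path bound (the second claim) by summing this per-edge inequality along a path $P=(i_1,\dots,i_r)$ with $i_1=i$ and $i_r=j$. Edge $(i_k,i_{k+1})$ contributes at most $\frac{v_{i_{k+1}}(X_{i_{k+1}})}{w_{i_{k+1}}}-\frac{v_{i_k}(X_{i_k})}{w_{i_k}}$, so the sum telescopes: each intermediate vertex $i_k$ appears once with a $+$ sign and once with a $-$ sign. The total collapses to $\frac{v_j(X_j)}{w_j}-\frac{v_i(X_i)}{w_i}$, giving $cost_X(P)\le \frac{v_j(X_j)}{w_j}-\frac{v_i(X_i)}{w_i}$, exactly the claimed bound.

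For WEF-ability (the first claim) I would apply the same telescoping to an arbitrary cycle and invoke the characterization of \Cref{th:ef}. A cycle is a closed walk whose two endpoints coincide, so the telescoping sum cancels completely and evaluates to $0$; hence every cycle in $G_{X,w}$ has cost at most $0$, and in particular $G_{X,w}$ contains no positive-cost cycle. By the implication $(3)\Rightarrow(1)$ of \Cref{th:ef}, this proves $X$ is WEF-able.

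As for the main obstacle: there is essentially no hard step here — the argument reduces to a one-line edge inequality followed by a telescoping sum. The only points demanding care are applying the hypothesis in the correct direction (it is the \emph{owner} of $X_j$ whose value dominates $v_i(X_j)$, not an arbitrary agent), and observing that the cycle case is formally just the path bound specialized to coinciding endpoints, which is what licenses the appeal to \Cref{th:ef}.
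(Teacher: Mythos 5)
Your proposal is correct, and the second half (the path bound) is exactly the paper's argument: the per-edge estimate $cost_X(h,k)\le \frac{v_k(X_k)}{w_k}-\frac{v_h(X_h)}{w_h}$ followed by the telescoping sum. Where you differ is in how you establish WEF-ability. You verify condition (3) of \Cref{th:ef} --- no positive-cost cycle --- by observing that your telescoping bound, applied to a closed path, collapses to $0$; the paper instead verifies condition (2), weighted reassignment-stability, directly: from $v_i(X_i)\ge v_{\pi^{-1}(i)}(X_i)$ it gets $\sum_i \frac{v_i(X_i)}{w_i}\ge \sum_i \frac{v_{\pi^{-1}(i)}(X_i)}{w_i}$ and then re-indexes the right-hand sum by $\pi$ to obtain $\sum_i \frac{v_i(X_{\pi(i)})}{w_{\pi(i)}}$. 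Both routes lean on the same equivalence theorem, so the difference is one of decomposition rather than substance: your version is more economical, since a single per-edge inequality drives both claims and the cycle case is literally the path bound with coinciding endpoints; the paper's permutation argument, by contrast, needs no graph reasoning at all for claim one and connects the hypothesis directly to the stability definition, at the cost of running two separate arguments. Either write-up would be acceptable as a proof of the lemma.
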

\begin{proof}
\begin{enumerate}
\item 
It is sufficient to show that $X$ satisfies 
reassignment-stability. Indeed, for any permutation $\pi$, 
let $\pi^{-1}$ be the inverse permutation. Then the condition in the lemma implies
\begin{align*}
    \sum_{i\in N} \frac{v_i(X_i)}{w_i}
    &\ge
    \sum_{i\in N} \frac{v_{\pi^{-1}(i)}(X_i)}{w_i}.
\end{align*}
By re-ordering the summands in the right-hand side we get the reassignment-stability condition:
\begin{align*}
    \sum_{i\in N} \frac{v_i(X_i)}{w_i}
    &\ge
    \sum_{i\in N} \frac{v_i(X_{\pi(i)})}{w_{\pi(i)}}.
\end{align*}

\item For any path $P$ from $i$ to $j$ in the weighted envy-graph,
\begin{align*}
cost_X(P)&=\sum_{(h,k)\in P}cost_X(h,k)
=\sum_{(h,k)\in P} \frac{v_h(X_k)}{w_k}-\frac{v_h(X_h)}{w_h}
\le \\ & \sum_{(h,k)\in P}\frac{v_k(X_k)}{w_k}-\frac{v_h(X_h)}{w_h}.
\end{align*}
The latter expression is a telescopic sum that simplifies to the difference of two  elements:
$$\frac{v_j(X_j)}{w_j}-\frac{v_i(X_i)}{w_i}.$$
\end{enumerate}
\end{proof}
We now prove the main positive result of this section.
\begin{theorem}\label{prop:subsidy-ub-general}
For every instance with monotone valuations,
there exists a WEF-able, non-zero-social-welfare allocation 
with total subsidy at most $\left(\frac{W}{\wmin} - 1\right) mV$. 
This bound is tight in the worst case.
\end{theorem}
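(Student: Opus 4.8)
The plan is to produce a single explicit allocation that is simultaneously WEF-able and non-zero-social-welfare, and then read off the subsidy bound from the already-established \Cref{worst case allocation is given with weights}. Concretely, I would let $i^*$ be an agent maximizing $v_i(M)$ and define the allocation $X$ that concentrates everything on $i^*$: set $X_{i^*}=M$ and $X_i=\emptyset$ for all $i\neq i^*$. To show WEF-ability I would verify the hypothesis of \Cref{lem:max-value}, namely $v_i(X_i)\ge v_j(X_i)$ for all $i,j\in N$. For the single nonempty bundle this is exactly the defining property of $i^*$, since $v_{i^*}(M)\ge v_j(M)$ for every $j$; for every empty bundle both sides equal $0$ by normalization. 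Thus \Cref{lem:max-value} applies and $X$ is WEF-able.

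Next I would check the non-zero-social-welfare property and the subsidy bound. Here $SW^N(X)=v_{i^*}(M)=\max_{i\in N}v_i(M)$. If this equals $0$, then $v_i(M)=0$ for all $i$, and monotonicity forces $v_i(A)=0$ for every $A\subseteq M$ and every $i$, so every allocation has zero social welfare; hence $X$ satisfies non-zero social welfare. For the bound, since $X$ is WEF-able, I would simply invoke \Cref{worst case allocation is given with weights}: setting $p_i:=w_i\,\ell_i(X)$ yields a WEF outcome whose total subsidy is at most $\left(\frac{W}{\wmin}-1\right)mV$.

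For tightness I would reuse the all-or-nothing instance from the tightness proof of \Cref{worst case allocation is given with weights}: every agent values $M$ at $v_1(M)=mV$ and $v_i(M)=mV-\varepsilon$ for $i\neq 1$ (with $\wmin=w_1$), and $0$ on any proper subset. The key additional observation is that any non-zero-social-welfare allocation must hand all of $M$ to a single agent, and a short reassignment-stability check (via the cycle/permutation criterion of \Cref{th:ef}) shows that giving $M$ to any agent other than the strict top valuer $1$ violates reassignment-stability, hence is not WEF-able. Therefore the only WEF-able non-zero-social-welfare allocation concentrates $M$ on agent $1$, whose optimal subsidy vector requires total subsidy $\left(\frac{W}{\wmin}-1\right)(mV-\varepsilon)$; letting $\varepsilon\to 0$ gives the matching lower bound.

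The conceptual obstacle is that the natural candidate---a welfare-maximizing allocation---cannot be used, since \Cref{ex:incompatibility-NW-WEFable} shows MSW allocations need not be WEF-able in the weighted setting. The crux is therefore to find an allocation that is WEF-able while still carrying enough efficiency; the ``concentrate on the top $M$-valuer'' allocation achieves both at once, and its WEF-ability is exactly what \Cref{lem:max-value} is designed to certify. The only genuinely delicate step is the tightness direction, where I must argue that the forced single-recipient alternatives fail reassignment-stability so that no cheaper WEF-able non-zero-social-welfare allocation exists.
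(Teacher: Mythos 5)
Your proposal is correct and follows essentially the same route as the paper: the same allocation concentrating all of $M$ on an agent maximizing $v_i(M)$, WEF-ability certified by \Cref{lem:max-value}, and tightness via the same all-or-nothing instance forcing the bundle onto the lowest-weight top valuer. The only cosmetic difference is that you obtain the subsidy bound by citing \Cref{worst case allocation is given with weights} rather than recomputing the per-agent subsidies directly, which is equally valid.
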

\begin{proof}
Let $i^*$ be an an agent for whom $v_{i^*}(M)$ is largest. Let $X$ be the allocation in which all items are allocated to $i^*$.
It is clear that this allocation satisfies non-zero social welfare property.
Also, $v_i(X_i) \geq v_j(X_i)$ holds for any $i, j \in N$. 
Thus, by \Cref{lem:max-value}, the allocation is WEF-able.
To eliminate envy, agent $i^*$ should receive no subsidy, while any other agent $i\neq i^*\in N$ should receive a subsidy of $\frac{w_i}{w_{i^*}}v_j(M) \leq \frac{w_i}{\wmin}v_j(M)$. 
For any agent $j$, $v_j(M) \leq m V$  by definition of $V$.
Therefore, the total subsidy is at most $\left(\frac{W}{\wmin} - 1\right) mV$.

To prove tightness, we use the same example as in the tightness proof of \Cref{worst case allocation is given with weights}.
Every non-zero-social-welfare allocation must allocate all items to a single agent.
Therefore, the situation is identical to allocating a single item. 
The \textbf{only} WEF-able allocation is the one giving all items to agent assigning the highest value to $M$, who is agent $1$ (with the lowest entitlement). Therefore, the subsidy must be $\left(\frac{W}{\wmin} - 1\right) (mV-\epsilon)$ for any $\epsilon>0$.
\end{proof}

Importantly, the maximum worst-case subsidy in the weighted setting depends on the proportion of $\wmin$ in the total weight $W$, which can be much larger than the number of agents $n$ used in the unweighted setting bounds.


\Cref{prop:subsidy-ub-general} guarantees only a very weak efficiency notion: non-zero-welfare.
We do not know if WEF is compatible with Pareto-efficiency for general monotone valuations. However, for the large sub-class of 
\emph{superadditive valuations}, we prove in the following section that every allocation maximizing the sum of utilities (\emph{maximizing social welfare} or MSW) is WEF-able. In particular, a Pareto-efficient WEF allocation always exists. 
Moreover, we prove that such an allocation can always be attained by a \emph{truthful mechanism} ---  which induces agents to reveal their true utility functions. 

\section{Superadditive and Supermodular Valuations}
\label{sec:supermdular_val}
A monotone valuation $v_i$ is called 
\begin{itemize}
    \item \emph{Superadditive} ---
if for any $X, Y\subseteq M$ with $X \cap Y = \emptyset$, 
$v_i(X) + v_i(Y) \leq v_i(X \cup Y)$.
    \item \emph{Supermodular} ---
if for any $X, Y\subseteq M$ holds
$v_i(X) + v_i(Y) \leq v_i(X \cup Y) + v_i(X \cap Y)$.



\end{itemize}
Every additive valuation is supermodular, and every supermodular valuation is superadditive.

The lower bound for monotone valuations (in the proof of \Cref{prop:subsidy-ub-general}) is attained by supermodular valuations, so it applies to supermodular and superadditive valuations too.

In this section we prove that, for superadditive valuations (hence also for supermodular and additive valuations), the same upper bound of \Cref{prop:subsidy-ub-general} can be attained by an allocation that maximizes the social welfare. 

\begin{theorem}
When valuations are superadditive, any MSW allocation is WEF-able.
\end{theorem}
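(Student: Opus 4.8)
The plan is to reduce everything to \Cref{lem:max-value}: it suffices to show that any MSW allocation $X$ satisfies $v_i(X_i)\ge v_j(X_i)$ for all $i,j\in N$, i.e.\ every bundle is valued (weakly) most by its own owner. Once this structural fact is established, \Cref{lem:max-value} immediately yields that $X$ is WEF-able, and notably the weights play no further role in the argument.

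To prove the key inequality I would argue by contradiction using a single local exchange. Suppose some bundle is coveted by a non-owner, i.e.\ there are agents $i,j$ with $v_j(X_i) > v_i(X_i)$. Form a new allocation $X'$ that agrees with $X$ everywhere except that agent $i$ hands its whole bundle to $j$: set $X'_i = \emptyset$, $X'_j = X_i \cup X_j$, and $X'_k = X_k$ for all $k \notin \{i,j\}$. This is a valid complete allocation, since items are only moved between $i$ and $j$ and no item is lost.

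The heart of the argument is that superadditivity makes this merge strictly welfare-improving. Because $X_i \cap X_j = \emptyset$, superadditivity gives $v_j(X_i \cup X_j) \ge v_j(X_i) + v_j(X_j)$; combining this with normalization $v_i(\emptyset)=0$ and the standing assumption $v_j(X_i) > v_i(X_i)$, the joint contribution of $i$ and $j$ strictly increases:
\begin{align*}
v_i(X'_i) + v_j(X'_j) = v_j(X_i\cup X_j) \ge v_j(X_i) + v_j(X_j) > v_i(X_i) + v_j(X_j).
\end{align*}
All other agents are untouched, so $SW^N(X') > SW^N(X)$, contradicting the maximality of $X$. Hence $v_i(X_i)\ge v_j(X_i)$ for every $i,j\in N$, and \Cref{lem:max-value} concludes that $X$ is WEF-able.

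The one genuinely conceptual step — and the unique place where superadditivity (rather than mere monotonicity) is needed — is choosing a \emph{merge} instead of a \emph{swap}: exchanging $X_i$ and $X_j$ need not improve welfare, whereas consolidating both bundles on the agent who covets $X_i$ while emptying the original owner does, precisely because superadditivity rewards pooling disjoint bundles. I expect the only routine checks are that $X'$ remains complete (immediate, as items stay within $\{i,j\}$) and that superadditivity is applied to disjoint sets, which holds by construction, with normalization used only to discard the now-empty bundle of $i$.
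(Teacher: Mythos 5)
Your proof is correct and follows exactly the paper's argument: assume some agent $j$ covets $X_i$, merge $X_i$ into $X_j$ while emptying $i$'s bundle, use superadditivity of $v_j$ on the disjoint bundles to show social welfare strictly increases, contradict MSW, and conclude via \Cref{lem:max-value}. The only difference is presentational — your explicit remark that a merge (not a swap) is the step requiring superadditivity is a nice observation, but the underlying proof is the same.
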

\begin{proof}
We show that for an MSW allocation $X$, $v_i(X_i) \geq v_j(X_i)$ holds for any $i, j \in N$. For the sake of contradiction, assume $v_i(X_i) < v_j(X_i)$ holds. In this case, we can construct another allocation $X'$, 
where for all $k \neq i, j$, $X'_k=X_k$, 
$X'_i =\emptyset$, $X'_j = X_j \cup X_i$. In other words, we reassign $X_i$ from $i$ to $j$. By the definition of superadditivity, we have $v_j(X_j') = v_j(X_i \cup X_j) \geq v_j(X_i) + v_j(X_j)$. Therefore, the total social welfare under allocation $X'$ is
\begin{align*}
    & SW^N(X') = v_j(X_j') + \sum_{k\neq i,j} v_k(X_k) >   v_i(X_i) + v_j(X_j) + \sum_{k\neq i,j} v_k(X_k)= SW^N(X).
\end{align*}
 
However, this contradicts the fact that $X$ is a MSW allocation. 
From \Cref{lem:max-value}, it follows that $X$ is WEF-able. 
\Cref{worst case allocation is given with weights} implies that a subsidy of $\left(\frac{W}{\wmin} - 1\right)   m   V$ is sufficient.
\end{proof}
\Cref{ex:incompatibility-NW-WEFable} shows that the theorem does not hold without the superadditivity assumption.

A \emph{mechanism} is a function from the profile of 
declared agents' valuation functions to an outcome. 
We say a mechanism is truthful if no agent can obtain a 
strictly better outcome by misreporting its valuation function.

\begin{definition}[VCG mechanism (\cite{vickrey:1961, clarke,groves:econometrica:1973})]
The VCG mechanism chooses an allocation $X$ which 
maximizes $SW^N(X)$ among all allocations of $M$ to $N$.

Agent $i$, who is allocated $X_i$, pays a price equal to $SW^{N \setminus \{i\}}(X') 
- SW^{N \setminus \{i\}}(X)$,
where $X'$ maximizes $SW^{ N\setminus\{i\}}$ among all allocations of $M$ to $N\setminus\{i\}$.
\end{definition}
\begin{theorem}
\label{thm:supermodular_VCG}
When valuations are superadditive, 
the VCG mechanism with a large up-front subsidy 
(i.e.,we first distribute $C\cdot w_i$ to agent $i$, and if agent $i$ obtains 
a bundle, it pays the VCG payment from $C\cdot w_i$) is WEF, 
Pareto efficient, and truthful. 
\end{theorem}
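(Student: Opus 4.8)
The plan is to verify each of the three properties in turn, leveraging \Cref{thm:supermodular_VCG}'s hypothesis that valuations are superadditive together with the theorem proved just above stating that any MSW allocation is WEF-able. First I would establish \textbf{truthfulness} and \textbf{Pareto efficiency}, which are essentially inherited from the standard VCG machinery. The VCG mechanism selects an allocation $X$ maximizing $SW^N(X)$, so Pareto efficiency is immediate (any MSW allocation is Pareto efficient, as noted in \Cref{sec:model}). Truthfulness follows because adding a fixed up-front grant $C\cdot w_i$ to agent $i$ is independent of agent $i$'s reported valuation, so it does not affect incentives: the agent still maximizes $v_i(X_i) - (\text{VCG payment})$ by reporting truthfully, exactly as in classical VCG. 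I would state this and cite the standard VCG incentive argument rather than re-deriving it.

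The substantive part is \textbf{WEF}. The key observation is that the VCG payment of agent $i$ is $p_i^{\mathrm{VCG}} = SW^{N\setminus\{i\}}(X') - SW^{N\setminus\{i\}}(X)$, and the net utility of agent $i$ after receiving the grant and paying is $v_i(X_i) + C\cdot w_i - p_i^{\mathrm{VCG}}$. I want to reinterpret this outcome as the allocation $X$ together with a nonnegative subsidy vector $\mathbf{p}$ where $p_i = C\cdot w_i - p_i^{\mathrm{VCG}}$, and then show that for $C$ large enough this $\mathbf{p}$ (i) is nonnegative and (ii) makes $(X,\mathbf{p})$ WEF. Since $X$ is an MSW allocation and valuations are superadditive, the theorem immediately preceding guarantees $X$ is WEF-able, so by \Cref{thm:minsubsidy} there is a minimal WEF subsidy vector $\mathbf{p}^*$ with $p_i^* = w_i\ell_i(X)$. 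The heart of the argument is therefore to show that the VCG-induced subsidy $p_i = C\cdot w_i - p_i^{\mathrm{VCG}}$ is \emph{of the form} $w_i$ times a common additive shift of a WEF subsidy, i.e.\ that $p_i/w_i = C - p_i^{\mathrm{VCG}}/w_i$ differs from a valid WEF potential by a term that preserves the WEF inequalities.

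The cleanest route I would take is to show directly that the WEF inequality $\frac{v_i(X_i)+p_i}{w_i} \ge \frac{v_i(X_j)+p_j}{w_j}$ holds for the VCG-induced $\mathbf{p}$. Plugging in $p_k = C\cdot w_k - p_k^{\mathrm{VCG}}$, the constant $C$ cancels from both sides (it appears as $C$ on each side after dividing by the respective weight), reducing WEF to an inequality purely among the $v$'s and the VCG payments $p_k^{\mathrm{VCG}}$. I would then relate this to the MSW/superadditivity structure: using that $v_i(X_i)\ge v_i(X_j)$ for all $i,j$ (established in the preceding theorem's proof from superadditivity) together with the definition of the VCG payment as the externality the agent imposes, one shows the required inequality. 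The main obstacle I anticipate is precisely this last algebraic reduction — carefully expressing $p_k^{\mathrm{VCG}}$ in terms of welfare quantities and verifying the WEF inequality survives division by the unequal weights $w_i,w_j$; the up-front grant must be scaled by $w_i$ (not a flat constant) exactly so that the $C$ terms cancel after the weighted division, which is the reason the grant is $C\cdot w_i$ rather than $C$.

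Finally I would check \textbf{nonnegativity of the net utilities and subsidies}: choosing $C$ large enough, e.g.\ $C \ge mV/\wmin$ or any bound dominating the maximum possible VCG payment divided by $w_i$, guarantees $p_i = C\cdot w_i - p_i^{\mathrm{VCG}} \ge 0$ so no agent ends up paying more than its grant, which is what makes the outcome a legitimate subsidy outcome in our model. I would note that such a finite $C$ exists because VCG payments are bounded by the maximum achievable social welfare, which is at most $mV$, completing the argument that the mechanism is simultaneously WEF, Pareto efficient, and truthful.
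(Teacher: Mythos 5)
Your overall skeleton matches the paper's: truthfulness and Pareto efficiency are inherited from standard VCG, the outcome is reinterpreted as the allocation $X$ together with subsidies $p_k = C\cdot w_k - q_k$ (writing $q_k$ for the VCG payment), the constant $C$ cancels after dividing by the weights, the problem reduces to showing $\frac{v_i(X_i)-q_i}{w_i} \ge \frac{v_i(X_j)-q_j}{w_j}$ for all $i,j$, and taking $C \ge mV/\wmin$ keeps all subsidies nonnegative. The gap is in how you propose to prove this reduced inequality. The fact you cite, $v_i(X_i)\ge v_i(X_j)$, is a misquote of the preceding theorem, which proves $v_i(X_i)\ge v_j(X_i)$ (an agent values its \emph{own} bundle at least as much as \emph{other agents} value that same bundle); your version asserts the absence of envy in values, which is false for MSW allocations even with additive valuations: if $v_1(o_1)=v_1(o_2)=3$ and $v_2(o_1)=v_2(o_2)=4$, the MSW allocation gives both items to agent $2$, and then $v_1(X_1)=0<6=v_1(X_2)$.

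Moreover, no fact about the values alone can close the argument, because the payments enter the reduced inequality essentially. Since the VCG allocation and payments do not depend on the weights, a proof valid for every weight vector must establish two sign conditions separately (let $w_i/w_j$ tend to $0$ and to $\infty$): $v_i(X_i)-q_i \ge 0$, which is individual rationality of VCG, and $v_i(X_j)-q_j \le 0$ for $j \ne i$, i.e.\ $q_j \ge v_i(X_j)$. The latter is precisely the lemma that forms the substance of the paper's proof, and it is where superadditivity is actually used: if $q_j = SW^{N\setminus\{j\}}(X') - SW^{N\setminus\{j\}}(X) < v_i(X_j)$, then keeping the allocation $X$ on $M\setminus X_j$ for the agents in $N\setminus\{j\}$ and giving $X_j$ additionally to agent $i$ yields, by superadditivity, welfare at least $v_i(X_j)+SW^{N\setminus\{j\}}(X) > SW^{N\setminus\{j\}}(X')$, contradicting the optimality of $X'$ among allocations of $M$ to $N\setminus\{j\}$. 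Your appeal to ``the definition of the VCG payment as the externality the agent imposes'' stops short of this derivation, so as written the WEF step does not go through; once this lemma and individual rationality are inserted, your reduction completes exactly as in the paper.
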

\begin{proof} 
Truthfulness and Pareto efficiency are clear. 
We show that it is WEF. 
We first show that in the VCG, 
for each agent $i$ who obtains $X_i$ and pays $q_i$, 
$q_i \geq v_j(X_i)$ holds for any $j\neq i$. 
For the sake of contradiction, 
assume $q_i=  
SW^{N \setminus \{i\}}(X') 
- SW^{N \setminus \{i\}}(X)
< v_j(X_i)$ holds. 
Then, 
$SW^{N \setminus \{i\}}(X')  
< v_j(X_i)  + SW^{N \setminus \{i\}}(X)$ holds. 
However, if we consider an allocation of $M$ to agents except for $i$, 
we can first allocate $M\setminus X_i$ optimally among $N\setminus \{i\}$, 
then allocate $X_i$ additionally to agent $j$. Then, 
the total valuation of this allocation is at least 
$v_j(X_i)  + SW^{N \setminus \{i\}}(X)$ due to 
superadditivity. This contradicts the fact that 
$SW^{N \setminus \{i\}}(X')$ is the total valuation when allocating 
$M$ optimally among agents except for $i$. 
Also, it is known that VCG is individually rational, which means that $v_i(X_i) \geq q_i$ holds for all $i \in N$.

Combining both inequalities leads to

\begin{align*}
&
    \frac{v_j(X_j) + C\cdot w_j - q_j}{w_j}
       \geq
    \frac{C\cdot w_j}{w_j}
    = 
    C
    =
    \frac{C\cdot w_i}{w_i}
       \geq \frac{v_j(X_i) + C\cdot w_i - q_i}{w_i},
\end{align*}
which is the WEF condition.
\end{proof}
A similar mechanism is presented by \cite{Goko2024} for the unweighted case. 

To guarantee that all subsidies are non-negative, $C$ should be an upper bound on the payment of each agent. As the payment of each agent is at most the social welfare in an allocation, which is at most $mV$, we can simply take $C := m V / \wmin$.

\section{Additive Valuation}
\label{sec:general_additive_val}
The valuation function of an agent $i$ is called
\emph{Additive} if for each $i\in N$ and $A,B\subseteq M$ such that $A\cap B=\emptyset$, $v_i(A\cup B)= v_i(A)+v_i(B)$.
Without loss of generality, we assume that each item is valued positively by at least one agent; items that are valued at $0$ by all agents can be allocated arbitrarily without affecting the envy.
\cite{HaSh19a} prove that, with additive valuations, the minimum subsidy required in the worst case is at least $(n-1)   V$, even for binary valuations, when the allocation can be chosen.
We generalize their results as follows.
\begin{lemma} \label{lemm: allocation item to agent with highest value}
Suppose there are $n$ agents,
and only one item $o$ which the agents value positively. Then an allocation is WEF-able iff $o$ is given to an agent $i$ with the highest $v_i(o)$.
\end{lemma}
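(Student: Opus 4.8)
The plan is to use the characterization from \Cref{th:ef}, namely that an allocation $X$ is WEF-able if and only if its weighted envy-graph $G_{X,w}$ has no positive-cost cycle. Since only one item $o$ is valued positively, any allocation gives $o$ to some agent, and every other agent receives a bundle of value $0$ to everyone (all other items contribute nothing). The whole question therefore reduces to understanding the structure of $G_{X,w}$ when exactly one agent, say $h$, holds a bundle worth $v_i(o)$ to each agent $i$, and all other bundles are worthless.

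First I would compute the edge costs. If $h$ is the agent holding $o$, then for any agent $i\neq h$ the edge $(i,h)$ has cost $cost_X(i,h)=\frac{v_i(o)}{w_h}-0=\frac{v_i(o)}{w_h}\ge 0$, the edge $(h,i)$ has cost $cost_X(h,i)=0-\frac{v_h(o)}{w_h}=-\frac{v_h(o)}{w_h}\le 0$, and every edge $(i,j)$ between two agents $i,j\neq h$ has cost $0$, since both bundles are worthless. With this in hand I would analyze cycles: a cycle visiting only agents other than $h$ has cost $0$, and the only way to obtain a nonzero-cost cycle is to pass through $h$. A cycle through $h$ uses exactly one incoming edge $(i,h)$ and one outgoing edge $(h,j)$, contributing $\frac{v_i(o)}{w_h}-\frac{v_h(o)}{w_h}=\frac{v_i(o)-v_h(o)}{w_h}$, with all remaining edges on the cycle having cost $0$.

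The key step is then the equivalence. A positive-cost cycle exists if and only if there is some agent $i$ with $v_i(o)>v_h(o)$, i.e.\ if and only if $h$ does \emph{not} have the highest value for $o$. Hence $G_{X,w}$ has no positive-cost cycle precisely when $v_h(o)\ge v_i(o)$ for all $i$, which by \Cref{th:ef} is exactly the condition for WEF-ability. This gives both directions: if $o$ goes to an agent of highest value, no positive cycle exists and $X$ is WEF-able; conversely if $o$ goes to an agent $h$ who is not of highest value, the two-cycle $h\to i\to h$ for a higher-valuing agent $i$ has positive cost, so $X$ is not WEF-able.

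I expect the main obstacle to be purely bookkeeping rather than conceptual: one must argue cleanly that inserting the zero-cost edges among the worthless agents cannot create additional positive cost, so that the sign of any cycle's cost is governed solely by whether it passes through $h$ and, if so, by the sign of $v_i(o)-v_h(o)$. Care is also needed to handle ties (several agents sharing the highest value), but these are absorbed by the non-strict inequality $v_h(o)\ge v_i(o)$, so any highest-value agent works.
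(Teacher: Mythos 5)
Your proposal is correct and follows essentially the same route as the paper: both invoke condition (3) of \Cref{th:ef}, compute the edge costs of the weighted envy-graph (incoming edges to the holder cost $v_i(o)/w_h$, outgoing edges cost $-v_h(o)/w_h$, all other edges cost $0$), and conclude that a positive-cost cycle exists iff some agent values $o$ strictly more than the holder. The only cosmetic difference is that the paper restricts attention to length-2 cycles through the holder, whereas you track arbitrary cycles and note the zero-cost edges are harmless; both arguments are sound.
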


\begin{proof}
    By \Cref{th:ef}, it is sufficient to check the cycles in the weighted envy-graph.
If $o$ is given to $i$, then $i$'s envy is $- \frac{v_i(o)}{w_i}$,
and the envy of every other agent $j$ in $i$ is $\frac{v_j(o)}{w_i}$.
All other envies are 0.
The only potential positive-weight cycles are cycles of length 2 involving agent $i$.
The weight of such a cycle is positive iff $\frac{v_j(o)}{w_i} - \frac{v_i(o)}{w_i} >0$, which holds iff $v_j(o) > v_i(o)$.
Therefore, there are no positive-weight cycles iff $v_i(o)$ is maximum.
\end{proof}

\begin{theorem}
\label{worst case allocation can be chosen with weights}
For every weights $\mathbf{w}$ and $n\geq 2$,
no algorithm can guarantee a total subsidy smaller than $\left(\frac{W}{\wmin}-1\right)  V$.
\end{theorem}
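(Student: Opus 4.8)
The plan is to prove this lower bound with a single-item instance, which is trivially additive, so that the bound becomes independent of $m$ as the statement demands. For each weight vector $\ww$ (with the assumed ordering $\wmin=w_1\le\cdots\le w_n$) and each $\varepsilon>0$ I would construct an instance on which the \emph{only} WEF-able allocation is forced to hand the item to the lightest agent, thereby incurring a minimum subsidy arbitrarily close to $\left(\frac{W}{\wmin}-1\right)V$. The overall shape of the argument mirrors the tightness part of \Cref{worst case allocation is given with weights}, but the new ingredient is that here the allocation is \emph{forced} (not merely chosen adversarially), which is what upgrades a ``given-allocation'' bound into a ``no-algorithm-can-beat-it'' bound.

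First I would take one item $o$ and set $v_1(o)=V$ while $v_j(o)=V-\varepsilon$ for every $j\neq 1$, so that agent $1$ (the lightest, $w_1=\wmin$) is the \emph{unique} agent of maximum value for $o$; note that the instance's maximum single-item value is then exactly $V$, consistent with the parameter in the bound. By \Cref{lemm: allocation item to agent with highest value}, an allocation of $o$ is WEF-able if and only if $o$ is given to an agent with the highest value for it, and since agent $1$ is the unique maximizer, \emph{every} WEF-able allocation must allocate $o$ to agent $1$. Hence any algorithm that is required to output a WEF-able allocation is forced, on this instance, to produce exactly the allocation $X$ giving $o$ to agent $1$ and nothing to the rest.

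Next I would apply \Cref{thm:minsubsidy} to compute the minimum subsidy of $X$ through maximum-cost paths in $G_{X,w}$. With $o$ at agent $1$, the only positive-cost edges are $(j,1)$ for $j\neq1$, of cost $\frac{v_j(o)}{\wmin}=\frac{V-\varepsilon}{\wmin}$; every edge leaving agent $1$ has negative cost and every edge among $\{2,\dots,n\}$ has cost $0$. Thus $\ell_1(X)=0$ and $\ell_j(X)=\frac{V-\varepsilon}{\wmin}$ for $j\neq1$, giving $p^*_1=0$ and $p^*_j=w_j\ell_j(X)=\frac{w_j}{\wmin}(V-\varepsilon)$. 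Summing over $j\neq1$ yields a total subsidy of $\frac{W-\wmin}{\wmin}(V-\varepsilon)=\left(\frac{W}{\wmin}-1\right)(V-\varepsilon)$, and by part~(2) of \Cref{thm:minsubsidy} this is the minimum over all subsidy vectors that make $X$ WEF, so no algorithm can do better on this instance.

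Finally, letting $\varepsilon\to 0$ the forced subsidy approaches $\left(\frac{W}{\wmin}-1\right)V$ from below: for every target $B<\left(\frac{W}{\wmin}-1\right)V$ some member of this instance family requires strictly more than $B$, proving the claim. I expect the only genuine subtlety to be the tie-breaking, and it is exactly where the $\varepsilon$-perturbation earns its keep: if all agents valued $o$ at $V$, an algorithm could legitimately route $o$ to the \emph{heaviest} agent and pay only $\left(\frac{W}{\wmax}-1\right)V$, so making the lightest agent the unique maximizer is what pins the allocation to agent $1$. Confirming that the perturbation leaves $V$ as the instance's maximum item value (it does, since $v_1(o)=V$) and that the path-cost computation above is unchanged is routine.
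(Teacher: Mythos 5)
Your proposal is correct and is essentially the paper's own proof: the same single-item instance with $v_1(o)=V$ and $v_j(o)=V-\varepsilon$ for $j\neq 1$, the same appeal to \Cref{lemm: allocation item to agent with highest value} to force the item to agent $1$, and the same computation via \Cref{thm:minsubsidy} giving total subsidy $\left(\frac{W}{\wmin}-1\right)(V-\varepsilon)$, with $\varepsilon\to 0$. Your additional remarks on the edge-cost structure of $G_{X,w}$ and on why the $\varepsilon$-perturbation is needed to break ties are accurate elaborations of what the paper leaves implicit.
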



\begin{proof}



Consider an instance with $n$ agents and one item $o$ with valuations $v_1(o) = V$ and $v_i(o) = V-\epsilon$ for $i\ge 2$.

By \Cref{lemm: allocation item to agent with highest value},
the only WEF-able allocation is to give $o$ to agent $1$.
In this case, the minimum subsidy is $p_1 = 0$ and $p_i = w_i   \frac{v_i(o)}{\wmin} = w_i \frac{\left(V-\epsilon\right)}{\wmin}$.
Summing all subsidies leads to $$\sum_{i\geq 2} w_i  \frac{\left(V-\epsilon\right)}{\wmin} = \frac{W-\wmin}{\wmin} \left(V-\epsilon\right) = \left(\frac{W}{\wmin} - 1\right) \left(V-\epsilon\right).$$
As $\epsilon$ can be arbitrarily small, we get a lower bound of $\left(\frac{W}{\wmin} - 1\right) V$.  
\end{proof}

In \Cref{example_intro}, we showed that the iterated-maximum-matching algorithm (\cite{Brustle2020}) might produce an allocation that is not WEF-able.

We now introduce a new algorithm, \Cref{alg:general-additive}, which extends the iterated-maximum-matching approach to the weighted setting, assuming all weights are integers. The algorithm finds a one-to-many maximum matching between agents and items, ensuring that each agent $i\in N$ receives exactly $w_i$ items.
If the number of items remaining in a round is less than $W$, we add dummy items (valued at $0$ by all agents) so that the total number of items becomes $W$.
In \Cref{example_intro}, we add 9 dummy items, 
and perform a one-to-many maximum-value matching between agent and items, resulting in a WEF-able allocation: $X_1 = \emptyset, X_2 = \{o_1, o_2\}$.

The algorithm runs in $\lceil m/W \rceil$ rounds.
In each round $t$, the algorithm computes a one-to-many maximum-value matching $\{ X_i^t \}_{i\in N}$ between all agents and unallocated items $O_t$, where each agent $i\in N$ receives exactly $w_i$ items. 

To achieve this, we reduce the problem to the \emph{minimum-cost network flow problem} (\cite{goldberg1989network}) by constructing a flow network and computing the maximum integral flow of minimum cost. The flow network is defined as follows:
\begin{itemize}
    \item \textbf{Layer 1 (Source Node).} a single source node $s$.
    \item \textbf{Layer 2 (Agents).} a node for each agent $i\in N$, with an arc from $s$ to $i$, having cost $0$ and capacity $w_i$.
    \item \textbf{Layer 3 (Unallocated Items).} a node for each unallocated item $o \in O_t$, with an arc from each agent $i\in N$ to item $o$, having cost $-v_i(o)$ and capacity $1$.
    \item \textbf{Layer 4 (Sink Node).} a single sink node $t$, with an arc from each item $o\in O_t$ to $t$, having 0 cost and capacity $1$.
\end{itemize}
Any integral maximum flow in this network corresponds to a valid matching where each agent $i\in N$ receives exactly $w_i$ items from $O_t$, and each item is assigned to exactly one agent, the result is a minimum-cost one-to-many matching based on the costs in the constructed network. Because we negate the original costs in our construction, the obtained matching $\{X_i^t\}_{i\in N}$ maximizes the total value with respect to the original valuations.
After at most $\lceil m/W \rceil$ valuations, all items are allocated.

\begin{algorithm}[h]
\SetAlgoNoLine
\caption{
\label{alg:general-additive}
Weighted Sequence Protocol For Additive Valuations and Integer weights}
\KwIn{Instance $(N,M,v, \mathbf{w})$ with additive valuations.}
\KwOut{WEF-able allocation $X$ with total required subsidy of at most $(W-\wmin)  V$.}
$X_{i} \gets \emptyset, \forall i\in N$\;
$t \gets 1$; $O_1 \gets M$\;
\While{$O_t \neq \emptyset$}{
        Construct the flow network $G'=(V',E')$:
        \begin{itemize}
            \item define $V'= N \cup O_t \cup \{s,t\}$.
            \item Add arcs with the following properties:
            \begin{itemize}
                \item From $s$ to each agent $i\in N$ with cost $0$ and capacity $w_i$.
                \item From each agent $i\in N$ to each unallocated item $o \in O_t$, with cost $-v_i(o)$ and capacity 1.
                \item From each unallocated item $o\in O_t$ to $t$ with cost $0$ and capacity $1$.
            \end{itemize}
        \end{itemize}
        Compute an \emph{integral maximum flow of minimum cost} on $G'$, resulting in the one to many matching $\{X_i^t\}_{i\in N}$\;
        \text{Set $X_i \gets X_i \cup X_i^t$, for all $i\in N$}\;
        \text{Set $O_{t+1} \gets O_t \backslash \cup_{i\in N}X_i^t$}\;
        $t \gets t + 1$
}
      \Return $X$
\end{algorithm}

\begin{proposition} \label{general additive: WEF-able}
     For each round $t$ in \Cref{alg:general-additive}, $X^t$ is WEF-able.
\end{proposition}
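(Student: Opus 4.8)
My plan is to verify WEF-ability of the per-round allocation $X^t$ via the characterization in \Cref{th:ef}: it suffices to show that the weighted envy-graph $G_{X^t,w}$ contains no positive-cost cycle. The key structural fact I would exploit is that $X^t$ is not an arbitrary matching but a \emph{maximum-value} one-to-many matching in which every agent $i$ receives exactly $w_i$ items (padding with dummy items of value $0$ when fewer than $W$ items remain). Before anything else I would record the harmless observation that dummy items carry value $0$ for every agent, so they never contribute to any $v_i(\cdot)$ term; consequently $v_i(X_j^t)$ is unaffected by whether we count dummies, while $|X_j^t| = w_j$ exactly. This matters because the envy-graph cost divides by the \emph{weight} $w_j$, and the padding makes the weight coincide with the bundle size, so $\frac{v_i(X_j^t)}{w_j}$ is literally the average per-item value that $i$ assigns to the items in $X_j^t$.

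I would then bring in linear-programming duality for the transportation problem that the min-cost-flow subroutine solves. Because $X^t$ maximizes $\sum_{i,o} v_i(o)x_{i,o}$ subject to $\sum_o x_{i,o}=w_i$ and $\sum_i x_{i,o}=1$, there exist dual prices $\alpha_i$ (one per agent) and $\beta_o$ (one per item) with $\alpha_i+\beta_o\ge v_i(o)$ for all $i,o$, and, by complementary slackness, $\alpha_i+\beta_o = v_i(o)$ whenever $o$ is assigned to $i$. Summing the feasibility inequality over $o\in X_b^t$ gives $v_a(X_b^t)\le w_b\alpha_a+\sum_{o\in X_b^t}\beta_o$, and summing the tight equalities over the same set gives $v_b(X_b^t)=w_b\alpha_b+\sum_{o\in X_b^t}\beta_o$. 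Subtracting and dividing by $w_b$ yields the clean bundle inequality $\frac{v_a(X_b^t)}{w_b}\le \alpha_a-\alpha_b+\frac{v_b(X_b^t)}{w_b}$ for every pair $a,b$.

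With this inequality in hand I would define the potential $\phi_i := \alpha_i - \frac{v_i(X_i^t)}{w_i}$ and rewrite the bound as
\[
cost_{X^t}(a,b)=\frac{v_a(X_b^t)}{w_b}-\frac{v_a(X_a^t)}{w_a}\ \le\ \Bigl(\alpha_a-\tfrac{v_a(X_a^t)}{w_a}\Bigr)-\Bigl(\alpha_b-\tfrac{v_b(X_b^t)}{w_b}\Bigr)=\phi_a-\phi_b.
\]
Thus every edge cost is dominated by a difference of node potentials, so for any cycle $C=(i_1,\dots,i_r,i_1)$ the total cost telescopes, $cost_{X^t}(C)\le \sum_k(\phi_{i_k}-\phi_{i_{k+1}})=0$. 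Hence $G_{X^t,w}$ has no positive-cost cycle, and \Cref{th:ef} (condition~3 $\Rightarrow$ condition~1) gives that $X^t$ is WEF-able.

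The main obstacle I anticipate is justifying the existence of the dual prices correctly and, in particular, pinning down complementary slackness for the \emph{one-to-many} (capacitated transportation) formulation rather than the plain assignment problem; one must be careful that the integral optimal flow corresponds to an optimal vertex of the transportation polytope whose dual certificate satisfies the tightness conditions on assigned edges. A secondary subtlety, which I would dispatch at the very start, is the dummy-item bookkeeping described above — making sure that dividing by the weight $w_j$ is legitimate precisely because padding forces $|X_j^t|=w_j$ while leaving all valuations unchanged. I note that one could alternatively argue combinatorially by "cloning" each agent $i$ into $w_i$ unit-weight copies and invoking the unweighted no-positive-cycle result, but the resulting translation back to the weighted envy-graph costs effectively reconstructs the same potentials, so the duality route is the more direct presentation.
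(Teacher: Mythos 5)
Your proof is correct, but it takes a genuinely different route from the paper's. The paper also reduces to condition~3 of \Cref{th:ef}, but it certifies the absence of positive-cost cycles with an elementary probabilistic exchange argument rather than duality: given a cycle $C$ in $G_{X^t,w}$, it builds a random allocation $B^t$ by transferring, for each agent on $C$, one uniformly random item from its successor's bundle; every realization of $B^t$ keeps all bundle sizes equal to the $w_i$'s, so optimality of $X^t$ among such allocations forces $\mathbb{E}\left[\sum_{i\in N}\left(v_i(B_i^t)-v_i(X_i^t)\right)\right]\le 0$, and this expectation is computed to equal exactly $cost_{X^t}(C)$. Your argument replaces that averaging step by LP duality: dual prices for the transportation LP yield potentials $\phi_i=\alpha_i-\frac{v_i(X_i^t)}{w_i}$ with $cost_{X^t}(a,b)\le\phi_a-\phi_b$, so cycle costs telescope to zero. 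The paper's route needs nothing beyond comparing $X^t$ with other integral allocations of the same bundle sizes --- precisely what the algorithm guarantees --- whereas yours additionally needs the (standard, but nontrivial) facts that the transportation polytope is integral, hence the integral optimum is LP-optimal and complementary slackness applies to it. In exchange, your potentials give a strictly stronger conclusion: they bound every \emph{path} cost, $\ell_a(X^t)\le\phi_a-\min_b\phi_b$, not only cycles, which could be recycled to bound the subsidies themselves. One small repair is needed: your primal constraint $\sum_i x_{i,o}=1$ is infeasible in early rounds where $|O_t|>W$, since only $W$ items can be matched; write it as $\sum_i x_{i,o}\le 1$, which makes $\beta_o\ge 0$ in the dual and leaves your two summation steps (dual feasibility summed over $X_b^t$, and tightness on assigned pairs) untouched.
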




\begin{proof}
    We prove that, in every round $t$, the total cost added to any directed cycle in the weighted-envy graph is non-positive. Combined with \Cref{th:ef}, this shows that $X^t$ is WEF-able for every round $t \in T$.
    
    Let $X^t$ the allocation computed by \Cref{alg:general-additive} at iteration $t$. Note that \Cref{alg:general-additive} is deterministic. 
    Let $C$ be any directed cycle in $G_{X^t,w}$, and denote $C = (i_1,..., i_r)$. To simplify notation, we consider $i_1$ as $i_{r+1}$. 

    Given the allocation $X^t$ and the cycle $C$, we construct a random alternative allocation $B^t$ as follows: 
    for each agent $i_j \in C$, we choose one item $o_{i_{j+1}}^t$ uniformly from $i_{j+1}$'s bundle and transfer it to $i_{j}$'s bundle \footnote{Recall that at each iteration, each agent $i_j$ receives exactly $w_{i_j}$ items.}.
    The expected value of $v_{i_j}(o_{i_{j+1}}^t)$, the value of the item removed from $i_j$'s bundle, can be computed as the average value of all items in $X^t_{i_j}$: $\frac{\sum_{o\in X^t_{i_j}} v_{i_j}(o)}{w_{i_j}} = \frac{v_{i_j}(X^t_{i_j})}{w_{i_j}}$. Similarly, the expected value of $v_{i_j}(o_{i_j}^t)$, the value of the item added the $i_j$'s bundle, is $\frac{v_{i_j}(X^t_{i_{j+1}})}{w_{i_{j+1}}}$.
    Thus, the expected change in value between $B^t$ and $X^t$ is
    \begin{align*}
        & \mathbb{E}\left[\sum_{i\in N} \left( v_i(B_i^t) - v_i(X^t_i) \right)\right] = 
        \sum_{i_j \in C} \left( \mathbb{E}\left[v_{i_j}(o_{i_{j+1}}^t)\right] - \mathbb{E}\left[v_{i_j}(o_{i_j}^t)\right] \right) = \\ &
        \sum_{i_j\in C} \frac{v_{i_j}(X^t_{i_{j+1}})}{w_{i_{j+1}}} - \frac{v_{i_j}(X^t_{i_j})}{w_{i_j}}
    .
    \end{align*} 
    This is exactly the total cost of cycle $C$.
    According to \Cref{alg:general-additive}, $X^t$ maximizes the total value among all allocations in which each agent $i$ receives exactly $w_i$ items. Therefore, the left-hand side of the above expression, which is the difference between the sum of values in $B^t$ and the sum of values in $X^t$, must be at most $0$. But the right-hand side of the same expression is exactly the total cost of $C$. Therefore,
    $$ 0 \geq \mathbb{E}\left[\sum_{i\in N} \left( v_i(B^{t}_i) - v_i(X^{t}_i) \right)\right] = cost_{X^{t}}(C),$$
    so the cost of every directed cycle is at most $0$, as required.
\end{proof}
As the allocation in each iteration is WEF-able, the output allocation $X$ is WEF-able too.

To compute an upper bound on the subsidy, we adapt the proof technique in \cite{Brustle2020}.

\begin{lemma} \label{lemm:minimum cost subsidy}
    Let $X$ be a WEF-able allocation. 
    For any positive number $z$,
    if $cost_X(i,k) \geq -z$ for every edge $(i,k)$ in $G_{X,w}$, then the maximum subsidy required is at most $w_i z$ per agent $i \in N$.
\end{lemma}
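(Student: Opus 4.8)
The plan is to reduce the statement to a bound on the longest path in the weighted envy-graph and then exploit the acyclicity guaranteed by WEF-ability. By \Cref{thm:minsubsidy}, the minimal subsidy that agent $i$ needs is exactly $p^*_i = w_i\,\ell_i(X)$, where $\ell_i(X) = cost_X(P_i(X))$ is the cost of the maximum-cost path $P_i(X)$ starting at $i$ in $G_{X,w}$. Interpreting ``the maximum subsidy required'' as this minimal WEF subsidy $p^*_i$, it therefore suffices to show $\ell_i(X) \le z$ for every $i\in N$; multiplying by $w_i$ then yields the desired $p^*_i \le w_i z$.

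First I would fix an agent $i$ and write $P_i(X) = (i = i_1, i_2, \ldots, i_r)$ for its maximum-cost path. If $r = 1$ the path is trivial, so $\ell_i(X) = 0 \le z$ (recall $z > 0$), and we are done; hence assume $r \ge 2$. The key observation is that $G_{X,w}$ is a complete directed graph, so the backward edge $(i_r, i_1)$ is present, and closing $P_i(X)$ with it produces a directed cycle $C = (i_1, \ldots, i_r, i_1)$ satisfying $cost_X(C) = cost_X(P_i(X)) + cost_X(i_r, i_1)$.

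Next I would invoke \Cref{th:ef}: since $X$ is WEF-able, $G_{X,w}$ contains no positive-cost cycle, so $cost_X(C) \le 0$. Rearranging gives $\ell_i(X) = cost_X(P_i(X)) \le -\,cost_X(i_r, i_1)$, and the hypothesis $cost_X(i_r, i_1) \ge -z$ turns this into $\ell_i(X) \le z$. Combined with $p^*_i = w_i\,\ell_i(X)$ this gives $p^*_i \le w_i z$, as claimed.

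The only real subtlety, and the step I would be most careful about, is ensuring the closing edge $(i_r, i_1)$ genuinely forms a cycle: this needs $i_r \neq i_1$, which holds because a maximum-cost path in a graph with no positive cycles can be taken to be simple, together with the completeness of $G_{X,w}$ that guarantees the edge exists; the trivial-path case $r=1$ is dispatched separately as above. It is worth emphasizing that the hypothesis $cost_X(i,k) \ge -z$ is used only on this single backward edge, not along the whole path --- a lower bound on one edge suffices precisely because acyclicity converts it into an upper bound on the path it closes.
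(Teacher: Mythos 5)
Your proposal is correct and follows essentially the same route as the paper's proof: close the maximum-cost path $P_i(X)$ into a cycle with the back edge to $i$, invoke \Cref{th:ef} to conclude the cycle has non-positive cost, and use the hypothesis only on that single closing edge to get $\ell_i(X) \le z$, hence $p^*_i = w_i\,\ell_i(X) \le w_i z$. The extra care you take with the trivial path and the simplicity of the maximum-cost path is a fine (and slightly more rigorous) touch, but does not change the argument.
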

\begin{proof} 
    Assume $P_i(X) = (i \ldots j)$ is the highest-cost path from $i$ in $G_{X,w}$. Note that $\ell_i(X) = cost_X(P_i(X))$.
    Then it holds for the cycle $C = (i \ldots j)$ that $cost_X(C) = \ell_i(X) + cost_X(j,i)$.
    By \Cref{th:ef}, $cost_X(C) \leq 0$, thus, $\ell_i(X) \leq -cost_X(j,i) \leq z.$
    Therefore, $p_i = w_i   \ell_i(X) \leq w_i z$.
\end{proof}

In most allocations, including the one resulting from \Cref{alg:general-additive}, it is not always true that $cost_X(i,k) \geq -V$ for every edge $(i,k)$. We use \Cref{lemm:minimum cost subsidy} with a modified valuation function $\Bar{v}_i$, derived from the weighted valuation $\frac{v_i(X_i)}{w_i}$.

We prove that an allocation that is WEF-able for the original valuations is also WEF-able for the modified valuations (\Cref{modified function wef-able}), and that the maximum subsidy required by each agent for the original valuations is bounded by the subsidy required for the modified valuations (\Cref{max sub modified is max sub original}).

Next, we demonstrate that under the modified valuations, the cost of each edge is at least $-V$. Finally, by \Cref{lemm:minimum cost subsidy}, we conclude that the maximum subsidy required for any agent $i\in N$ is $w_i   V$ for the modified valuation $\Bar{v}$ (\Cref{max subsidy for modified valuations}) and for the original valuations $v$ as well.

Let $X^t$ be the output allocation from \Cref{alg:general-additive}, computed in iteration $t$.
For each $i\in N$, we define the modified valuation function as follow:
\[
\Bar{v}_i(X_j^t) = \begin{cases}
    \frac{v_i(X_i^t)}{w_i} & j = i \\
    \frac{v_i(X_j^T)}{w_j} & j\neq i, t = T \\
    \max{\left(\frac{v_i(X_j^t)}{w_j}, \frac{v_i(X_i^{t+1})}{w_i}\right)} & j\neq i, t < T
\end{cases}
\]
Under the modified valuations, for any two agents $i,j \in N$, the modified-cost assigned to the edge $(i,j)$ in the envy graph (with unit weights) is defined as $\overline{cost}_{X}(i,j) = \Bar{v}_i(X_i) - \Bar{v}_i(X_j)$. Moreover, the modified-cost of a path $(i_1,...,i_k)$ is $$\overline{cost}_{X}(i_1,...,i_k) = \sum_{j=1}^{k-1} \overline{cost}_{X}(i_{j}, i_{j+1}).$$

\begin{observation} \label{observation:modified func}
For agent $i \in N$ and round $t\in [T]$ it holds that:
    \begin{enumerate}
        \item $
        \Bar{v}_i(X_i^t)
        =
        \frac{v_i(X_i^t)}{w_i}$.
        \item For agent $j\neq i \in N$, $
        \Bar{v}_i(X_j^t)\geq \frac{v_i(X_j^t)}{w_j}$;
        hence $\Bar{v}_i(X_j^t) - \Bar{v}_i(X_i^t)\geq \frac{v_i(X_j^t)}{w_j}-\frac{v_i(X_i^t)}{w_i}$.
    \end{enumerate}
\end{observation}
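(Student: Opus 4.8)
The plan is to verify both claims by directly unwinding the three-case definition of the modified valuation $\bar{v}_i$; no auxiliary machinery is needed, since the observation is essentially a restatement of how $\bar{v}_i$ was constructed. The only point requiring attention is keeping the round cases ($t=T$ versus $t<T$) separate, because the definition of $\bar{v}_i$ behaves differently in the final round.

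For the first claim, I would simply invoke the $j=i$ branch of the definition, which sets $\bar{v}_i(X_i^t)=\frac{v_i(X_i^t)}{w_i}$ verbatim. This holds for every round $t\in[T]$, including $t=T$, because the $j=i$ case is stated uniformly across all rounds.

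For the inequality $\bar{v}_i(X_j^t)\ge \frac{v_i(X_j^t)}{w_j}$ with $j\ne i$, I would split on the round. If $t=T$, the definition gives $\bar{v}_i(X_j^T)=\frac{v_i(X_j^T)}{w_j}$, so the inequality holds with equality. If $t<T$, the definition gives $\bar{v}_i(X_j^t)=\max\left(\frac{v_i(X_j^t)}{w_j},\,\frac{v_i(X_i^{t+1})}{w_i}\right)$, and a maximum is at least its first argument $\frac{v_i(X_j^t)}{w_j}$, which is exactly what is needed.

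Finally, for the ``hence'' part I would combine the two: using the first claim to rewrite $\bar{v}_i(X_i^t)=\frac{v_i(X_i^t)}{w_i}$ and then subtracting this common quantity from both sides of the inequality just established yields
\[
\bar{v}_i(X_j^t)-\bar{v}_i(X_i^t)\;=\;\bar{v}_i(X_j^t)-\frac{v_i(X_i^t)}{w_i}\;\ge\;\frac{v_i(X_j^t)}{w_j}-\frac{v_i(X_i^t)}{w_i}.
\]
There is no genuine obstacle here, as the statement is definitional; the ``main step'' is merely the case distinction on $t$, which the definition of $\bar{v}_i$ handles on its own.
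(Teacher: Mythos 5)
Your proof is correct and matches the paper's intent exactly: the paper states this as an unproved observation precisely because it follows by directly unwinding the three-case definition of $\Bar{v}_i$, which is what you do. The case split on $t=T$ versus $t<T$ and the final subtraction step are all that is needed, and you handle them properly.
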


\begin{proposition} \label{modified function wef-able}
    Assume $X$ is WEF-able under the original valuations $v$. Then, $X$ is EF-able (i.e., WEF-able with unit weights) under the modified valuations $\Bar{v}$. 
\end{proposition}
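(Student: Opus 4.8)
The plan is to invoke the characterization in \Cref{th:ef} applied with unit weights: $X$ is EF-able under $\bar v$ if and only if the modified envy-graph has no positive-cost cycle, equivalently $\bar v$ is reassignment-stable, i.e. $\sum_{i\in N}\bar v_i(X_i)\ge\sum_{i\in N}\bar v_i(X_{\pi(i)})$ for every permutation $\pi$. By \Cref{observation:modified func} we have $\bar v_i(X_i)=\frac{v_i(X_i)}{w_i}=\sum_{t}\frac{v_i(X_i^t)}{w_i}$, and since $X$ carries the round decomposition of \Cref{alg:general-additive}, I would expand the off-diagonal term from the definition of $\bar v$ as $\bar v_i(X_{\pi(i)})=\sum_{t<T}\max\!\left(\frac{v_i(X_{\pi(i)}^t)}{w_{\pi(i)}},\frac{v_i(X_i^{t+1})}{w_i}\right)+\frac{v_i(X_{\pi(i)}^T)}{w_{\pi(i)}}$, cancel the fixed points of $\pi$, and reduce reassignment-stability to a single round-indexed inequality over the agents moved by $\pi$.

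I would feed two structural facts of the allocation produced by \Cref{alg:general-additive} into this inequality. First, per-round optimality: each $X^t$ is a maximum-value assignment of exactly $w_i$ items of the remaining pool $O_t$ to each agent, which (exactly as in \Cref{general additive: WEF-able}) gives, for every round $t$ and permutation $\sigma$, $\sum_i\frac{v_i(X_i^t)}{w_i}\ge\sum_i\frac{v_i(X_{\sigma(i)}^t)}{w_{\sigma(i)}}$. Second, a telescoping monotonicity inherited from the matching: every item handed out in round $t$ is worth at least every item still unallocated afterwards, hence $\frac{v_i(X_i^t)}{w_i}\ge\frac{v_i(X_j^{t'})}{w_j}$ for all $j$ and all later rounds $t'>t$ (in particular the self-values $\frac{v_i(X_i^t)}{w_i}$ are non-increasing in $t$). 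Using the identity $\max(a,b)=b+(a-b)^{+}$ and reindexing, the target inequality collapses to showing, summed over the moved agents, $\sum_{i}\Big[\sum_{t<T}\left(\frac{v_i(X_{\pi(i)}^t)}{w_{\pi(i)}}-\frac{v_i(X_i^{t+1})}{w_i}\right)^{+}+\frac{v_i(X_{\pi(i)}^T)}{w_{\pi(i)}}\Big]\le\sum_i\frac{v_i(X_i^1)}{w_i}$.

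The hard part will be that the positive part (the $\max$) does not distribute over the agent-sum, so a naive round-by-round comparison does not close: the weak pairwise telescoping bounds alone are genuinely insufficient, and one must use the full per-round optimality, namely that the value of $X^t$ dominates that of \emph{every} assignment of $w_i$ items drawn from the whole remaining pool $O_t$, not merely reshufflings of the already-allocated bundles. The intended route is a charging argument that pairs, for each agent, the $T$ ``$\max$'' contributions against the $T$ self-values $\frac{v_i(X_i^1)}{w_i},\dots,\frac{v_i(X_i^T)}{w_i}$: the round-$T$ term and the rounds where the positive part vanishes are absorbed by the matching self-values via the telescoping monotonicity, while the cross-agent sum of the surviving positive parts is controlled by round-$t$ optimality. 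Equivalently, on the cycle side of \Cref{th:ef}, I would build from $\pi$ a single alternative per-round allocation that realizes exactly the ``better of current envy-bundle or next self-bundle'' choice in each round and bound its value by $\sum_t\sum_i v_i(X_i^t)$; the delicate point, and the main obstacle, is verifying that these choices assemble into valid allocations of $O_t$ in each round so that per-round optimality can be applied.
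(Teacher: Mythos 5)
Your framing and both of your structural facts are correct: the reduction via \Cref{th:ef} (with unit weights) to reassignment-stability of $\Bar{v}$ is the right frame, per-round optimality of the matching is indeed the engine of the argument, and your telescoping monotonicity (every item matched in round $t$ is worth, to its recipient, at least any item left in $O_{t+1}$, hence $\frac{v_i(X_i^t)}{w_i}\ge\frac{v_i(X_j^{t'})}{w_j}$ for $t'>t$) is a valid consequence of matching maximality --- though the paper never needs that second fact for this proposition. Your algebraic reduction to the inequality involving the positive parts $\left(\frac{v_i(X_{\pi(i)}^t)}{w_{\pi(i)}}-\frac{v_i(X_i^{t+1})}{w_i}\right)^{+}$ is also exact. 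The problem is that the proof stops where the proposition actually lives: the charging argument is never carried out, and the step you yourself flag as ``the main obstacle'' --- verifying that the better-of-two choices assemble into valid allocations of $O_t$ so that per-round optimality can be invoked --- is precisely the content of the proof, not a technicality. As submitted, this is a plan with a hole at its center, and the hole is the theorem.

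For comparison, the paper closes exactly that hole by never forming your global, all-rounds-at-once permutation inequality. It proves the stronger round-local claim that in every round $t$ the modified cost added to any directed cycle is non-positive, and within a fixed round it does a case analysis on which branch of the $\max$ is active on each arc: if every arc uses $\frac{v_{i_j}(X_{i_{j+1}}^t)}{w_{i_{j+1}}}$, a positive cycle contradicts \Cref{general additive: WEF-able}; if every arc uses $\frac{v_{i_j}(X_{i_j}^{t+1})}{w_{i_j}}$, it contradicts round-$t$ optimality, since all round-$(t+1)$ bundles still lie in $O_t$; in the mixed case it decomposes the cycle into edge-disjoint paths whose \emph{last arc alone} is of the second type, picks a positive-cost path $P=(i_1,\dots,i_k)$, and builds a random allocation $B^t$ that passes one uniform item from $X_{i_{j+1}}^t$ to $i_j$ along the first-type arcs, discards one uniform item of $i_1$, and hands $i_{k-1}$ one uniform item of $X_{i_{k-1}}^{t+1}\subseteq O_t$. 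Feasibility is then immediate --- every agent still holds exactly $w_i$ items of $O_t$ and nothing is allocated twice, because each transfer chain terminates at the unallocated pool rather than at another agent's bundle --- and the expected value of $B^t$ exceeds that of $X^t$ by $cost_{X^t}(P)>0$, contradicting optimality. This termination trick (a second-branch arc ends a chain) is the assembly verification you left open; if you want to salvage your route, fix the round first, since rounds never interact in the paper's argument, and let the active branch of each $\max$ dictate where each chain ends.
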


\begin{proof} 
By \Cref{th:ef}, it is sufficient to prove that all directed cycles in the envy graph (with the modified valuations and unit weights) 
have non-positive total cost.

We prove a stronger claim: in every round $t$, 
the total modified-cost added to every directed cycle $C$ is non-positive. 

Let $X^t$ be the allocation computed by \Cref{alg:general-additive} at iteration $t$. 
Suppose, contrary to our assumption, that there exists a cycle $C=(i_1,...,i_r)$ and a round $t$
in which the modified-cost added to $C$ is positive. To simplify notation, we consider $i_1$ as $i_{r+1}$. 
This implies that
\begin{align} \label{contradiction: modified valuation}
    \sum_{j=1}^r \Bar{v}_{i_j}(X_{i_{j+1}}^t) > \sum_{j=1}^r \Bar{v}_{i_j}(X_{i_j}^t).
\end{align}
There are several cases to consider.

\underline{\bf Case 1:}  All arcs $i_j\to i_{j+1}$ in $C$ have $$\Bar{v}_{i_j}(X_{i_{j+1}}^t) = \frac{v_{i_j}(X_{i_{j+1}}^{t})}{w_{i_{j+1}}}$$
    (in particular, this holds for $t = T$).
    In this case, inequality \eqref{contradiction: modified valuation} implies 
    $$\overline{cost}_{X^t}(C) = \sum_{j=1}^r \frac{v_{i_j}(X_{i_j}^{t})}{w_{i_{j+1}}} - \frac{v_{i_j}(X_{i_j}^{t})}{w_{i_j}} > 0.$$ Combined with \Cref{th:ef}, this contradicts \Cref{general additive: WEF-able}, which states that $X^t$ is WEF-able.

\underline{\bf Case 2:}    
    All arcs $i_j\to i_{j+1}$ in $C$ have $$\Bar{v}_{i_j}(X_{i_{j+1}}^t) = \frac{v_{i_j}(X_{i_j}^{t+1})}{w_{i_j}}.$$
    In this case, inequality 
    \eqref{contradiction: modified valuation} implies 
    $$\sum_{j=1}^r \frac{v_{i_j}(X_{i_j}^{t+1})}{w_{i_j}} > \sum_{j=1}^r \frac{v_{i_j}(X_{i_j}^{t})}{w_{i_j}}.$$ 
    Notice that all the items in $X_{j_{1}}^{t+1},..., X_{j_{r}}^{t+1}$ are available at iteration $t$, which contradicts the optimality of $\{X_i^t\}_{i\in N}$.
    
\underline{\bf Case 3:} 
Some arcs $i_j\to i_{j+1}$ in $C$ satisfy Case 1 and the other arcs satisfy Case 2.

Let $l \geq 1$ be the number of arcs in $C$ that satisfy Case 2.
We decompose $C$ into a sequence of $l$ edge-disjoint paths, denoted $P_1, ... , P_l$, such that the last node of each path is the first node of the next path,
and in each path, only the last edge satisfies Case 2.

Formally, suppose that some path contains $k\geq 1$ agents, denoted as $i_1, ..., i_{k}$, and $k-1$ arcs.
Then  for each $1\leq j \leq k - 2$, $$\Bar{v}_{i_{j}}\left(X_{i_{j+1}}^t\right) = \frac{v_{i_{j}}\left(X_{i_{j+1}}^t\right)}{w_{i_{j+1}}},$$ and $$\Bar{v}_{i_{k-1}}\left(X_{i_{k}}^t\right) = \frac{v_{i_{k-1}}\left(X_{i_{k-1}}^{t+1}\right)}{w_{i_{k-1}}}.$$ 
    Since $\overline{cost}_{X^t}(C) > 0$, there exits a path $P = (i_1,..., i_k)$ where $\overline{cost}_{X^t}(P) > 0$, which implies that: 
    $$0 < \sum_{j = 1}^{k-1} \left( \Bar{v}_{i_{j}}(X_{i_{j+1}}^t) - \Bar{v}_{i_{j}}(X_{i_{j}}^t) \right) =
    \sum_{j=1}^{k-2} \left(\frac{v_{i_{j}}(X_{i_{j+1}}^t)}{w_{i_{j+1}}} - \frac{v_{i_{j}}(X_{i_{j}}^t)}{w_{i_{j}}} \right) + \frac{v_{i_{k-1}}(X_{i_{k-1}}^{t+1})}{w_{i_{k-1}}} - \frac{v_{i_{k-1}}(X_{i_{k-1}}^{t})}{w_{i_{k-1}}}.$$
    The rest of the proof is similar to the proof of \Cref{general additive: WEF-able}.
    We construct another allocation $B^t$ randomly as follows: 
    \begin{enumerate}
        \item For each agent $1\leq j \leq k-1$, we choose one item $o_{i_{j+1}}^t$ uniformly from $i_{j+1}$'s bundle and transfer it to $i_{j}$'s bundle.
        \item We choose one item $o_{i_{1}}^t$ uniformly from $i_1$'s bundle to remove.
        \item We choose one item $o_{i_{k-1}}^{t+1}$ uniformly from $X_{i_{k-1}}^{t+1}$ and add it to $i_{k-1}$'s bundle. 
    \end{enumerate}
    Thus, the expected change in value between $B^t$ and $X^t$ is
    \begin{align*}
        &\mathbb{E}\left[\sum_{i\in N} \left( v_i(B_i^t) - v_i(X^t_i) \right)\right] = \\ & \sum_{1\leq j\leq k-2} \left( \mathbb{E}\left[v_{i_j}(o_{i_{j+1}}^t)\right] - \mathbb{E}\left[v_{i_j}(o_{i_j}^t)\right] \right) + \mathbb{E}\left[v_{i_{k-1}}(o_{i_{k-1}}^{t+1})\right] - \mathbb{E}\left[v_{i_{k-1}}(o_{i_{k-1}}^t)\right] = \\
        &\sum_{1 \leq j \leq k-2} \frac{v_{i_j}(X^t_{i_{j+1}})}{w_{i_{j+1}}} - \frac{v_{i_j}(X^t_{i_j})}{w_{i_j}} + \frac{v_{i_{k-1}}(X_{i_{k-1}}^{t+1})}{w_{i_{k-1}}} - \frac{v_{i_{k-1}}(X_{i_{k-1}}^{t})}{w_{i_{k-1}}}.
    \end{align*}
    This is exactly the cost of $P$ which by assumption is greater than 0.
    However, according to \Cref{alg:general-additive}, $X^t$ maximizes the value of an allocation where each agent $i$ receives $w_i$ items among the set of $O^t$ items. Therefore,
     $$0 \leq \mathbb{E}\left[\sum_{i\in N} \left( v_i(B^{t}_i) - v_i(X^{t}_i) \right)\right] = cost_{X^{t}}(P)$$ 
    leading to a contradiction.

To sum up, $X^t$ is WEF-able under the original valuations $v$ (with weights $w$), and under the modified valuations $\Bar{v}$ (with unit weights).
\end{proof}
\begin{proposition} \label{max sub modified is max sub original}
    For the allocation $X$ computed by \Cref{alg:general-additive}, the subsidy required by an agent given $v$ (with weights $w$) is at most the subsidy required given $\Bar{v}$ (with unit weights).
\end{proposition}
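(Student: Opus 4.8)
The plan is to reduce both minimum-subsidy vectors to maximum-cost paths via \Cref{thm:minsubsidy} and then compare the two envy graphs edge by edge. Under the original valuations $v$ with weights $\ww$, \Cref{thm:minsubsidy} gives that the minimum subsidy of agent $i$ is $p^*_i = w_i\,\ell_i(X)$, where $\ell_i(X)$ is the maximum cost of a path starting at $i$ in $G_{X,w}$. Applying the same theorem with unit weights to the modified valuations $\Bar{v}$, let $\overline{\ell}_i(X)$ be the maximum cost of a path starting at $i$ in the modified (unit-weight) envy graph, so the minimum subsidy there is $\overline{p}^*_i = \overline{\ell}_i(X)$. Both are well defined: $X$ is WEF-able under $v$ (\Cref{general additive: WEF-able}) and EF-able under $\Bar{v}$ (\Cref{modified function wef-able}), so by \Cref{th:ef} neither graph contains a positive-cost cycle. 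It therefore suffices to prove $\ell_i(X) \le \overline{\ell}_i(X)$; this is exactly the claim that the subsidy per unit entitlement under $v$ is at most the subsidy under $\Bar{v}$, equivalently $p^*_i \le w_i\,\overline{p}^*_i$ (note that the scaling by $w_i$ is genuinely needed, as the single-item tightness instance of \Cref{worst case allocation can be chosen with weights} shows $p^*_i = w_i\,\overline{p}^*_i$ can exceed $\overline{p}^*_i$).

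The first step is edge-wise domination: for every directed edge $(i,j)$,
\begin{align*}
\overline{cost}_X(i,j) = \Bar{v}_i(X_j) - \Bar{v}_i(X_i) \ \ge\ \frac{v_i(X_j)}{w_j} - \frac{v_i(X_i)}{w_i} = cost_X(i,j).
\end{align*}
I would obtain this by summing the per-round inequality of \Cref{observation:modified func}(2) over all rounds $t \in [T]$, using additivity of $v_i$ across rounds together with part (1) of the same observation, which gives $\Bar{v}_i(X_i) = \sum_{t} \Bar{v}_i(X_i^t) = \sum_{t} \tfrac{v_i(X_i^t)}{w_i} = \tfrac{v_i(X_i)}{w_i}$.

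Since the cost of a path is the sum of its edge costs, edge-wise domination lifts immediately to paths: $\overline{cost}_X(P) \ge cost_X(P)$ for every path $P$. Applying this to the maximum-cost path $P_i(X)$ attaining $\ell_i(X)$ yields $\overline{\ell}_i(X) \ge \overline{cost}_X(P_i(X)) \ge cost_X(P_i(X)) = \ell_i(X)$. Substituting $p^*_i = w_i\,\ell_i(X)$ and $\overline{p}^*_i = \overline{\ell}_i(X)$ proves the claim, and combined with the subsequent bound $\overline{\ell}_i(X) \le V$ it gives the desired per-agent bound $p^*_i \le w_i V$.

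The only delicate point is the cross-round bookkeeping in the first step: one must verify that the modified valuation is additive over rounds, so that the per-round bound of \Cref{observation:modified func} sums correctly to a bound on the edge cost of the final allocation $X = \bigcup_t X^t$, and that the own-bundle identity $\Bar{v}_i(X_i) = v_i(X_i)/w_i$ holds with equality while the cross-bundle inequality $\Bar{v}_i(X_j) \ge v_i(X_j)/w_j$ (for $j \ne i$) is precisely what makes the modified cost dominate the original one. Once edge-wise domination is established, lifting to paths and back to subsidies is routine.
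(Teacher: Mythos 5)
Your proof is correct and takes essentially the same route as the paper's: edge-wise cost domination obtained by summing \Cref{observation:modified func} over the rounds (using additivity of $v_i$ and the definition of $\Bar{v}_i$ as a sum over round bundles), lifted to whole paths, and then converted to subsidies via \Cref{thm:minsubsidy}; the paper's own proof is just a terser statement of exactly this argument. Your remark on the $w_i$ scaling is also the right reading of the statement — what is actually established and used downstream is $p^*_i = w_i\,\ell_i(X) \leq w_i\,\overline{\ell}_i(X) \leq w_i V$, which matches how \Cref{max subsidy for modified valuations} and \Cref{theorem: sub general additive} invoke this proposition.
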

\begin{proof}
    Given \Cref{observation:modified func}, for each $i,j \in N$,
    $$\Bar{v}_i(X_j) - \Bar{v}_i(X_i) \geq \frac{v_i(X_j)}{w_j} - \frac{v_i(X_i)}{w_i}.$$
    Thus, the cost of any path in the envy graph under the modified function and unit weights 
    is at least the cost of the same path in the weighted envy-graph with the original valuations. 
\end{proof}
\begin{proposition} \label{max subsidy for modified valuations}
    For the allocation $X$ computed by \Cref{alg:general-additive}, the subsidy to each agent is at most $w_i   V$ for the modified valuation profile $\Bar{v}$.
\end{proposition}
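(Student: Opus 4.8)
The plan is to derive the bound from \Cref{lemm:minimum cost subsidy} applied to the modified cost structure. By \Cref{modified function wef-able}, $X$ is EF-able under $\Bar v$, so the envy graph carrying the modified costs $\overline{cost}_X$ has no positive-cost cycle; consequently the reasoning behind \Cref{lemm:minimum cost subsidy} applies, and it suffices to verify its hypothesis, namely that $\overline{cost}_X(i,j)\ge -V$ for every edge $(i,j)$, i.e. $\Bar v_i(X_i)-\Bar v_i(X_j)\le V$ for all $i,j\in N$. Once this single inequality is in hand, \Cref{lemm:minimum cost subsidy} with $z=V$ bounds the longest modified-cost path out of each agent $i$ by $V$, so the subsidy, equal to $w_i$ times this longest-path cost exactly as in that lemma, is at most $w_iV$.

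For the inequality I would expand the modified cost round by round. Using \Cref{observation:modified func}(1) and the fact that $\Bar v$ is summed over the per-round contributions, $\Bar v_i(X_i)-\Bar v_i(X_j)=\sum_t \frac{v_i(X_i^t)}{w_i}-\sum_t \Bar v_i(X_j^t)$. The heart of the matter is that, for each round $t<T$, the defining maximum of $\Bar v_i(X_j^t)$ contains the term $\frac{v_i(X_i^{t+1})}{w_i}$, so $\Bar v_i(X_j^t)\ge \frac{v_i(X_i^{t+1})}{w_i}$ no matter which branch attains the maximum, while for the final round $\Bar v_i(X_j^T)=\frac{v_i(X_j^T)}{w_j}\ge 0$. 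Substituting these lower bounds turns the difference into a telescoping sum in which every term $\frac{v_i(X_i^t)}{w_i}$ with $2\le t\le T$ cancels, leaving exactly $\frac{v_i(X_i^1)}{w_i}$. Hence $\Bar v_i(X_i)-\Bar v_i(X_j)\le \frac{v_i(X_i^1)}{w_i}$.

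Finally I would bound the surviving first-round term. In round $1$, \Cref{alg:general-additive} assigns agent $i$ exactly $w_i$ items (padding with zero-valued dummies when fewer than $W$ items are available), and every item is worth at most $V$ to $i$, so $v_i(X_i^1)\le w_i V$ and therefore $\frac{v_i(X_i^1)}{w_i}\le V$. This establishes $\overline{cost}_X(i,j)\ge -V$ on every edge and closes the argument through \Cref{lemm:minimum cost subsidy}. I expect the middle paragraph to be the main obstacle: one must handle the maximum in the definition of $\Bar v$ and the boundary round $T$ carefully, so that the per-round lower bounds line up into a clean telescope. Once the next-round own value $\frac{v_i(X_i^{t+1})}{w_i}$ is exposed inside the maximum, the cancellation and the first-round bound $v_i(X_i^1)\le w_i V$ are routine.
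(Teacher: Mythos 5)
Your proposal is correct and takes essentially the same route as the paper's own proof: both reduce the claim, via \Cref{modified function wef-able} and \Cref{lemm:minimum cost subsidy}, to showing the edge bound $\Bar{v}_i(X_j)-\Bar{v}_i(X_i)\geq -V$, and both establish it by lower-bounding each $\Bar{v}_i(X_j^t)$ for $t<T$ by the branch $\frac{v_i(X_i^{t+1})}{w_i}$ of the defining maximum (and the round-$T$ term by $0$), so that the sum telescopes to leave only $\frac{v_i(X_i^1)}{w_i}$. The closing step is also identical: since round $1$ gives agent $i$ exactly $w_i$ items, $v_i(X_i^1)\leq w_i V$, yielding the bound $-V$ and hence a subsidy of at most $w_i V$.
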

\begin{proof}
    By \Cref{modified function wef-able}, the allocation $X$ is WEF-able under the valuations $\Bar{v}$. 
    Together with \Cref{lemm:minimum cost subsidy}, if for each $i,j\in N$ it holds that $\Bar{v}_i(X_j) - \Bar{v}_i(X_i) \geq -V$, the subsidy required for agent $i\in N$ is at most $w_i   V$ for $\Bar{v}$.

    \begin{align*}
        &\Bar{v}_i(X_j) - \Bar{v}_i(X_i) = \sum_{t\in [T]}\Bar{v}_i(X_j^t) - \sum_{t\in [T]}\Bar{v}_i(X_i^t) =  \\ &\sum_{t\in [T-1]}\max\Big\{\frac{v_i(X_j^t)}{w_j}, \frac{v_i(X_i^{t+1})}{w_i}\Big\} + \frac{v_i(X_j^T)}{w_j}- \sum_{t\in [T]}\frac{v_i(X_i^t)}{w_i} \geq \\ &
         \sum_{t\in [T-1]}\frac{v_i(X_i^{t+1})}{w_i}+ \frac{v_i(X_j^T)}{w_j}- \sum_{t\in [T]}\frac{v_i(X_i^t)}{w_i} = \\ &\frac{v_i(X_j^T)}{w_j} - \frac{v_i(X_i^1)}{w_i} \geq - \frac{v_i(X_i^1)}{w_i}.
    \end{align*}
Since $X_i^1$ contains exactly $w_i$ items, 
$-v_i(X_i^1) \geq - w_i  V$.
Hence, $\Bar{v}_i(X_j) - \Bar{v}_i(X_i) \geq -\frac{w_i   V}{w_i} = -V$.
\end{proof}

We are now prepared to prove the main theorem.
\begin{theorem} \label{theorem: sub general additive}
    For additive valuations and integer entitlements, \Cref{alg:general-additive} computes in polynomial time a WEF-able allocation where the subsidy to each agent is at most $w_i V$ and the total subsidy is at most 
    $(W-\wmin)V$.
\end{theorem}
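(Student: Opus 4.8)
The plan is to treat this theorem as a corollary that assembles the four propositions already established for \Cref{alg:general-additive}: the genuine technical content (the modified-valuation construction and its analysis) has been discharged, so the remaining work is to (i) lift per-round WEF-ability to the whole allocation, (ii) transfer the per-agent subsidy bound from the modified valuations back to the original ones, (iii) aggregate into the claimed total bound, and (iv) confirm polynomial running time. I would prove the three assertions of the statement in that order.

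First, for WEF-ability of the output $X$: by \Cref{general additive: WEF-able} each round's allocation $X^t$ is WEF-able, hence by \Cref{th:ef} the graph $G_{X^t,w}$ has no positive-cost cycle. Since valuations are additive and the weights are fixed across rounds, $v_i(X_j)=\sum_t v_i(X_j^t)$, so $cost_X(i,j)=\sum_t cost_{X^t}(i,j)$ for every edge, and therefore the cost of any cycle $C$ in $G_{X,w}$ equals $\sum_t cost_{X^t}(C)\le 0$. Thus $G_{X,w}$ has no positive-cost cycle and $X$ is WEF-able by \Cref{th:ef}. Next, for the per-agent bound, \Cref{max subsidy for modified valuations} gives that under the modified profile $\Bar v$ (with unit weights) each agent needs subsidy at most $w_i V$, while \Cref{max sub modified is max sub original} shows the subsidy required for agent $i$ under the original $v$ (with weights $w$) is no larger than the one required under $\Bar v$. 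Chaining these two inequalities yields $p^*_i \le w_i V$ for every $i\in N$, where $p^*_i=w_i\ell_i(X)$ is the minimum subsidy vector of \Cref{thm:minsubsidy}.

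For the total bound — the only place I expect to need a small extra argument rather than a direct citation — I would invoke part (4) of \Cref{thm:minsubsidy}: at least one agent $i_0$ satisfies $p^*_{i_0}=0$. Summing the per-agent bound over the remaining agents then gives $\sum_{i\in N}p^*_i=\sum_{i\ne i_0}p^*_i\le \sum_{i\ne i_0} w_i V=(W-w_{i_0})V\le(W-\wmin)V$, using $w_{i_0}\ge\wmin$. This is precisely where the existence of a zero-subsidy agent matters: without it one would only obtain the weaker bound $WV$. Finally, polynomial running time follows because the algorithm performs $\lceil m/W\rceil$ rounds, each solving a single integral minimum-cost flow (equivalently, a one-to-many maximum-value matching) on a polynomial-size network, and the optimal subsidies $p^*$ are computed in $O(nm+n^3)$ time by part (3) of \Cref{thm:minsubsidy}. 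Combining (i)--(iv) establishes the theorem.
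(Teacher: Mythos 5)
Your proposal is correct and follows essentially the same route as the paper: per-round WEF-ability (\Cref{general additive: WEF-able}), the modified-valuation propositions (\Cref{modified function wef-able}, \Cref{max subsidy for modified valuations}, \Cref{max sub modified is max sub original}) to get the per-agent bound $w_iV$, part (4) of \Cref{thm:minsubsidy} to drop one agent's subsidy and obtain $(W-\wmin)V$, and the min-cost-flow complexity for polynomial time. The only difference is cosmetic: you spell out explicitly that cycle costs in $G_{X,w}$ decompose additively over rounds, a step the paper asserts in one line after \Cref{general additive: WEF-able}; your version makes that justification explicit, which is welcome but not a different argument.
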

\begin{proof}
For the runtime analysis, the most computationally intensive step in \Cref{alg:general-additive} is solving the maximum integral flow of minimum cost in $G'$. The flow network $G'$ consists of at most $n+m+2$ nodes and at most $n+m+mn$ arcs. By \cite{goldberg1989finding}, this can be done in time polynomial in $n,m$:
\\
\resizebox{\textwidth}{!}{$ O\left( \left(n+m+2\right) \left( n+m+mn \right) \log\left(n+m+2\right) \min\{ \log \left(\left(n+m+2\right) V \right) , \left( n+m+mn \right) \log \left(n+m+2\right) \}\right). $}


By \Cref{general additive: WEF-able}, $X$ is WEF-able under the original valuations. Combined with \Cref{modified function wef-able} and \Cref{max subsidy for modified valuations} , $X$ is also WEF-able under the modified valuations and requires a subsidy of at most $w_i V$ for each agent $i\in N$. 

\Cref{max sub modified is max sub original}, implies that under the original valuations, the required subsidy for each agent $i\in N$ is at most $w_i V$. 
By \Cref{thm:minsubsidy}, there is at least one agent who requires no subsidy, so the required total subsidy is at most $(W-\wmin)  V$.
\end{proof}

The WEF condition is invariant to multiplying the weight vector by a scalar.
This can be used in two ways:

(1) If the weights are not integers, but their ratios are integers, we can still use Algorithm~\ref{alg:general-additive}.
 For example, if $w_1=1/3$ and $w_2=2/3$ (or even if $w_i$'s are irrational numbers such as $w_1=\sqrt{2}$ and $w_2=2\sqrt{2}$), Algorithm 1 works correctly by resetting $w_1=1$ and $w_2=2$.

(2) If the weights are integers with greatest common divisor (gcd) larger than 1, we can divide all weights by the gcd to get a better subsidy bound.


\begin{lemma} \label{cor: sub general additive}
    For additive valuations and integer entitlements, there exists an algorithm that computes in polynomial time a WEF-able allocation where the subsidy to each agent is at most $w_i V/\gcdw$ and the total subsidy is at most $(W-\wmin)V/\gcdw$, where $\gcdw$ is the greatest common divisor of all the $w_i$.
\end{lemma}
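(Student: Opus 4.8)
The plan is to reduce Lemma~\ref{cor: sub general additive} to the already-proved Theorem~\ref{theorem: sub general additive} by exploiting the scale-invariance of the WEF condition, which was noted just before the statement. First I would observe that if $\mathbf{w}=(w_1,\dots,w_n)$ is the integer weight vector and $d := \gcdw$, then the rescaled vector $\mathbf{w}' := \mathbf{w}/d = (w_1/d,\dots,w_n/d)$ consists of positive integers by the definition of the gcd, and $\gcd(\mathbf{w}') = 1$. The key point is that the WEF (and WEF-ability) conditions are homogeneous in the weights: dividing every $w_i$ by the same positive scalar $d$ multiplies each term $\frac{v_i(X_j)+p_j}{w_j}$ by $d$, so an outcome $(X,\mathbf{p})$ is WEF under $\mathbf{w}$ if and only if it is WEF under $\mathbf{w}'$, and hence $X$ is WEF-able under $\mathbf{w}$ iff it is WEF-able under $\mathbf{w}'$.

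The algorithm I would propose is simply: compute $d=\gcdw$, form $\mathbf{w}'=\mathbf{w}/d$, and run Algorithm~\ref{alg:general-additive} on the instance $(N,M,v,\mathbf{w}')$. Since the entries of $\mathbf{w}'$ are integers, Theorem~\ref{theorem: sub general additive} applies directly and yields a WEF-able allocation $X$ together with a per-agent subsidy bound. Concretely, Theorem~\ref{theorem: sub general additive} guarantees that, with respect to the weight vector $\mathbf{w}'$, the subsidy needed by agent $i$ is at most $w_i' V = (w_i/d)V = w_i V/\gcdw$, and the total subsidy is at most $(W'-\wmin')V = \frac{W-\wmin}{d}V = (W-\wmin)V/\gcdw$, where $W'=\sum_i w_i' = W/d$ and $\wmin' = \wmin/d$. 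By the equivalence above, the very same allocation $X$ is WEF-able under the original weights $\mathbf{w}$.

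The one subtlety I would be careful about is that the optimal subsidy vector is \emph{not} itself scale-invariant: Theorem~\ref{thm:minsubsidy} sets $p_i^* = w_i \ell_i(X)$, and the path-cost $\ell_i(X)$ is itself computed in the envy-graph whose edge costs $\frac{v_i(X_j)}{w_j}-\frac{v_i(X_i)}{w_i}$ scale by $d$ when weights are divided by $d$. So under $\mathbf{w}'$ one gets $\ell_i'(X) = d\,\ell_i(X)$ and $p_i'^* = w_i'\ell_i'(X) = (w_i/d)(d\,\ell_i(X)) = w_i\ell_i(X) = p_i^*$; thus the minimum subsidy vector is in fact identical for the two weight vectors, which is exactly what makes the bound transfer. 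The main thing to state carefully is therefore that the bound $w_i V/\gcdw$ is a genuine improvement over the $w_iV$ bound of Theorem~\ref{theorem: sub general additive} applied naively to $\mathbf{w}$, and that this improvement is legitimate precisely because running the algorithm on the reduced weights $\mathbf{w}'$ produces an allocation that is still WEF-able for $\mathbf{w}$. There is essentially no hard obstacle here; the proof is a short reduction, and the only care needed is in tracking how $W$, $\wmin$, and the subsidies transform under the rescaling by $d$.
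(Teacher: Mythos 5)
Your proposal is correct and follows essentially the same route as the paper's proof: divide all weights by $\gcdw$, run Algorithm~\ref{alg:general-additive} on the rescaled integer weights, and invoke Theorem~\ref{theorem: sub general additive} together with the scale-invariance of the WEF condition. Your explicit verification that the minimal subsidy vector $p_i^* = w_i\ell_i(X)$ is itself invariant under the rescaling is a useful detail that the paper leaves implicit, but it does not change the argument.
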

\begin{proof}
Algorithm~\ref{alg:general-additive} works correctly, even if we divide each $w_i$ by the greatest common divisor of $w_i$'s.
In other words, letting $d={\rm gcd}(w_1,...,w_n)$, $w'_i=w_i/d$, $W'=W/d$, and running Algorithm 1 with $w'_i$'s, we get the bound $(W'-w'_{\min})V$ of the total subsidy.
\end{proof}

A discussion about the tightness of the bound can be found in Appendix \ref{alg:general-additive tightness}.

\subsection{Combining WEF-able and Approximate-WEF}
\label{sec:app_EF}
In the setting without money and additive valuations, WEF can be relaxed by allowing envy up to an upper bound based on item values. We adopt the generalization of allowable envy, \emph{WEF$(x,y)$}, proposed by \cite{chakraborty2022weighted}
\footnote{
The definition of WEF$(x,y)$ does not apply to non-additive valuations. \cite{montanari2024weighted} introduced two extensions to this definition; but they are outside the scope of our paper.
}, as well as another relaxation of WEF, \emph{WWEF1}, introduced in \cite{CISZ21}. 
\begin{definition}[\cite{chakraborty2022weighted}] For $x,y \in [0,1]$, an allocation $X$ is said to satisfy \emph{WEF(x,y)} if for any $i,j \in N$, there exists $B\subseteq X_j$ with $|B| \leq 1$ such that $\frac{v_i(X_i) + y  v_i(B)}{w_i} \geq \frac{v_i(X_j) - x  v_i(B)}{w_j}$.
\end{definition}
WEF$(x,y)$ captures various conditions related to WEF: 
\begin{itemize}
    \item 
 WEF$(0,0)$ corresponds to WEF.
    \item 
    WEF$(1,0)$ coincides with (strong) weighted envy-freeness up to one item (WEF1) (\cite{CISZ21}).
    \item 
    WEF$(1,1)$ coincides with weighted envy-free up to one item transfer (WEF1-T) (\cite{AGM23a,HSV23a}).
\end{itemize}

\begin{definition}[\cite{CISZ21}]
An allocation $X$ is said to be \emph{weakly weighted envy-free up to one item} (WWEF1) if for any pair of agents $i,j$ with $X_j\neq\emptyset$, there exists an item $o \in X_j$ such that either
$ \frac{v_i(X_i)}{w_i} \ge  \frac{v_i(X_j \setminus \{o\})}{w_j} $ or $\frac{v_i(X_i\cup\{o\})}{w_i} \ge  \frac{v_i(X_j)}{w_j}$.
\end{definition}

Halpern and Shah proved in \cite{HaSh19a} that, if an allocation $X$ is EF-able and EF1, 
the total subsidy of at most $(n-1)^2   V$ is sufficient.
The following theorem generalizes this result to the weighted setting.
\begin{theorem}
    \label{WEF-able and WEF(x,y) subsidy bound2}
   Let $X$ be both WEF-able and WEF$(x,y)$ for some $x,y \in [0,1]$. Then there exists a subsidy vector $\mathbf{p}$, such that $(X,\mathbf{p})$ WEF, with total subsidy at most $(x+y)\left(\frac{W}{\wmin} - 1 \right) (n-1)V$.
\end{theorem}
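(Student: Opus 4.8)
The plan is to bound the subsidy each agent requires by bounding the maximum-cost path in the weighted envy-graph $G_{X,w}$, using the structure imposed by the WEF$(x,y)$ condition to control the cost of each individual edge. By \Cref{thm:minsubsidy}, the subsidy to agent $i$ is $p_i = w_i \ell_i(X)$, where $\ell_i(X)$ is the cost of the highest-cost path starting at $i$. So the first step is to understand how large the cost of a single edge $cost_X(i,j) = \frac{v_i(X_j)}{w_j} - \frac{v_i(X_i)}{w_i}$ can be. The WEF$(x,y)$ condition gives, for each ordered pair $(i,j)$, an item $o \in X_j$ (taking $B = \{o\}$) such that $\frac{v_i(X_i) + y\, v_i(o)}{w_i} \geq \frac{v_i(X_j) - x\, v_i(o)}{w_j}$. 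Rearranging this inequality yields $\frac{v_i(X_j)}{w_j} - \frac{v_i(X_i)}{w_i} \leq \frac{x\, v_i(o)}{w_j} + \frac{y\, v_i(o)}{w_i}$. Since $v_i(o) \leq V$ and $w_i, w_j \geq \wmin$, each edge cost is at most $(x+y)\frac{V}{\wmin}$.

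The second step is to convert this per-edge bound into a per-path bound. A simple path in $G_{X,w}$ visits at most $n$ vertices, hence uses at most $n-1$ edges. Therefore $\ell_i(X) \leq (n-1)(x+y)\frac{V}{\wmin}$ for every agent $i$, giving $p_i = w_i \ell_i(X) \leq w_i (n-1)(x+y)\frac{V}{\wmin}$.

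The third step is to sum over all agents and exploit the fact, from part \ref{agent with 0 subsidy} of \Cref{thm:minsubsidy}, that at least one agent requires zero subsidy. Concretely, since the maximum-cost path is positive only for agents who actually envy, I would argue that the total subsidy is at most $\sum_{i \in N} w_i (n-1)(x+y)\frac{V}{\wmin}$ minus the contribution of the zero-subsidy agent, yielding $(W - \wmin)(n-1)(x+y)\frac{V}{\wmin} = (x+y)\left(\frac{W}{\wmin} - 1\right)(n-1)V$, which is exactly the claimed bound.

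The main obstacle I anticipate is the interaction between the $1/w_j$ and $1/w_i$ denominators in the rearranged edge-cost bound: unlike the unweighted case, the two terms are weighted differently, so I must be careful that bounding both $w_i$ and $w_j$ below by $\wmin$ is legitimate and does not lose a factor I cannot afford. A second subtlety is that the WEF$(x,y)$ guarantee fixes a \emph{different} witness item $o$ for each ordered pair, so when telescoping along a path I should bound each edge independently by its own witness rather than seeking a common item; this is why the clean per-edge bound, rather than a telescoping argument as in \Cref{lem:max-value}, is the right engine here. Finally, I would double-check that the WEF$(x,y)$ definition's allowance $|B| \le 1$ (including the possibility $B = \emptyset$ when $X_j = \emptyset$) still yields a valid bound, since an empty witness only makes the edge cost smaller and hence poses no difficulty.
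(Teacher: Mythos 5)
Your proposal is correct and follows essentially the same route as the paper's own proof: bound each edge cost by $(x+y)V/\wmin$ via the WEF$(x,y)$ witness, multiply by the at most $n-1$ edges of a path to bound $\ell_i(X)$, and invoke the zero-subsidy agent from \Cref{thm:minsubsidy} to replace $W$ by $W-\wmin$. The subtleties you flag (per-edge witnesses, both denominators bounded below by $\wmin$, empty witness sets) are handled exactly as the paper does, so no gap remains.
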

\begin{proof}
    $X$ is WEF$(x,y)$, so for all $i,j\in N$, there exists $B\subseteq X_j$ with $|B|\leq 1$ where
    \begin{align*}
        & \frac{v_i(X_j)}{w_j} - \frac{v_i(X_i)}{w_i} \leq \frac{y  v_i(B)}{w_i} + \frac{x  v_i(B)}{w_j} \le \frac{y  v_i(B)}{\wmin} + \frac{x  v_i(B)}{\wmin} = 
         \frac{(x+y)  v_i(B)}{\wmin} \le \\ &(x+y)\frac{V}{\wmin}.
    \end{align*}
    Any path contains at most $n-1$ arcs, that is, $p_i = w_i \ell_i \leq (x+y)\frac{w_i}{\wmin}   (n-1)  V$.
    
    By \Cref{thm:minsubsidy}, there is at least one agent that requires no subsidy, so the required total subsidy is at most $(x+y)\frac{W - \wmin}{\wmin} (n-1)  V = (x+y)\left(\frac{W}{\wmin} - 1 \right) (n-1)V$.
\end{proof}

\cite{Brustle2020} proved that the allocation resulting from their algorithm satisfies both EF and EF1. However, in the weighted setup, a WEF-able allocation may not satisfy WEF$(x,y)$ for any $x,y$ with $x+y<2$.
    This also holds for the allocation produced by \Cref{alg:general-additive}.

\begin{proposition} \label{app:general additive wef2}
For any $x, y\geq 0$ with $x+y<2$,
there exists a weight vector and an instance with additive valuations in which that every WEF-able allocation fails to satisfy WEF$(x, y)$ or WWEF1.
\end{proposition}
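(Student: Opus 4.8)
The plan is to exhibit a two-agent instance with \emph{identical items} and \emph{nearly-equal, large weights}, where indivisibility forces a small but unavoidable weighted envy that exceeds what WEF$(x,y)$ or WWEF1 can absorb. Concretely, fix a large integer $Q$ (to be calibrated against $x,y$), set weights $w_1=Q$ and $w_2=Q+1$ so that $W=2Q+1$, and take $k$ identical items, each valued at $c_1$ by agent $1$ and at $c_2$ by agent $2$ with $c_1>c_2>0$. Since the items are identical, an allocation is determined by $s:=|X_1|$, the number of items held by the lower-weight agent. The whole point is that with large weights the per-item ``forgiveness'' $v_2(B)/\wmin$ of the relaxed notions is tiny, while the envy created by rounding the ideal share is of the same order but carries a factor $2$ rather than $x+y$.

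First I would characterize the WEF-able allocations via \Cref{th:ef}. With two agents the only cycle in $G_{X,w}$ is $1\to 2\to 1$, whose cost equals $\frac{D(X_1)}{w_1}-\frac{D(X_2)}{w_2}$ for $D(\cdot)=v_2(\cdot)-v_1(\cdot)$; since $c_2<c_1$, the no-positive-cycle condition reduces to $s\ge \frac{w_1 k}{W}=\frac{Qk}{2Q+1}$. Hence the WEF-able allocations are exactly those with $s\ge s_{\min}:=\lceil Qk/(2Q+1)\rceil$. The second step is a number-theoretic choice of $k$: because $\gcd(Q,2Q+1)=1$, I can pick $k$ with $Qk\equiv 1 \pmod{2Q+1}$ (and one checks $2\le k\le 2Q$), which makes the fractional gap $\theta:=s_{\min}-\frac{Qk}{2Q+1}$ equal to $\frac{2Q}{2Q+1}$, i.e.\ as close to $1$ as desired.

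Finally I would evaluate the envy of the higher-weight agent $2$ toward agent $1$ in any WEF-able allocation. A direct computation gives $cost_X(2,1)=c_2\,\theta\,\frac{2Q+1}{Q(Q+1)}=\frac{2c_2}{Q+1}$ at $s=s_{\min}$, and this only grows for larger $s$, so it suffices to check $s_{\min}$. The envy that WEF$(x,y)$ tolerates between this ordered pair is $\frac{x\,v_2(B)}{w_1}+\frac{y\,v_2(B)}{w_2}=c_2\big(\frac{x}{Q}+\frac{y}{Q+1}\big)$, maximized by a single item $B$. Comparing the two shows WEF$(x,y)$ fails for the pair $(2,1)$ precisely when $Q(2-x-y)>x$, which holds for all large $Q$ because $x+y<2$. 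The same allocation violates WWEF1 for $(2,1)$ as soon as $\theta>\frac{w_2}{W}=\frac{Q+1}{2Q+1}$, i.e.\ whenever $Q>1$, since the best single-item adjustment (removing one item from agent $1$) only relaxes the threshold to $s\le \frac{Qk+(Q+1)}{2Q+1}$. Failure of any one ordered pair suffices to negate each notion, so every WEF-able allocation is defeated.

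The delicate point, and the main obstacle, is uniformity over all admissible $(x,y)$: the gap $2-(x+y)$ that I am exploiting shrinks as $x+y\to 2$, so I must take $Q$ large enough (namely $Q\ge 2$ and $Q>\frac{x}{2-x-y}$) to keep $Q(2-x-y)>x$. This is exactly where the hypothesis $x+y<2$ is used, and why the boundary case WEF$(1,1)$ escapes, matching the positive two-agent result stated earlier. A secondary subtlety is making one instance defeat both notions at once; since the two thresholds are $Q(2-x-y)>x$ and $Q>1$, a single large $Q$ satisfies both. I would also verify the degenerate endpoint $s=k$ (agent $2$ empty) separately, but monotonicity of the pair-$(2,1)$ inequality in $s$ means checking $s_{\min}$ already covers every WEF-able allocation.
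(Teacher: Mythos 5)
Your proof is correct, but it is a genuinely different construction from the paper's. The paper also uses two agents with nearly-equal weights and identical items, but only \emph{two} items, valued at $1$ each by the low-weight agent and $2$ each by the high-weight agent: then the \emph{unique} WEF-able allocation gives both items to the high-weight agent, and the whole argument is a short computation --- the low-weight agent's WEF$(x,y)$ condition reads $x + \frac{w_2}{w_1}y \ge 2$, which fails once $x+y<2$ and $w_2/w_1$ is close to $1$, while WWEF1 fails whenever $w_2 < 2w_1$. You invert the valuation structure (your \emph{low}-weight agent values items more), so that WEF-ability, via the two-agent cycle condition of \Cref{th:ef}, forces a near-proportional split $s \ge \lceil w_1 k/W \rceil$; the unavoidable envy is then the rounding gap $\theta$, which you push toward $1$ by the number-theoretic choice $Qk \equiv 1 \pmod{2Q+1}$, and the envious pair is $(2,1)$ rather than $(1,2)$. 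This costs you machinery the paper does not need --- the arithmetic choice of $k$, and a monotonicity-in-$s$ argument because your instance admits a whole family of WEF-able allocations rather than a single forced one --- but it buys a somewhat more robust picture: the failure is not an artifact of one extreme allocation, and your cycle characterization is exactly the two-agent case of \Cref{lem:WEFability-additive-identical-items}, tying this proposition to the identical-items lower bound of \Cref{thm:subsidy-lb-additive-identical-items}, which likewise exploits rounding of proportional shares under nearly-equal weights. Both arguments ultimately rest on the same quantitative fact --- envy worth roughly two items per unit weight cannot be absorbed by a tolerance worth $x+y<2$ items --- and both correctly isolate why the boundary case WEF$(1,1)$ escapes, consistent with \Cref{thm:WEFability-WEF1T-additive-two-agents}.
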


\begin{proof}
Consider an instance with two identical items and two agents with weights $w_1<w_2$.
Agent 1 values each item at $1$ and agent 2 values each item at $2$.

The only WEF-able allocation is the one giving both items to agent 2: $X_1 = \emptyset, X_2 = \{o_1,o_2\}$. (If we allocate one item to each agent, the cost of the cycle between those agents would be $\left(\frac{1}{w_2} - \frac{1}{w_1}\right)\left(v_2(o) - v_1(o)\right)$, which is positive).

For this allocation to be WEF$(x,y)$, it must satisfy 
\begin{align*}
&
\frac{v_1(X_1) + y  v_1(o_1)}{w_1} \geq \frac{v_1(X_2) - x  v_1(o_1)}{w_2}    
\iff 
\frac{y}{w_1} \geq \frac{2 - x}{w_2}
\iff
x + \frac{w_2}{w_1} y \geq 2.
\end{align*}
Hence, if $x+y<2$ and $\frac{w_2}{w_1}$ is sufficiently close to 1, then $X$ fails WEF$(x,y)$.

Moreover, $X$ does not satisfy WWEF1, as $0 = \frac{v_1(X_1)}{w_1} < \frac{v_1(X_2 \setminus \{o\})}{w_2} = 1$ and $\frac{1}{w_1} =\frac{v_1(X_1 \cup \{o\})}{w_1} < \frac{v_1(X_2)}{w_2} = \frac{2}{w_2}$, for any $o\in X_2$, whenever $w_2 < 2 w_1$.
\end{proof}

We also observe that even for 2 agents, the \emph{Weighted Picking Sequence Protocol} (\cite{CISZ21}), which outputs a WEF$(1,0)$ allocation for any number of agents with additive valuations, is not WEF-able: 


\begin{example}
\label{exm:picking-sequence-not-wefable}
Suppose there is one item $o$ and $v_1(o)=2$, $v_2(o)=1$, $w_1=1$, and $w_2=4$. 
Agent 2 gets the first turn and gets $o$. 
Agent 2 gets the first turn and gets $o$. 
The envy of agent 1 toward agent 2 is $2/4$, while the envy of agent 2 toward agent 1 is $-1/4$. Thus, the cycle between these agents has a positive cost: $2/4-1/4 = 1/4 > 0$. By \Cref{th:ef}, the resulting allocation is not WEF-able.
\end{example}

\paragraph{\textbf{Biased Weighted Adjusted Winner Procedure}}
\Cref{app:general additive wef2} is partly complemented by the result below (Theorem~\ref{thm:WEFability-WEF1T-additive-two-agents}) that states that WEF-ability and WEF(1,1) are compatible for two agents having additive valuations. 
The theorem uses a particular version of the \emph{Weighted Adjusted Winner} procedure (\cite{CISZ21}). The original procedure finds a WEF$(1,0)$ and Pareto efficient allocation.
We call our variant \emph{Biased Weighted Adjusted Winner Procedure}, as it is biased towards the agent who expresses a higher value for a `contested' item. The resulting allocation may not be WEF$(1,0)$, but it is WEF$(1,1)$ and WEF-able.

We first observe that for two agents, the WEF condition for each agent $i$ is equivalent to the \emph{weighted proportionality} (WPROP) condition:
$v_i(X_i) \geq \frac{w_i}{W}\cdot v_i(M)$.

\begin{enumerate}
\item Normalize the valuations so that the sum of values over all items is $1$ for both agents. 
Sort the items such that $\frac{v_1(o_1)}{v_2(o_1)} \ge \frac{v_1(o_2)}{v_2(o_2)} \ge \dots \ge \frac{v_1(o_m)}{v_2(o_m)}$.
\item 
Let $d \in \{1,2,...,m\}$ be the unique number satisfying $\frac{1}{w_1}\sum_{r=1}^{d-1}v_1(o_r) < \frac{1}{w_2}\sum_{r=d}^{m}v_1(o_r)$ and $\frac{1}{w_1}\sum_{r=1}^{d}v_1(o_r) \ge \frac{1}{w_2}\sum_{r=d+1}^{m}v_1(o_r)$. We call $o_d$ the \emph{contested object}.
\item 
Denote $L := \{o_1,\ldots, o_{d-1}\}$ and $R := \{o_{d+1},\ldots, o_{m}\}$ (the ``Left" and ``Right" parts); note that each of them might be empty.
Give $L$ to agent $1$ and $R$ to agent $2$.
\item 
Finally, give $o_d$ to the agent $i$ with largest $v_i(o_d)$ (break ties arbitrarily).
\end{enumerate}

\begin{theorem}\label{thm:WEFability-WEF1T-additive-two-agents}
The outcome of the \emph{Biased Weighted Adjusted Winner Procedure} is both WEF$(1,1)$ and WEF-able.
\end{theorem}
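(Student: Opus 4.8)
The plan is to treat WEF-ability and WEF$(1,1)$ separately, exploiting the two defining inequalities of the contested object $o_d$ (call them (i), the strict ``left'' inequality $\frac{1}{w_1}\sum_{r<d}v_1(o_r) < \frac{1}{w_2}\sum_{r\ge d}v_1(o_r)$, and (ii), the ``right'' inequality $\frac{1}{w_1}\sum_{r\le d}v_1(o_r) \ge \frac{1}{w_2}\sum_{r>d}v_1(o_r)$), together with the normalization $v_1(M)=v_2(M)=1$ and the ratio-sorting of the items.

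For WEF-ability I would first reduce everything to a single scalar inequality. Writing $\delta(S) := v_1(S) - v_2(S)$, normalization gives $\delta(M)=0$, so $\delta(X_2)=-\delta(X_1)$; hence the two hypotheses of \Cref{lem:max-value} for two agents, namely $v_1(X_1)\ge v_2(X_1)$ and $v_2(X_2)\ge v_1(X_2)$, are \emph{both} equivalent to the single inequality $\delta(X_1)\ge 0$. Thus it suffices to prove $\delta(X_1)\ge 0$, after which \Cref{lem:max-value} yields WEF-ability. To obtain $\delta(X_1)\ge 0$ I use the tie-breaking rule that $o_d$ goes to the agent valuing it most, so that $o_d\to$ agent $1$ means its ratio $\theta_d:=v_1(o_d)/v_2(o_d)\ge 1$, while $o_d\to$ agent $2$ means $\theta_d\le 1$. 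In the first case every item of $L$ has ratio $\ge\theta_d\ge 1$, hence $v_1\ge v_2$ item-by-item, so every item of $X_1=L\cup\{o_d\}$ favors agent $1$ and $\delta(X_1)\ge 0$; symmetrically, if $o_d\to$ agent $2$ then every item of $X_2=R\cup\{o_d\}$ has ratio $\le\theta_d\le 1$, giving $\delta(X_2)\le 0$ and again $\delta(X_1)\ge 0$.

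For WEF$(1,1)$ I would split on the recipient of $o_d$. A useful preliminary, extracted from the strict inequality (i), is $v_1(L)<w_1/W$; combined with the item-by-item domination above (which gives $v_2(L)\le v_1(L)$), this yields $v_2(L)\le w_1/W$. When $o_d\to$ agent $1$: agent $1$ is WEF$(0,0)$ towards agent $2$ directly by (ii), and for agent $2$ towards agent $1$ I take $B=\{o_d\}\subseteq X_1$ and check that $\frac{v_2(X_2)+v_2(o_d)}{w_2}\ge\frac{v_2(X_1)-v_2(o_d)}{w_1}$ reduces, using $v_2(M)=1$, exactly to $v_2(L)\le w_1/W$. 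When $o_d\to$ agent $2$: agent $1$ towards agent $2$ is WEF$(1,1)$ with $B=\{o_d\}\subseteq X_2$, where the needed inequality $\frac{v_1(L)+v_1(o_d)}{w_1}\ge\frac{v_1(R)}{w_2}$ is precisely (ii); and agent $2$ towards agent $1$ is in fact WEF$(0,0)$, since that inequality again reduces to $v_2(L)\le w_1/W$, which holds because WEF-ability already gave $v_2(L)\le v_1(L)$ and (i) gave $v_1(L)<w_1/W$.

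The main obstacle is that (i) and (ii) constrain only agent $1$'s valuation of the parts $L,R$, whereas bounding the envy of agent $2$ (when $o_d\to$ agent $1$) or confirming agent $2$ is non-envious (when $o_d\to$ agent $2$) requires a statement about $v_2$. The crux is transferring the agent-$1$ bound $v_1(L)<w_1/W$ into the agent-$2$ bound $v_2(L)\le w_1/W$, and this is exactly where the ratio-sorting is indispensable, since it is what guarantees $v_2(L)\le v_1(L)$. I would also dispose of the degenerate cases where some $v_2(o)=0$ (infinite ratio, item placed at the extreme left, with $v_1(o)\ge 0=v_2(o)$ preserving all item-wise comparisons) and confirm that $d$ is well defined, both of which follow from the monotonicity of the two cumulative sums in $d$.
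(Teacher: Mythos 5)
Your proof is correct, and it takes a genuinely different route from the paper's, most visibly in the WEF-ability half. The paper proves WEF-ability \emph{constructively}: it chooses a fraction $r$ of the contested item so that agent $1$ becomes exactly indifferent under the subsidies $p_1=r\cdot v_1(o_d)$, $p_2=0$, and then verifies agent $2$'s WEF inequality for this explicit outcome, using the aggregate (mediant) ratio inequality between $L$ and $R\cup\{o_d\}$ together with $v_1(o_d)\le v_2(o_d)$ from the tie-breaking rule. You avoid subsidies altogether by checking the hypothesis of \Cref{lem:max-value}: under the normalization $v_1(M)=v_2(M)=1$, both domination conditions collapse to the single inequality $v_1(X_1)-v_2(X_1)\ge 0$, which then follows item-by-item from the ratio sorting anchored at $\theta_d=v_1(o_d)/v_2(o_d)$ and the tie-breaking rule. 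This is shorter and more elementary, though it does not deliver what the paper's construction does, namely an explicit witnessing subsidy vector (the fractional-split payment to the agent who loses the contested item). On the WEF$(1,1)$ half the two arguments coincide for agent $1$ (both are exactly your inequality (ii)), and differ for agent $2$: the paper combines the mediant inequality $\frac{v_1(L)}{v_2(L)}\ge\frac{v_1(R\cup\{o_d\})}{v_2(R\cup\{o_d\})}$ with the strict inequality (i), while you use complementation, $v_2(R\cup\{o_d\})=1-v_2(L)$, to reduce agent $2$'s condition to $v_2(L)\le w_1/W$ and then chain $v_2(L)\le v_1(L)<w_1/W$; both exploit the ratio sorting, yours through the bound $v_2(L)\le v_1(L)$ (item-wise in one case, via your $\delta$-identity and WEF-ability in the other), the paper's through the aggregate ratio. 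A minor additional merit of your write-up is that it explicitly treats the degenerate items with $v_2(o)=0$, which the paper's ratio manipulations silently assume away.
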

\begin{proof}
The WEF$(1,1)$ for agent 1 follows immediately from the definition of $o_d$, as 
\begin{align*}
\frac{1}{w_1}v_1(L\cup \{o_d\})\geq \frac{1}{w_2}v_1(R).    
\end{align*}
If $o_d$ is allocated to 1 then 1 does not envy at all,  otherwise 1 stops envying after $o_d$ is transferred to him.

As for agent 2, the item ordering implies that
\begin{align*}
&
\frac{v_1(L)}{v_2(L)} \geq \frac{v_1(R\cup \{o_d\})}{v_2(R\cup \{o_d\})}
\iff
\frac{v_2(R\cup \{o_d\})}{v_2(L)} \geq \frac{v_1(R\cup \{o_d\})}{v_1(L)}
\end{align*}
By the definition of $o_d$, 
$\frac{v_1(L)}{v_1(R\cup \{o_d\})} < \frac{w_1}{w_2}$.
Combining this with the above inequality implies
\begin{align*}
\frac{v_2(R\cup \{o_d\})}{v_2(L)} > \frac{w_2}{w_1}
\iff
\frac{1}{w_2}v_2(R\cup \{o_d\}) > \frac{1}{w_1}v_2(L).
\end{align*}
An analogous argument to that used for agent 1 shows that the WEF$(1,1)$ condition is satisfied for 2 too.

Next, we prove that the outcome is WEF-able. 

Suppose $o_d$ is given to agent 2, that is, $v_2(o_d)\geq v_1(o_d)$. By definition of $o_d$, there exists a fraction $r\in [0,2]$ such that
\begin{align*}
\frac{1}{w_1}(v_1(L) + r\cdot v_1(o_d)) = \frac{1}{w_2}(v_1(R\cup \{o_d\}).
\end{align*}
Let the subsidy to agent 1 be $p_1 := r\cdot v_1(o_d)$, and the subsidy to agent 2 be $p_2 := 0$. The resulting allocation is WEF for agent 1 by definition. 

As for agent 2, the item ordering implies 
\begin{align*}
\frac{v_1(L) + r v_1(o_d)}{v_2(L)+ r v_2(o_d)} \geq \frac{v_1(R\cup \{o_d\})}{v_2(R\cup \{o_d\})} 
\iff
v_2(L)+r v_2(o_d) &\leq \frac{v_2(R\cup \{o_d\})}{v_1(R\cup \{o_d\})} (v_1(L)+r v_1(o_d))
\\
& = \frac{w_1}{w_2}v_2(R\cup \{o_d\}).
\end{align*}
As $v_1(o_d)\leq v_2(o_d)$, agent 2 values agent 1's bundle at
\begin{align*}
v_2(L) + p_1 
=
v_2(L) + r \cdot v_1(o_d)
\leq v_2(L) + r\cdot v_2(o_d),
\end{align*}
which by the above inequality is at most $\frac{w_1}{w_2}v_2(R\cup \{o_d\})$. Hence, the allocation is WEF for agent 2 too.

The case that $o_d$ is allocated to agent 1 can be proved in a similar way. 
\end{proof}

It remains open whether weighted envy-freeability and WEF(1,1) are compatible for $n\geq 3$ agents. 

\section{Identical Additive Valuation}
\label{sec:identical_additive_val}
This section deals with the case where all agents have identical valuations, that is, $v_i\equiv v$ for all $i\in N$.
With identical valuations, any allocation is WEF-able by \Cref{lem:max-value}.
Furthermore, all allocations are non-wasteful. 
We present a polynomial-time algorithm for finding a WEF-able allocation with a subsidy bounded by $V$ per agent and a total subsidy bounded by $(n-1)V$. 
The following shows that this bound is tight for any weight vector:
\begin{theorem}\label{thm:subsidy-lb-identical-additive}
For identical additive valuations, 
for any integer weights vector, there exists an instance where, for any WEF-able allocation, at least one agent requires subsidy at least $V$, and the total subsidy is at least $(n-1)V$.
\end{theorem}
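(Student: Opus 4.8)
The plan is to build, for a given integer weight vector $\ww$, an instance with identical additive valuations in which \emph{every} allocation (all are WEF-able, by \Cref{lem:max-value}) needs total subsidy at least $(n-1)V$. First I would pin down the exact minimum subsidy when valuations are identical. Writing $a_i := v(X_i)/w_i$, the definition of edge cost gives $cost_X(i,j)=\frac{v(X_j)}{w_j}-\frac{v(X_i)}{w_i}=a_j-a_i$ for \emph{every} ordered pair, so every path cost telescopes and depends only on its endpoints; the maximum-cost path out of $i$ therefore has value $\ell_i(X)=a_{\max}-a_i$, where $a_{\max}:=\max_k a_k$ (and every cycle has cost $0$, re-confirming WEF-ability via \Cref{th:ef}). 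By \Cref{thm:minsubsidy} the minimum subsidy vector is then $p^*_i=w_i\,\ell_i(X)=w_i\,a_{\max}-v(X_i)$, and summing yields the clean identity $\sum_i p^*_i = W\,a_{\max}-v(M)$.

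With this identity, proving the lower bound reduces to forcing $a_{\max}$ to be large in \emph{every} allocation. I would take the instance consisting of exactly $m:=W-n+1$ items, each valued at $V$ by all agents; then $v(M)=(W-n+1)V$ and $V$ is indeed the largest item value. For any allocation set $k_i:=|X_i|$, so $a_i=k_iV/w_i$. The decisive counting step: if every agent had $a_i<V$, i.e. $k_i<w_i$, then integrality of the weights forces $k_i\le w_i-1$ for all $i$, whence $\sum_i k_i\le\sum_i(w_i-1)=W-n<m$, contradicting $\sum_i k_i=m$. Hence in every allocation some agent holds at least $w_{i^*}$ items, giving $a_{\max}\ge V$. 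Substituting into the identity gives $\sum_i p^*_i = W\,a_{\max}-v(M)\ge WV-(W-n+1)V=(n-1)V$ for every allocation, which is the total-subsidy claim.

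For the per-agent statement I would combine the total bound with part~(4) of \Cref{thm:minsubsidy}, which guarantees that in the minimum subsidy vector some agent $j$ has $p^*_j=0$. The remaining $n-1$ agents must then absorb the entire total, $\sum_{i\neq j}p^*_i\ge(n-1)V$, so by averaging at least one of them has $p^*_i\ge V$. This holds for every WEF-able allocation, completing both assertions.

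The heart of the argument, and the step I expect to be the main obstacle, is selecting the correct number of items: $m=W-n+1$ is exactly the threshold at which a ``balanced'' integral allocation with $k_i\le w_i-1$ for all $i$ becomes infeasible by a single item, which is what pins $a_{\max}\ge V$ no matter how the planner allocates. This is where the proof genuinely uses the weights being integers, and it explains why a single high-value item (the unweighted worst case) is not enough here: one item only yields $\left(\tfrac{W}{\wmax}-1\right)V$, so the multi-item padding calibrated to $W-n+1$ is essential.
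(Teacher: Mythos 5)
Your proposal is correct and takes essentially the same approach as the paper: the paper's proof uses the identical instance ($m = 1 + \sum_{i}(w_i - 1) = W - n + 1$ items, all valued at $V$) and the same underlying counting insight that integer weights force some agent to hold at least $w_i$ items, hence value-per-entitlement at least $V$. If anything, your write-up is the more rigorous rendering — the paper asserts directly that each agent must end with utility $w_i V$ and computes the subsidy of the minimizing allocation, whereas you justify this via the explicit identity $\sum_i p^*_i = W a_{\max} - v(M)$, the pigeonhole step, and the averaging argument combined with part (4) of the minimum-subsidy theorem.
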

\begin{proof}
    Consider $n$ agents with integer weights $w_1 \leq \cdots \leq w_n$ and $m = 1 + \sum_{i\in N} \left( w_i - 1 \right)$ items all valued at $V$.
    
    To avoid envy, each agent $i$ should receive a total utility of $w_i V$, so the sum of all agents' utilities would be $W V$. 
    
    As the sum of all values is $(W-(n-1))V$, a total subsidy of at least $(n-1)V$ is required
    (to minimize the subsidy per agent, each agent $i\in N$ should receive $w_i - 1$ items, except for the agent with the highest entitlement (agent $n$), who should receive $w_n$ items.
    
    The value per unit entitlement of each agent $i<n$ is $V(w_i-1)/w_i$, 
    and for agent $n$ it is $V$.
    Therefore, to avoid envy, each agent $i<n$ should receive a subsidy of $w_i \left(1 - \frac{w_i - 1}{w_i}\right)V = V$ and the total subsidy required is $(n-1)V$.
\end{proof}
\Cref{alg:one-subsidy-identical-additive} presents our WEF-able allocation method with bounded subsidy.

The algorithm traverses the items in an arbitrary order. At each iteration it selects the agent that minimizes the expression $\frac{v(X_i\cup{\{o\}})}{w_i}$, with ties broken in favor of the agent with the larger $w_i$, and allocates the next item to that agent.
Intuitively, this selection minimizes the likelihood that weighted envy is generated.

\begin{algorithm}[h]
\caption{Weighted Sequence Protocol For Additive Identical Valuations}\label{alg:one-subsidy-identical-additive}
\KwIn{Instance $(N,M,v, \mathbf{w})$ with additive identical valuations.}
\KwOut{WEF-able allocation $X$ with total required subsidy of at most $(n-1)  V$.}
$X_{i} \gets \emptyset, \forall i\in N$\;
\For{$o: 1$ to $m$}{
 $U \gets \arg\min_{i\in N}\frac{v\left(X_i \cup \{o\}\right)}{w_i}$\;
$u \gets \max_{i\in U}\left(i\right)$\;
 Add $o$ to $X_u$
}
\Return $X$
\end{algorithm}

The following example illustrates \Cref{alg:one-subsidy-identical-additive}:
\begin{example}\label{example:additive identical}
Consider two agents, denoted as $i_1$ and $i_2$, with corresponding weights $w_1 = 1$ and $w_2 = \frac{7}{2}$, and three items, namely $o_1, o_2, o_3$, with valuations $v(o_1) = v(o_2) = v(o_3) = 1$, \Cref{alg:one-subsidy-identical-additive} is executed as follows:
\begin{enumerate}
\item for the first iteration, the algorithm compares $\frac{v(o_{1})}{w_1} = 1 $ and $\frac{v(o_{1})}{w_2} = \frac{2}{7}$. Consequently, the algorithm allocates item $o_1$ to agent $i_2$, resulting in $X_1 = \emptyset$ and $X_2 = \{o_1\}$.
\item for the second iteration, the algorithm compares $\frac{v(o_{2})}{w_1} = 1$ and $ \frac{v(X_2 \cup \{o_{2}\})}{w_2} = \frac{2}{\frac{7}{2}} = \frac{4}{7}$. Subsequently, the algorithm allocates item $o_2$ to agent $i_2$, resulting in $X_1 = \emptyset$ and $X_2 = \{o_1, o_2\}$.
\item for the third iteration, the algorithm compares $\frac{v(o_3)}{w_1} = 1$ and $\frac{v(X_2 \cup \{o_3\})}{w_2} = \frac{3}{\frac{7}{2}} = \frac{6}{7}$, Consequently, item $o_3$ is allocated to agent $i_2$, resulting in $X_1 = \emptyset$ and $X_2 = \{o_1, o_2, o_3\}$.
\end{enumerate}
Now, agent $i_1$ envies agent $i_2$ by an amount of $\frac{v(X_2)}{w_2}-\frac{v(X_1)}{w_1} = \frac{3}{\frac{7}{2}} = \frac{6}{7}$, and conversely, agent $i_2$ envies agent $i_1$ by $\frac{v(X_1)}{w_1}-\frac{v(X_2)}{w_2} = -\frac{6}{7}$
In order to mitigate envy, $p_{1} = \frac{6}{7}$ and $p_{2} = 0$.
\end{example}

We start by observing that, with identical valuations, the cost of any path in the weighted envy graph is determined only by the agents at the endpoints of that path.
\begin{observation}
\label{cost_of_path_identical_valuatoins}
Given an instance with identical valuations,
let $X$ be any allocation, and denote by $P$ any path in the weighted envy-graph of $X$ between agents $i,j\in N$.
\begin{align*}
    cost_X(P) = \frac{v(X_j)}{w_j}-\frac{v(X_i)}{w_i}
\end{align*}
This is because the path cost is 
$
    \sum_{(h,k)\in P} cost_X(h,k)= \sum_{(h,k)\in P}\frac{v(X_k)}{w_k}-\frac{v(X_h)}{w_h},
$ 
and the latter sum is a telescopic sum that reduces to the difference of its last and first  element.
\end{observation}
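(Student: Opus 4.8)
The plan is to expand the definition of the path cost directly and observe that the identical-valuations assumption turns every edge cost into the difference of a vertex-indexed potential, after which the sum telescopes.

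First I would write the path as $P = (i = i_1, i_2, \ldots, i_r = j)$ and recall the edge cost in the weighted envy-graph, $cost_X(h,k) = \frac{v_h(X_k)}{w_k} - \frac{v_h(X_h)}{w_h}$. The crucial step is to invoke $v_h \equiv v$ for all $h \in N$: under identical valuations the value assigned to a bundle no longer depends on the observing agent, so $v_h(X_k) = v(X_k)$ and $v_h(X_h) = v(X_h)$ along every edge. Hence each edge cost reduces to $\frac{v(X_k)}{w_k} - \frac{v(X_h)}{w_h}$, i.e. the difference of the quantity $\phi(k) := \frac{v(X_k)}{w_k}$ evaluated at the head and the tail of the edge.

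With this potential-function form in hand, summing along $P$ yields $cost_X(P) = \sum_{s=1}^{r-1}\left(\phi(i_{s+1}) - \phi(i_s)\right)$, a telescoping sum that collapses to $\phi(i_r) - \phi(i_1) = \frac{v(X_j)}{w_j} - \frac{v(X_i)}{w_i}$, which is exactly the claimed expression. Note that the intermediate agents $i_2,\ldots,i_{r-1}$ cancel, so the cost is independent of the route taken between $i$ and $j$.

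There is essentially no obstacle here; the only point worth stating explicitly is \emph{why} the identical-valuations hypothesis is what makes the argument work. It is precisely this hypothesis that decouples the observer from the bundle value, turning the per-edge contribution into a pure difference of potentials. For general valuations the term $v_h(X_k)$ genuinely depends on $h$, so no such collapse occurs and the path cost is not determined by its endpoints alone. The telescoping itself is routine arithmetic.
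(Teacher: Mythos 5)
Your proof is correct and follows essentially the same route as the paper: expand each edge cost, use $v_h \equiv v$ to reduce it to $\frac{v(X_k)}{w_k} - \frac{v(X_h)}{w_h}$, and telescope along the path. The potential-function phrasing and the remark on why identical valuations are essential are just a more explicit presentation of the paper's one-line argument.
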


\Cref{example:additive identical} illustrates that the resulting allocation may not be \emph{WEF(1,0)} ---
\Cref{alg:one-subsidy-identical-additive} might allocate all items to the agent with the highest entitlement.
However, the outcome is always WEF$(0,1)$:
\begin{proposition} \label{identical additive wef01}
    For additive identical valuations, \Cref{alg:one-subsidy-identical-additive} computes a WEF$(0,1)$ allocation.
\end{proposition}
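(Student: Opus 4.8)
The plan is to exploit the greedy structure of \Cref{alg:one-subsidy-identical-additive}: each agent's bundle only grows over time, and at every iteration the current item is handed to the agent minimizing the resulting weighted value. Fix an ordered pair of agents $i,j$. Since $v_i \equiv v$, the WEF$(0,1)$ condition for $(i,j)$ asks for a set $B\subseteq X_j$ with $|B|\le 1$ satisfying $\frac{v(X_i)+v(B)}{w_i}\ge \frac{v(X_j)}{w_j}$. First I would dispose of the trivial case $X_j=\emptyset$: taking $B=\emptyset$ makes the right-hand side $0$, so the inequality holds by nonnegativity of $v$.

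For the main case the key step is to let $o^{*}$ be the \emph{last} item that the algorithm assigns to agent $j$ during its run, and to set $B=\{o^{*}\}$. Because no item is ever removed from a bundle, $o^{*}$ is exactly the item whose addition completes $X_j$; writing $X_j^{-}$ for $j$'s bundle immediately before $o^{*}$ is assigned, we have $X_j = X_j^{-}\cup\{o^{*}\}$. Since $j$ was chosen at that iteration as the arg-min of $\frac{v(X_k\cup\{o^{*}\})}{w_k}$ (the tie-breaking rule only determines \emph{which} minimizer is picked, not the value of the minimum), for every other agent, and in particular for $i$ with its bundle $X_i^{-}$ at that moment, we obtain
\[
\frac{v(X_j)}{w_j} \;=\; \frac{v(X_j^{-}\cup\{o^{*}\})}{w_j} \;\le\; \frac{v(X_i^{-}\cup\{o^{*}\})}{w_i}.
\]

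It then remains to pass from the intermediate bundle $X_i^{-}$ to the final bundle $X_i$. As bundles only grow, $X_i^{-}\subseteq X_i$, and since $o^{*}$ is assigned to $j$ it never enters $X_i$; additivity then gives $v(X_i^{-}\cup\{o^{*}\}) = v(X_i^{-})+v(o^{*}) \le v(X_i)+v(o^{*})$. Chaining this with the displayed inequality yields $\frac{v(X_j)}{w_j} \le \frac{v(X_i)+v(o^{*})}{w_i}$, which is precisely WEF$(0,1)$ for $(i,j)$ with witness $B=\{o^{*}\}\subseteq X_j$.

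I do not anticipate a serious obstacle. The only points needing care are the empty-bundle base case and confirming that the greedy minimality inequality survives the tie-breaking rule --- which it does, since a tie is resolved among agents all attaining the same minimum weighted value, so the chosen agent's value is still $\le$ that of any competitor. The crux is simply the observation that selecting the \emph{last} item added to $j$ converts the local minimality enforced at one iteration into a global per-pair guarantee, using only the monotonic growth of $i$'s bundle together with additivity.
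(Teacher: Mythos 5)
Your proof is correct and is in substance the same as the paper's: the paper proves the statement by induction over the algorithm's iterations, and its inductive step is exactly your key inequality --- the selection (arg-min) rule applied at the iteration in which agent $j$ receives its most recent item, with that item serving as the WEF$(0,1)$ witness $B$, preserved afterwards because $X_i$ only grows. Unrolling that induction into a direct per-pair argument, as you do, is a purely presentational difference.
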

\begin{proof}
We prove by induction that at each iteration, $X$ satisfies WEF$(0,1)$. 
The claim is straightforward for the first iteration. Assume the claim holds for the $(t-1)$-th iteration, and prove it for the $t$-th iteration. Let $o$ be the item assigned in this iteration and $u$ be the agent receiving this item.
Agent $u$ satisfies WEF$(0,1)$ due to the induction hypothesis.
For $i \neq u$, by the selection rule, $\frac{v(X_u)}{w_u} \leq \frac{v(X_i \cup \{o\})}{w_i}$.
This is exactly the definition of WEF$(0,1)$.
\end{proof}
\begin{proposition}
\label{prop:wef01 upper bound}
With identical additive valuations, for every WEF$(0,1)$ allocation $X$,
$\ell_i(X) \leq \frac{V}{w_i}$, for all $i\in N$.
\end{proposition}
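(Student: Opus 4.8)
The plan is to exploit the identical-valuations structure, which makes the cost of a path collapse to a function of its two endpoints only. First I would invoke \Cref{cost_of_path_identical_valuatoins}: for identical valuations, any path $P$ from $i$ to $j$ in $G_{X,w}$ satisfies $cost_X(P) = \frac{v(X_j)}{w_j} - \frac{v(X_i)}{w_i}$. Consequently the maximum-cost path starting at $i$ is obtained simply by choosing the endpoint $j$ maximizing $\frac{v(X_j)}{w_j}$; writing $j^* \in \arg\max_{j\in N} \frac{v(X_j)}{w_j}$, this yields the closed form
$$\ell_i(X) = \frac{v(X_{j^*})}{w_{j^*}} - \frac{v(X_i)}{w_i}.$$
Taking $j = i$ shows this quantity is always non-negative, which matches the trivial zero-length path and is consistent with $\ell_i(X)$ generating a valid non-negative subsidy.

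Next I would feed the pair $(i, j^*)$ into the WEF$(0,1)$ hypothesis. Since $X$ is WEF$(0,1)$ and $v_i \equiv v$, instantiating the definition with $x=0$, $y=1$ gives a set $B \subseteq X_{j^*}$ with $|B| \le 1$ such that $\frac{v(X_i) + v(B)}{w_i} \ge \frac{v(X_{j^*})}{w_{j^*}}$. Rearranging,
$$\frac{v(X_{j^*})}{w_{j^*}} - \frac{v(X_i)}{w_i} \le \frac{v(B)}{w_i}.$$
The left-hand side is exactly $\ell_i(X)$ by the first step, so $\ell_i(X) \le \frac{v(B)}{w_i}$.

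Finally I would bound $v(B)$. Because $|B| \le 1$, either $B = \emptyset$, in which case $v(B)=0$, or $B$ is a single item whose value is at most $V$ by the definition of $V$. In either case $v(B) \le V$, which gives $\ell_i(X) \le \frac{V}{w_i}$, as claimed.

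I do not anticipate a serious obstacle here: the argument is short once the telescoping \Cref{cost_of_path_identical_valuatoins} is available. The only point requiring care is the choice of the maximizing endpoint $j^*$ --- one must apply the WEF$(0,1)$ guarantee precisely to the pair realizing the longest path, not to an arbitrary pair, so that the single-item slack $v(B)$ controls exactly the path cost $\ell_i(X)$. Confirming that the trivial path introduces no complication (the maximum is attained at $j=i$ when no agent has strictly higher weighted value) is a routine sanity check.
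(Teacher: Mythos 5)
Your proof is correct and follows essentially the same route as the paper's: both use the telescoping observation (\Cref{cost_of_path_identical_valuatoins}) to write $\ell_i(X)$ as the difference of weighted bundle values at the endpoints of the highest-cost path, then apply the WEF$(0,1)$ condition to that endpoint pair and bound the single-item slack by $V$. Your explicit identification of the endpoint as $j^* \in \arg\max_j \frac{v(X_j)}{w_j}$ is just a slightly more explicit phrasing of the paper's choice of the path $P_i(X)=(i,\ldots,j)$.
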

\begin{proof} 
For each agent $i\in N$, 
denote the highest-cost path starting at $i$ in that graph by $P_{i}(X) = (i, ..., j)$ for some agent $j\in N$. 
Then by \Cref{cost_of_path_identical_valuatoins},
$\ell_i(X) = cost_{X}(P_i(X)) = \frac{v(X_{j})}{w_{j}} - \frac{v(X_{i})}{w_i}$.

From the definition of WEF$(0,1)$, \Cref{identical additive wef01} implies that this difference is at most $\frac{v(o)}{w_i}$ for some object $o\in X_j$. 
Therefore, the difference is at most $\frac{V}{w_i}$.
\end{proof}

\begin{theorem}\label{thm:poly-on-subsidy-identical-additive}
For identical additive valuation, there exists a polynomial time algorithm to find a WEF-able and non-wasteful allocation such that the subsidy per agent is at most $V$.
Therefore, the total subsidy required is at most $(n-1)  V$.
\end{theorem}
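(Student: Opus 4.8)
The plan is to output the allocation $X$ produced by \Cref{alg:one-subsidy-identical-additive} and to verify the claimed properties by assembling the results already established for identical additive valuations. First I would argue polynomial running time: the algorithm performs $m$ iterations, and in each iteration it computes $\arg\min_{i\in N} v(X_i\cup\{o\})/w_i$ over the $n$ agents (with the stated tie-break), so it runs in $O(mn)$ time. Since the valuations are identical, $v_i(X_i)=v(X_i)=v_j(X_i)$ for all $i,j\in N$, so the hypothesis $v_i(X_i)\ge v_j(X_i)$ of \Cref{lem:max-value} holds trivially (with equality); hence $X$ is WEF-able. Non-wastefulness is also immediate, since under additive identical valuations removing a positively-valued item strictly decreases utility and removing a zero-valued item changes nothing for anyone, so the premise of \Cref{def:non-wasteful} is never violated (as already noted, all allocations are non-wasteful here).

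Next I would bound the subsidy per agent. By \Cref{identical additive wef01}, the output $X$ is WEF$(0,1)$, and \Cref{prop:wef01 upper bound} then gives $\ell_i(X)\le V/w_i$ for every agent $i\in N$. By \Cref{thm:minsubsidy}, the allocation $X$ can be made WEF using the minimal subsidy vector $\mathbf{p}^{*}$ defined by $p^{*}_i=w_i\,\ell_i(X)$; combining these facts yields, for each $i\in N$, $p^{*}_i=w_i\,\ell_i(X)\le w_i\cdot V/w_i=V$, which is exactly the per-agent bound.

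Finally, to obtain the total bound I would invoke part~(\ref{agent with 0 subsidy}) of \Cref{thm:minsubsidy}, which guarantees that at least one agent receives subsidy $0$ under $\mathbf{p}^{*}$. Summing the per-agent bound of $V$ over the remaining $n-1$ agents gives a total subsidy of at most $(n-1)V$, matching the lower bound of \Cref{thm:subsidy-lb-identical-additive}. Since the proof chains together previously proved statements, I do not anticipate a serious obstacle; the only two points requiring care are the identification of the minimal subsidy with $w_i\ell_i(X)$ (so that the WEF$(0,1)$ path-cost bound of \Cref{prop:wef01 upper bound} translates cleanly into the subsidy bound) and the use of the zero-subsidy agent to shave the final $V$ off the total.
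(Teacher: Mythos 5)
Your proposal is correct and follows essentially the same route as the paper's own proof: it chains \Cref{identical additive wef01} and \Cref{prop:wef01 upper bound} to get $\ell_i(X)\le V/w_i$, applies \Cref{thm:minsubsidy} to identify the minimal subsidy $p^*_i=w_i\ell_i(X)\le V$, and uses the zero-subsidy agent to obtain the total bound $(n-1)V$. The only difference is that you spell out explicitly the polynomial running time, the appeal to \Cref{lem:max-value} for WEF-ability, and the non-wastefulness argument, which the paper treats as immediate.
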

\begin{proof}
It is clear that \Cref{alg:one-subsidy-identical-additive} runs in polynomial time. 

Let $X$ be the output of \Cref{alg:one-subsidy-identical-additive}.
First notice that $X$ is WEF-able and non-wasteful. 

    \Cref{prop:wef01 upper bound} implies that, to achieve weighted-envy-freeness under identical additive valuations for the allocation computed by \Cref{alg:one-subsidy-identical-additive}, the required subsidy per agent $i \in N$ is at most $ w_i   \frac{V}{w_i} = V$. 
    In combination with \Cref{thm:minsubsidy}, the total required subsidy is at most $(n-1)  V$. 
\end{proof}


Note that $W \geq n   \wmin$. Therefore, this bound is better than the one proved in 
    \Cref{theorem: sub general additive} for integer weights: $\left(W - w_i \right)   V \geq \left(n - 1 \right) w_i  V \geq \left(n - 1 \right)   V $.

The upper bound of $(n-1)V$ is tight even for equal entitlements (\cite{HaSh19a}).
Interestingly, when either the valuations or the entitlements are identical, the worst-case upper bound depends on $n$, whereas when both valuations and entitlements are different, the bound depends on $W$.
\section{Binary Additive Valuation} \label{sec:binary_additive_val}
In this section we focus on the special case of agents with binary additive valuations.
We assume $v_i(o) \in \{0,1\}$ for all $i \in N$ and $o \in M$.
%
We start with a lower bound.
\begin{proposition}
\label{prop:lower-bound-binary}
For every $n\geq 2$ and weight vector $\mathbf{w}$, there is an instance with $n$ agents with binary valuations in which the total subsidy in any WEF allocation is at least $\frac{W}{w_2} - 1$.
\end{proposition}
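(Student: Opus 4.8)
The plan is to exhibit one specific binary instance and to lower-bound the subsidy of \emph{every} weighted-envy-free outcome on it. The natural reduction runs through \Cref{thm:minsubsidy} and \Cref{th:ef}: for a WEF-able allocation $X$ the cheapest admissible subsidy is exactly $\sum_{i\in N} w_i\,\ell_i(X)$, and $X$ is WEF-able iff the weighted envy-graph $G_{X,w}$ has no positive-cost cycle. So it suffices to design an instance for which $\min_X \sum_{i} w_i\,\ell_i(X) \ge \frac{W}{w_2}-1$, the minimum ranging over WEF-able allocations $X$.

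To choose the instance I would start from the additive lower bound of \Cref{worst case allocation can be chosen with weights}, where a single item with values $V$ versus $V-\epsilon$ is \emph{forced} onto the minimum-weight agent and yields $\frac{W}{\wmin}-1$. With binary values this strict gap is unavailable: a good valued $1$ by everyone may be placed on any agent (all are maximal valuers, cf.\ \Cref{lemm: allocation item to agent with highest value}), and a cost-minimizer places it on the heaviest agent, giving only $\frac{W}{\wmax}-1$. The idea is therefore to make the two \emph{lightest} agents symmetric valuers of a universally desired good, and to add further binary goods whose only role is to forbid, through \Cref{th:ef}, any allocation that assigns this good to an agent heavier than $w_2$. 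In the intended allocation the good sits on the weight-$w_2$ agent and is worth $1$ to all others; by the telescoping path bound of \Cref{lem:max-value} and a direct computation, each remaining agent $j$ envies the holder by $1/w_2$, so $\ell_j \ge 1/w_2$ and the total is $\sum_{j\neq 2} w_j/w_2 = \frac{W}{w_2}-1$. Routing the good to the weight-$\wmin$ agent is only more expensive, and this symmetry between the two lightest agents is precisely why the bound carries $w_2$ rather than $\wmin$.

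The hard part will be proving the bound for \emph{all} WEF-able allocations, i.e.\ ruling out cheaper escapes rather than merely analyzing one allocation. The difficulty specific to the weighted binary setting is that a heavy agent can usually absorb every item it desires at negligible cost, because its envy is divided by its large weight; hence a single contested item never forces large subsidy, and naive gadgets are defeated by the heavy agent simply hoarding the desired goods. The construction must therefore guarantee the dichotomy that re-routing the universally desired good onto any agent of weight exceeding $w_2$ (or splitting the desired goods across several agents) closes a positive-cost cycle in $G_{X,w}$ and is thus non-WEF-able by \Cref{th:ef}, while the weight-$w_2$ placement remains cycle-free. Establishing this --- that no positive cycle exists in the intended allocation, yet one appears in every strictly cheaper alternative, over all exponentially many allocations --- is the main obstacle, and I expect to discharge it by a case analysis on which agent holds the contested good, reusing the path/telescoping estimates of \Cref{lem:max-value} and the counting argument behind \Cref{thm:subsidy-lb-identical-additive}.
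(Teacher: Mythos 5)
Your reduction to minimizing $\sum_i w_i\,\ell_i(X)$ via \Cref{thm:minsubsidy}, and your subsidy computation for the intended allocation, are both fine; the gap is that the construction doing all the work is never given, and the central dichotomy you defer to ``a case analysis'' is in fact unachievable under your own design constraint that the contested good be \emph{universally} desired. Concretely: whatever auxiliary binary goods you add, assign each good that somebody values to an agent who values it (goods valued by nobody can be placed anywhere, as they affect no envy), and put the contested good in the \emph{heaviest} agent's bundle. The resulting allocation is non-redundant, so every agent values its own bundle at least as much as any other agent does, and by \Cref{lem:max-value} it is WEF-able --- no positive-cost cycle exists. Hence no gadget can make ``contested good on an agent heavier than $w_2$'' force a positive cycle through \Cref{th:ef}; the main step of your plan cannot be discharged. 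This is exactly the ``heavy agent hoarding'' failure mode you flagged, applied to your own contested good, and for the natural candidate gadgets this escape allocation is also cheap (total subsidy on the order of $W/\wmax-1$), strictly below the target.

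The missing idea, which is how the paper's proof works, is to drop universality and use envy of \emph{money} rather than envy of items. Take a single item valued $1$ by agents $1$ and $2$ only and $0$ by agents $3,\dots,n$. If any agent $k\geq 3$ holds it, the $2$-cycle $(1,k)$ has cost $\frac{1}{w_k}+0>0$ (agent $k$ gains nothing from the item), so by \Cref{th:ef} only agents $1$ or $2$ can hold it in a WEF-able allocation. If agent $i\in\{1,2\}$ holds it, the other light agent $j$ needs $p_j\geq w_j/w_i$; and then every agent $k\geq 3$, although it values no item at all, still needs $p_k \geq w_k p_j/w_j \geq w_k/w_i$, because the WEF inequality compares $v_k(X_j)+p_j$, i.e., agents envy subsidies. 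Summing gives total subsidy at least $(W-w_i)/w_i \geq W/w_2-1$. Both ingredients --- making the item worthless to the heavy agents (which pins it to the light ones) and the money-envy step (which charges the $n-2$ agents who value nothing) --- are absent from your proposal, and without them the bound cannot be forced.
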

\begin{proof}
    There is one item. 
Agents $1$ and $2$ value the item at $1$ and the others at $0$.
If agent $i \in \{1,2\}$ gets the item,
then the other agent $j\neq i \in \{1,2\}$ must get subsidy $\frac{w_j}{w_i}$.
To ensure that other agents do not envy $j$'s subsidy, every other agent $k \not\in \{1,2\}$ 
must get subsidy $\frac{w_k}{w_i}$.
The total subsidy is $\frac{W}{w_i} - 1$.
The subsidy is minimized by giving the item to agent $2$, since $w_2\geq w_1$.
This gives a lower bound of $\frac{W}{w_2} - 1$.
\end{proof}

Below, we show how to compute a WEF-able allocation where the subsidy given to each agent $i\in N$ is at most $\frac{w_i}{\wmin} V = \frac{w_i}{\wmin}$, and the total subsidy is at most $\frac{W}{\wmin} - 1$.

In the case of binary valuations, \Cref{alg:general-additive} is inefficient in three ways: (1) the maximum-value matching does not always prioritize agents with higher entitlements, (2) there may be situations where an agent prefers items already allocated in previous iterations, while the agent holding those items could instead take unallocated ones, and (3) it works only for agents with integer weights.

We address these issues by adapting the \emph{General Yankee Swap (GYS)} algorithm introduced by \cite{viswanathan2023general}.

GYS starts with an empty allocation for all agents.
We add a dummy agent $i_0$ and assume that all items are initially assigned to $i_0$:
$X_{i_0} = M$.

\Cref{alg:binary-additive} presents our approach for finding a WEF-able allocation with a bounded subsidy.
The algorithm runs in $T$ iterations. 
We denote by $X^t$ the allocation at the end of iteration $t\in [T]$.
Throughout this algorithm, we 
divide the agents into two sets:
\begin{enumerate}
    \item $R$: The agents remaining in the game at the beginning of the iteration $t$.
    \item $N \setminus R$: The agents who were removed from the game in earlier iteration $t' < t$. Agents are removed from the game when the algorithm deduces that their utility cannot be improved.
\end{enumerate}

As long as not all the objects have been allocated, at every iteration $t$, the algorithm looks for the agents maximizing the \emph{gain function} ( \cite{viswanathan2023general}) among $R$, i.e., the agents remaining in the game at this iteration.

We use the gain function: 
$\frac{w_i}{v_i(X_i^{t-1}) + 1}$, 
which selects agents with the minimal potential for increasing envy.
If multiple agents have the same value, we select one arbitrarily.

The selected agent then chooses either to acquire an unallocated item or take an item from another agent. In either case, their utility increases by 1. If the agent takes an item from another, the affected agent must decide whether to take an unallocated item or another allocated item to preserve their utility, and so on. This process creates a \emph{transfer path} from agent $i$ to the dummy agent $i_0$ , where items are passed until an unallocated item is reached.
%
Formally, we represent this as a directed graph, where nodes are agents, and an edge $(i,j)$ if and only if there exists an item in $j$'s bundle that $i$ values positively.
A \emph{transfer path} is any directed path in that graph, that ends at the dummy agent $i_0$.



When an agent is selected, the algorithm attempts to find a transfer path from that agent, preserving utilities for all agents except the initiator, whose utility increases by 1. If no path is found, the agent is removed from the game. We use the polynomial-time method by \cite{viswanathan2023general} to find transfer paths.

\Cref{alg:binary-additive} differs from $GYS$ in the following way: at the beginning of iteration $t$, the algorithm first removes all agents without a transfer path originating from them (line \ref{line:R}). Then, it selects an agent based on the gain function to allocate a new item to that agent.
For convenience, we denote by $R(t)$ the agents who have a transfer path originating from them at the beginning of iteration $t$ (line \ref{line:R}).

\begin{algorithm}[h]
\caption{
\label{alg:binary-additive}
Weighted Sequence Protocol For Additive Binary Valuations}
\KwIn{ Instance $(N,M,v, \mathbf{w})$ with additive binary valuations.}
\KwOut{ WEF-able allocation $X$ with total required subsidy of at most $\frac{W}{\wmin} - 1$.}
$X_{i_{0}} \gets M$, and $X_{i}^0 \gets \emptyset$ for each $i\in N$ \tcc{All items initially are unassigned}
$t \gets 1$\;
$R \gets N$\;
\While{$R \neq \emptyset$}{
Remove from $R$ all agents who do not have a transfer path starting from them \label{line:R}\;
 $u \gets \arg \max_{i \in R} \left( \frac{w_i}{v_i(X_i^{t-1}) + 1}\right)$ \tcc{\footnotesize Choose the agent who maximizes the gain function \normalsize}
Find a transfer path starting at $u$ 
    \tcc{\footnotesize For example, one can use the BFS algorithm to find a shortest path from $u$ to $i_0$. \normalsize}
     Transfer the items along the path and update the allocation $X^{t}$\;
     $t\gets t+1$
}
\Return $X^t$
\end{algorithm}


The following example demonstrates \Cref{alg:binary-additive}:
\begin{example} \label{app:example: binary additive}
Consider two agents with weights $w_1 = 1$ and $w_2 = 2$, and five items. The valuation functions are:
\[
\begin{bmatrix}
      & o_1 & o_2 & o_3 & o_4 & o_5\\
  i_1 & 1   & 1   & 1   & 1   & 1\\
  i_2 & 1   & 1   & 1   & 1   & 0
\end{bmatrix}
\]
The algorithm is executed as follows:

\begin{enumerate}
  \item For $t=1$, the algorithm compares $\frac{1}{w_1} = \frac{1}{1}$, $\frac{1}{w_2} = \frac{1}{2}$. Consequently, the algorithm searches for a transfer path starting at $i_2$ and ending at $i_{\text{0}}$, and finds the path $(i_2, i_{\text{0}})$. The algorithm transfers the item $o_1$ to agent $i_2$ from $i_{\text{0}}$'s bundle, resulting in $X_1^1 = \emptyset$ and $X_2^1 = \{o_1\}$.
  \item For $t=2$, the algorithm compares $\frac{1}{w_1} = \frac{1}{1}$ and $\frac{v_2(X_2^1) + 1}{w_2} = \frac{2}{2}$. Since those values are equal, the algorithm arbitrarily selects agent $i_2$ and searches
  for a transfer path starting at $i_2$ and ending at $i_{\text{0}}$, and finds the path $(i_2, i_{\text{0}})$. The algorithm transfers the item $o_2$ to agent $i_2$, yielding $X_1^2 = \emptyset$ and $X_2^2 = \{o_1, o_2\}$.
  \item For $t=3$, the algorithm compares $\frac{1}{w_1} = 1$ and $\frac{v_2(X_2^2) + 1}{w_2} = \frac{3}{2}$. As a result, the algorithm searches for a transfer path starting at $i_1$ and ending at $i_{\text{0}}$, and finds the path $(i_1, i_{\text{0}})$. The algorithm transfers the item $o_3$ to agent $i_1$, producing $X_1^3 = \{o_3\}$ and $X_2^3 = \{o_1, o_2\}$.
  \item For $t=4$, the algorithm compares $\frac{v_1(X_1^3) +1}{w_1} = 2$ and $\frac{v_2(X_2^3) + 1}{w_2} = \frac{3}{2}$. Thus, the algorithm searches
   for a transfer path starting at $i_2$ and ending at $i_{\text{0}}$, and finds the path $(i_2, i_{\text{0}})$. The algorithm transfers the item $o_4$ to agent $i_2$, leading $X_1^4 = \{o_3\}$ and $X_2^4 = \{o_1, o_2, o_4\}$.
  \item For $t=5$, the algorithm compares $\frac{v_1(X_1^4) +1}{w_1} = 2$ and $\frac{v_2(X_2^4) + 1}{w_2} = \frac{4}{2} = 2$. Since those values are equal, the algorithm arbitrarily selects agent $i_2$ and searches
  for a transfer path starting at $i_2$ and ending at $i_{\text{0}}$, and finds the path $(i_2, i_1, i_{\text{0}})$. The algorithm transfers the item $o_3$ to agent $i_2$ from $i_1$'s bundle and the item $o_5$ to agent $i_1$ from $i_{\text{0}}$'s bundle, leading $X_1^5 = \{o_5\}$ and $X_2^5 = \{o_1, o_2, o_3, o_4\}$.
  \item Agent 1 envies agent 2 by $\frac{4}{2} - \frac{1}{1} = 1$, while agent 2 envies agent 1 by $0 - \frac{4}{2} < 0$.
  \item In order to mitigate envy, $p_{1} = 1$ and $p_{2} = 0$.
\end{enumerate}
\end{example}

\begin{definition} [\cite{viswanathan2023general}]
    An allocation $X$ is said to be \textit{non-redundant} if for all $i\in N$, we have $v_{i}(X_i) = |X_i|$.
\end{definition}
That is, $v_j(X_i) \leq |X_i| = v_i(X_i)$ for every $i,j \in N$, and therefore, every non-redundant allocation is also WEF-able 
by \Cref{lem:max-value}.
%
Lemma 3.1 in \cite{viswanathan2023general} shows that the allocation produced by GYS is non-redundant. The same is true for our variant:
\begin{lemma}
\label{app:thm:wefable--iff-no-cycles9} 
At the end of any iteration $t$ of 
\Cref{alg:binary-additive},
the allocation $X^t$ is non-redundant. 
\end{lemma}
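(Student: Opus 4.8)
The plan is induction on the iteration index $t$. First I would record what non-redundancy means under binary additive valuations: since $v_i(X_i)=|\{o\in X_i : v_i(o)=1\}|\le |X_i|$, the identity $v_i(X_i)=|X_i|$ holds exactly when agent $i$ values \emph{every} item in its own bundle at $1$. So it suffices to show, by induction on $t$, that after each iteration no agent of $N$ holds an item it values at $0$ (the dummy agent $i_0\notin N$ is exempt). The base case is immediate: before the first iteration each $X_i^0=\emptyset$, which is vacuously non-redundant.

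For the inductive step I assume $X^{t-1}$ is non-redundant and analyze the transfer performed in iteration $t$. Writing the transfer path as $u=j_0\to j_1\to\cdots\to j_k=i_0$, the definition of the transfer graph gives, for each edge $(j_\ell,j_{\ell+1})$, an item $o_\ell\in X^{t-1}_{j_{\ell+1}}$ with $v_{j_\ell}(o_\ell)=1$, and the transfer moves $o_\ell$ from $j_{\ell+1}$ to $j_\ell$. Because $X^{t-1}$ is an allocation, the bundles $X^{t-1}_{j_1},\dots,X^{t-1}_{j_k}$ are pairwise disjoint, so $o_0,\dots,o_{k-1}$ are distinct and each is moved exactly once. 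The initiator $u=j_0$ only \emph{gains} $o_0$, valued at $1$, so both $v_u(X_u)$ and $|X_u|$ rise by exactly $1$ and equality is preserved. Each intermediate agent $j_\ell$ with $1\le \ell\le k-1$ gains $o_\ell$ and loses $o_{\ell-1}$: the incoming item has $v_{j_\ell}(o_\ell)=1$ by the edge condition, and the outgoing item $o_{\ell-1}\in X^{t-1}_{j_\ell}$ has $v_{j_\ell}(o_{\ell-1})=1$ by the induction hypothesis, so $v_{j_\ell}(X_{j_\ell})$ and $|X_{j_\ell}|$ are both unchanged. Every agent off the path is untouched, and $i_0$ is exempt. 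Hence $X^t$ is non-redundant, closing the induction; by \Cref{lem:max-value} it is then WEF-able as well.

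The crux is the intermediate agents, and it is the only place the induction hypothesis does real work: I must argue that the item each such agent surrenders was \emph{already} valued $1$ by it --- which is precisely non-redundancy of $X^{t-1}$ --- so that, together with the edge condition forcing the incoming item to be valued $1$, the gain and the loss cancel and keep utility locked to bundle size. The remaining care is purely bookkeeping: using disjointness of the bundles in an allocation to confirm that $o_0,\dots,o_{k-1}$ are distinct, so the ``gain one, lose one'' accounting along the path is valid and no item is counted twice. I would also note that the existence of the path for the chosen $u$ is guaranteed because line~\ref{line:R} first removes every agent lacking a transfer path, so $u\in R$ indeed has one.
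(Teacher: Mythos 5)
Your proof is correct and follows essentially the same route as the paper's: induction on the iteration, with the initiator gaining an item it values at $1$, each intermediate agent on the transfer path swapping an incoming item valued at $1$ (edge condition) for an outgoing item valued at $1$ (non-redundancy of $X^{t-1}$), and all other agents untouched. Your explicit remarks on the binary characterization of non-redundancy and the distinctness of the transferred items are just slightly more careful bookkeeping than the paper's write-up, not a different argument.
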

\begin{proof}
We prove by induction that at the end of each iteration $t$, $X^t$ remains non-redundant.

For the base case, $X^0$ is an empty allocation and is therefore non-redundant. 
Now, assume that at the end of iteration $t-1$, $X^{t-1}$ is non-redundant. 

If $X^{t} = X^{t-1}$, meaning no agent received a new item, the process is complete. Otherwise, let $u$ be the agent who receives new item. Agent $u$ obtains an item via the transfer path $P = (u = i_1, \ldots, i_k)$. For each $1 \leq j < k$, agent $i_j$ receives the item $o_j$ from the bundle of $i_{j+1}$, given that $v_{i_j}(o_j) = 1$. Agent $i_k$ receives a new item $o_k$ from the bundle of $i_0$, with $v_{i_k}(o_k) = 1$.

Additionally, for each $1 < j \leq k$, item $o_{j-1}$ is removed from agent $i_j$'s bundle where $v_{i_j}(o_{j-1}) = 1$, since $X^{t-1}$ is non-redundant.

For agents not on the transfer path $P$, their bundles remain unchanged. Thus, for each agent $i \in N$, it holds that $v_i(X_i^t) = v_i(X_i^{t-1}) = |X_i^{t-1}| = |X_i^t|$, confirming that $X^t$ is non-redundant.
\end{proof}

    Based on \Cref{app:thm:wefable--iff-no-cycles9} it is established that at the end of every iteration $t\in[T]$, $X^{t}$ is 
    WEF-able. 
    The remaining task is to establish subsidy bounds. 
    
    We focus on two groups: $R$ and $N \setminus R$. 
    
    The selection rule simplifies limit-setting for $R$ and ensures a subsidy bound of 1 (\Cref{app:Alg4_sub_in_game}). 
%
    However, understanding the dynamics of the second group, $N\setminus R$, presents challenges, as the selection rule is not applicable for them.
    For an agent $i \in N \setminus R$, we prove a subsidy bound of $w_i\cdot \frac{1}{w_j}$,
    for some $j\in R$. In particular, the bound is at most $\frac{w_i}{\wmin}$. 

    \begin{observation}
\label{obs:non-redundant}
Let $X$ be any non-redundant allocation.
Let $P=(i,\ldots,j)$ be a path in $G_{X,w}$.
Then $cost_X(P) \leq \frac{|X_j|}{w_j} - \frac{|X_{i}|}{w_{i}}$.
\end{observation}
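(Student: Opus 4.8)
The plan is to reduce this observation directly to \Cref{lem:max-value}, which already bounds the cost of a path in the weighted envy-graph under the hypothesis that $v_i(X_i) \ge v_k(X_i)$ for all $i,k \in N$. So essentially no new argument is needed: the only work is to verify that non-redundancy, together with the binary-additive assumption of this section, supplies exactly that hypothesis.

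First I would recall the definition of a non-redundant allocation: $v_i(X_i) = |X_i|$ for every agent $i$. Since the valuations here are binary additive, every single item contributes value at most $1$ to any agent, so for any two agents $i,k$ we have $v_k(X_i) \le |X_i|$. Combining these two facts gives $v_k(X_i) \le |X_i| = v_i(X_i)$; that is, each agent values their own bundle at least as much as any other agent does. This is precisely the premise of \Cref{lem:max-value}.

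Then I would invoke the second part of \Cref{lem:max-value}: for any path $P$ from $i$ to $j$ in $G_{X,w}$, we have $cost_X(P) \le \frac{v_j(X_j)}{w_j} - \frac{v_i(X_i)}{w_i}$. Substituting $v_j(X_j) = |X_j|$ and $v_i(X_i) = |X_i|$ (again by non-redundancy) yields $cost_X(P) \le \frac{|X_j|}{w_j} - \frac{|X_i|}{w_i}$, which is exactly the claim.

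There is no genuinely hard step; the observation is just the specialization of \Cref{lem:max-value} to non-redundant binary allocations. The one point to be careful about is the direction of the telescoping estimate inside \Cref{lem:max-value}: it collapses the per-edge costs $\frac{v_h(X_k)}{w_k} - \frac{v_h(X_h)}{w_h}$ into $\frac{v_j(X_j)}{w_j} - \frac{v_i(X_i)}{w_i}$ precisely by applying $v_h(X_k) \le v_k(X_k)$ on each edge. So I would make sure that the inequality I derive from non-redundancy is this ``value-of-own-bundle dominates the value others assign to it'' inequality, and not some subtly different variant, so that it plugs into the lemma verbatim.
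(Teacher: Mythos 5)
Your proof is correct and matches the paper's intended justification: the paper itself notes, immediately after defining non-redundancy, that $v_j(X_i) \leq |X_i| = v_i(X_i)$ for all $i,j \in N$ under binary additive valuations, so that \Cref{lem:max-value} applies, and the observation is exactly the path-cost bound from that lemma with $v_i(X_i) = |X_i|$ and $v_j(X_j) = |X_j|$ substituted. Nothing is missing.
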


\label{Omitted Details: binary additive}
\begin{lemma}
\label{lem:non-redundant-difference}
    Let $j\in N$ be any agent, if
    $i\in R(t)$, then 
    $
        \frac{|X^t_j|}{w_j}
        -
        \frac{|X^t_i|}{w_i}
        \leq
        \frac{1}{w_i}.
    $
\end{lemma}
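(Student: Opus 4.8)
The plan is to combine two monotonicity properties of \Cref{alg:binary-additive} with its greedy selection rule. First I would record the facts I need. By \Cref{app:thm:wefable--iff-no-cycles9} every intermediate allocation is non-redundant, so $v_k(X^\tau_k)=|X^\tau_k|$ for all $k$ and all $\tau$; moreover a single iteration routes one item along a transfer path ending at $i_0$, which raises the selected agent's count by exactly one, leaves every other real agent's count unchanged (each intermediate agent loses and gains one valued item), and decreases the dummy's count. Hence the map $\tau\mapsto |X^\tau_k|$ is non-decreasing for every real agent $k$. Second, the variable $R$ is only ever shrunk: line \ref{line:R} removes agents and nothing is ever re-inserted, so $R(t)\subseteq R(t-1)\subseteq\cdots$, i.e. an agent in the game at iteration $t$ was in the game at every earlier iteration.

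With these in hand the argument runs as follows. If $|X^t_j|=0$ the claim is immediate, since the left-hand side is at most $0\le \frac{1}{w_i}$. Otherwise let $t_j\le t$ be the iteration at which $j$'s count rose from $|X^t_j|-1$ to $|X^t_j|$; this is well defined because $j$'s count is non-decreasing in $\tau$ and equals $|X^t_j|$ at time $t$. At iteration $t_j$ agent $j$ was the selected agent, so $j\in R(t_j)$, and using non-redundancy at time $t_j-1$, $j$ minimized $\frac{|X^{t_j-1}_k|+1}{w_k}$ over $k\in R(t_j)$ (maximizing the gain function $\frac{w_k}{v_k(X^{t_j-1}_k)+1}$ is the same as minimizing this ratio). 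The crux is to evaluate this minimality at $k=i$: monotonicity of $R$ together with $t_j\le t$ gives $i\in R(t)\subseteq R(t_j)$, so $i$ is a legitimate competitor at iteration $t_j$ and
\[
\frac{|X^t_j|}{w_j}=\frac{|X^{t_j-1}_j|+1}{w_j}\le \frac{|X^{t_j-1}_i|+1}{w_i}.
\]
Finally, monotonicity of item counts with $t_j-1\le t$ yields $|X^{t_j-1}_i|\le |X^t_i|$, and substituting gives $\frac{|X^t_j|}{w_j}\le \frac{|X^t_i|+1}{w_i}$, which is exactly the claim after rearranging.

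I expect the only genuinely delicate point to be justifying that $i\in R(t)$ implies $i\in R(t_j)$, i.e. that membership in the game is monotone in time; everything else is bookkeeping about indices. Here I would rely purely on the control flow of the algorithm — $R$ is a persistent variable from which agents are only deleted, never reinserted — so the inclusion $R(t)\subseteq R(t_j)$ for $t_j\le t$ follows directly, without invoking any structural property of transfer paths. (The deeper Yankee-Swap guarantee that a removed agent can never again improve its utility is what makes the algorithm \emph{correct}, but it is not needed for this inequality.) I would also double-check the two boundary situations, namely $|X^t_j|=0$ and the possibility $t_j=t$, to confirm that the indices $t_j-1$ remain valid and that the displayed chain of inequalities holds in those cases as well.
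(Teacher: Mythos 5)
Your proof is correct and follows essentially the same route as the paper's: both identify the last iteration at which $j$ was selected, invoke the greedy selection rule against $i$ (legitimate because $R$ only shrinks, so $i\in R(t)\subseteq R(t_j)$), translate values into bundle cardinalities via non-redundancy, and conclude with monotonicity of $i$'s bundle size. The only cosmetic difference is that you phrase gain-function maximization as ratio minimization and spell out the monotonicity bookkeeping that the paper leaves implicit.
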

\begin{proof}
If $j$ has never been selected to receive an item, then $|X^t_j|=0$ and the lemma is trivial.

Otherwise, let $t'\leq t$ be the latest iteration in which $j$ was selected.
As agents can not be added to $R$ and by the selection rule, 
$\frac{v_j(X_j^{t'-1}) + 1}{w_j} 
\leq 
\frac{v_i(X_i^{t'-1}) + 1}{w_i}$.
Then by non-redundancy, 
$\frac{|X_j^{t'}|}{w_j} = \frac{v_j(X_j^{t'})}{w_j} = \frac{v_j(X_j^{t'-1}) + 1}{w_j} \leq \frac{v_i(X_i^{t'-1}) + 1}{w_i} = \frac{v_i(X_i^{t'})}{w_i} + \frac{1}{w_i} = \frac{|X_i^{t'}|}{w_i} + \frac{1}{w_i}$. 
As $|X_j^{t}|=|X_j^{t'}|$
and $|X_i^{t}| \geq |X_i^{t'}|$,
the lemma follows.
\end{proof}
From \Cref{lem:non-redundant-difference}, we can conclude the following: 
\begin{proposition}\label{app:Alg4_sub_in_game}
If 
 $i\in R(t)$, 
 then 
 $\ell_i(X^t)\leq \frac{1}{w_i}$.
\end{proposition}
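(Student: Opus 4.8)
The plan is to combine \Cref{lem:non-redundant-difference} with the fact that at the end of any iteration the allocation is non-redundant (\Cref{app:thm:wefable--iff-no-cycles9}), so that \Cref{obs:non-redundant} applies and lets me bound the cost of any path starting at $i$. Recall $\ell_i(X^t)$ is the cost of the maximum-cost path in $G_{X^t,w}$ starting at $i$. Let $P=(i,\ldots,j)$ be such a maximum-cost path, ending at some agent $j\in N$. Since $X^t$ is non-redundant by \Cref{app:thm:wefable--iff-no-cycles9}, \Cref{obs:non-redundant} gives
\begin{align*}
\ell_i(X^t) = cost_{X^t}(P) \leq \frac{|X^t_j|}{w_j} - \frac{|X^t_i|}{w_i}.
\end{align*}

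Now I invoke the hypothesis $i\in R(t)$, which is exactly the assumption of \Cref{lem:non-redundant-difference}. Applying that lemma to the agent $j$ at the other endpoint of the path yields
\begin{align*}
\frac{|X^t_j|}{w_j} - \frac{|X^t_i|}{w_i} \leq \frac{1}{w_i}.
\end{align*}
Chaining the two displays gives $\ell_i(X^t) \leq \frac{1}{w_i}$, which is the claim.

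The only subtlety to check is that \Cref{lem:non-redundant-difference} is stated for an \emph{arbitrary} agent $j\in N$ as the ``envied'' endpoint while $i\in R(t)$ is the ``starting'' endpoint, so I must make sure the roles line up: in the lemma it is $i$ (the in-game agent) that starts the path and $j$ that is reached, matching the orientation of a path starting at $i$ in the envy-graph. Since the lemma's conclusion holds for every target $j$, in particular it holds for the specific endpoint of the maximum-cost path, so no positivity or case analysis on $j$ is needed. I expect no real obstacle here; the work has already been front-loaded into \Cref{lem:non-redundant-difference} and \Cref{obs:non-redundant}, and this proposition is essentially the one-line bookkeeping step that packages them into a per-agent subsidy bound via \Cref{thm:minsubsidy} (where $p_i=w_i\ell_i(X^t)\le 1$).
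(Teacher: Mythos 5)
Your proof is correct and follows exactly the paper's own argument: bound $\ell_i(X^t)$ by the endpoint difference via \Cref{obs:non-redundant} (justified by non-redundancy from \Cref{app:thm:wefable--iff-no-cycles9}), then apply \Cref{lem:non-redundant-difference} using $i\in R(t)$. The role-matching check you note is exactly right and needs no further argument.
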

\begin{proof}
Assume $P_i= (i, \ldots, j)$ is the path with the highest total cost starting at $i$ in the $G_{X^t,w}$, i.e., $cost_{X^t}(P_i) = \ell_i(X^t)$. 
\Cref{obs:non-redundant} implies $\ell_i(X^t) \leq \frac{|X_j^t|}{w_{j}} - \frac{|X_i^t|}{w_{i}}$. 
As $i\in R(t)$,
\Cref{lem:non-redundant-difference} implies 
$\frac{|X_j^{t}|}{w_j} - \frac{|X_i^t|}{w_{i}} \leq \frac{1}{w_i}$.
\end{proof}


To prove this upper bound, we need to establish several claims about agents removed from the game. First, we show that an agent removed from the game does not desire any item held by an agent who remains in the game (\Cref{prop: valuation of removed agent}). As a result, these removed agents will not be included in any transfer path (\Cref{prop: removed agent and transfer path}).

Next, we demonstrate that if the cost of a path originating from one of these removed agents at the end of iteration $t$ exceeds the cost at the end of iteration $t' \leq t$ --- the iteration when the agent was removed --- then there exists an edge in this path, $(i_j, i_{j+1})$, such that $v_i(X_j^t) = 0$ (\Cref{at least one 0 valuation}). Based on these claims, we prove that if at the end of iteration $t$, the cost of the maximum-cost path starting from agent removed from the game at $t' < t$ exceeds its cost at $t'$, we can upper-bound it by $\frac{1}{\wmin}$. 

\begin{proposition} \label{prop: valuation of removed agent}
    Let $i$ be an agent removed from the game at the start of iteration $t'$.
Then for all $j \in R(t')$, $v_i(X^{t'}_j)=0$.

Moreover, for all $t > t'$
 and all $j\in R(t)$, 
 $v_i(X^{t}_j)=0$.
\end{proposition}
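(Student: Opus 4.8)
The plan is to prove a single \emph{uniform} invariant that covers \emph{all} removed agents at once, and then read off the proposition by specializing to the agent $i$ and the iteration $t'$ named in the statement. This strengthening is exactly what makes the induction go through without circularly invoking the (later) fact that removed agents never lie on transfer paths. Concretely, I would prove by induction on $t$ that at the end of every iteration $t$ of \Cref{alg:binary-additive} the following holds: for every agent $a$ that has already been removed by the start of iteration $t$ (i.e.\ $a\notin R(t)$) and every $j\in R(t)$ we have $v_a(X^t_j)=0$, and moreover $v_a(X^t_{i_0})=0$. The proposition is the restriction of this invariant to $a=i$: its first part is the case $t=t'$, and the ``moreover'' part is the case $t>t'$, using that $R$ only shrinks, so $i\notin R(t')$ forces $i\notin R(t)$ for all $t\ge t'$.

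The starting point is a \emph{reachability} argument applied at the moment an agent is removed. Suppose $a$ is removed at the start of iteration $\tau$, i.e.\ the directed transfer graph on $X^{\tau-1}$ (arc $(x,y)$ iff $x$ values some item of $y$) has no path from $a$ to $i_0$. Let $S_a$ be the set of vertices reachable from $a$ in that graph. Then $i_0\notin S_a$; for binary valuations any $b\notin S_a$ receives no arc from $a$, which is precisely $v_a(X^{\tau-1}_b)=0$; and every vertex of $S_a$ is itself unable to reach $i_0$ (otherwise $a$ could), hence is not in $R(\tau)$. Thus $R(\tau)\cup\{i_0\}\subseteq N\setminus S_a$, giving $v_a(X^{\tau-1}_j)=0$ for all $j\in R(\tau)$ and $v_a(X^{\tau-1}_{i_0})=0$. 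The $i_0$-part then persists for all later rounds, because the dummy agent only ever \emph{loses} items along transfer paths, so $X^t_{i_0}\subseteq X^{\tau-1}_{i_0}$ and hence $v_a(X^t_{i_0})=0$ for every $t\ge\tau-1$.

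For the inductive step, assume the invariant at iteration $t$ and examine iteration $t+1$. After the removal step $R(t+1)\subseteq R(t)$, and combining the induction hypothesis (for previously removed agents) with the reachability argument above (for the agents newly removed at $t+1$) gives, at the allocation $X^t$, that $v_a(X^t_j)=0$ for every removed $a\in N\setminus R(t+1)$ and every $j\in R(t+1)\cup\{i_0\}$. The crucial sub-claim is that the transfer path $P$ used in iteration $t+1$ contains no removed agent: if a removed agent $c\neq i_0$ lay on $P$, it would have an outgoing arc to its successor $c'$, i.e.\ $v_c(X^t_{c'})>0$; but $v_c$ vanishes on every bundle of $R(t+1)\cup\{i_0\}$, so $c'$ would have to be \emph{another} removed agent, and iterating along $P$ shows it can never reach $i_0$ --- contradicting that a transfer path ends at $i_0$. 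Hence $P\subseteq R(t+1)\cup\{i_0\}$, so every item moved in iteration $t+1$ is drawn from a bundle on which each removed agent has value $0$; since every $R(t+1)$-agent's new bundle consists only of such moved items together with a subset of its own (already value-$0$) bundle, we obtain $v_a(X^{t+1}_j)=0$ for all removed $a$ and all $j\in R(t+1)$, which is the invariant at $t+1$. The propagation of the first part from $X^{t'-1}$ to $X^{t'}$ across the single transfer of iteration $t'$ is exactly the $t=t'$ instance of this same step.

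I expect the main obstacle to be precisely the sub-claim that transfer paths avoid removed agents. A naive induction tracking only $i$ cannot exclude the path passing through some \emph{other} removed agent whose bundle might hold an item that $i$ values, which would let an $i$-valued item re-enter an in-game bundle. Strengthening the hypothesis to all removed agents simultaneously is what closes this gap: it is the \emph{collective} value-$0$ property that forces any removed vertex on a path to have all of its out-neighbours removed as well, stranding the path away from $i_0$.
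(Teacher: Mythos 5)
Your proof is correct, but it takes a genuinely different route from the paper's. The paper reasons about the single agent $i$ and derives contradictions with $i$'s removal at iteration $t'$: for the first part it directly constructs a transfer path from $i$ to $i_0$ at the start of iteration $t'$ (splitting on whether the offending item was already in $X_j^{t'-1}$ or arrived during iteration $t'$); for the second part it inducts on $t$ and, when some $j\in R(t)$ gains an item that $i$ values, traces that item and the chain of compensating items \emph{backwards in time} until an item of $X_{i_0}^{t'}$ is reached, again yielding a transfer path that should have existed when $i$ was removed. You instead run the induction \emph{forward} with a strengthened hypothesis covering all removed agents at once, plus the dummy's bundle: every removed $a$ has $v_a(X_j^t)=0$ for all $j\in R(t)\cup\{i_0\}$. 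At removal time this comes from your reachability cut ($S_a$ cannot contain $i_0$ and consists only of agents that are themselves removed), and the inductive step rests on the sub-claim that transfer paths avoid removed agents --- which in the paper is a \emph{separate, later} result (\Cref{prop: removed agent and transfer path}) deduced from this proposition, whereas you establish it inside the induction from the current invariant, non-circularly. Your route buys a fully self-contained forward argument: it avoids the paper's backward item-tracing, whose termination step (``this process must eventually lead to an item that was in $X_{i_0}^{t'}$'') is left informal, and it yields the path-avoidance proposition as a free by-product. The paper's route keeps the statement and induction hypothesis minimal (only the named agent $i$, with no claim about $X_{i_0}$), at the cost of the more delicate temporal tracing; your observation that the invariant must include $v_a(X_{i_0}^t)=0$ is exactly what lets the forward induction close.
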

\begin{proof}
    Suppose that $v_i(X_j^{t'}) \neq 0$. This implies there exists some item $o \in X_j^{t'}$ such that $v_i(o) = 1$. We consider two cases:
    \begin{enumerate}
        \item $o \in X_j^{t'-1}$. In this case, at the start of iteration $t'$, there exists a transfer path from $i$ to $j$.
        Moreover, there is a transfer path from $j$ to $i_0$ at the start of iteration $t'$ (otherwise, $j$ would have been removed from the game at $t'$ as well). 
        Concatenating these paths gives a transfer path from $i$ to $i_0$.
        
        \item $o \not \in X_j^{t'-1}$, that is, $j$ received item $o$ during iteration $t'$, from some other agent $j'$ (where $j' = i_0$ is possible).
        At the start of iteration $t'$, there exists a transfer path from $i$ to $j'$.
        Moreover, there is a transfer path from $j'$ to $i_0$, which is used to transfer the newly allocated item. 
        Concatenating these paths gives a transfer path from $i$ to $i_0$.
    \end{enumerate}
    Both cases contradict the assumption that $i$ was removed at $t'$.

To prove the claim for $t>t'$, we use induction over $t$. 
    
We assume the claim holds for iteration $t-1 > t'$ and prove it for iteration $t$. 
Assume, contrary to the claim, that there exists an agent $i$ who was removed at the start of iteration $t'$, and an agent $j \in R(t)$, such that $v_i(X_j^t) \neq 0$. By the induction hypothesis and the fact that an agent can not be added to $R$, we have $v_i(X_j^{t-1}) = 0$. Therefore, during iteration $t$, $j$ must have received a new item $o_j$ that $i$ values at $1$.

    If $o_j$ was part of $i_0$'s bundle at the start of iteration $t'$, then the transfer path starting at $i$, $(i, i_0)$, must have already existed at the start of iteration $t'$.

    Alternatively, if $o_j$ was originally in another agent's bundle at the start of iteration $t'$, say agent $k \in N$,
    then there must have been an iteration between $t'$ and $t$ in which
    $o_j$ has been transferred from $k$ to another agent, while agent $k$ is compensated by some other item $o_k$, that agent $k$ wants.
    
    If $o_k$ was part of $i_0$'s bundle at the start of iteration $t'$, then the transfer path starting at $i$, $(i, k, i_0)$, must have existed at the start of iteration $t'$. Otherwise, $o_k$ was in another agent's bundle at $t'$, and it, too, would be transferred to a different bundle in a later iteration. 
    
    Since the number of items is finite, this process must eventually lead to an item that was in $X_{i_0}^{t'}$, forming a transfer path starting at $i$ at the start of iteration $t'$ --- a contradiction.
    \end{proof}

\begin{proposition}
    \label{prop: removed agent and transfer path}
    Let $i$ be an agent removed from the game at the start of iteration $t'$. Then, for all $t \geq t'$, $i$ will not be included in any transfer path.
\end{proposition}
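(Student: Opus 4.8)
The plan is to prove the slightly stronger statement that a removed agent never possesses a transfer path of its own at any later iteration; the proposition follows at once, since if $i$ appeared on a transfer path $P$ at some iteration $t\ge t'$, then the portion of $P$ from $i$ to its endpoint $i_0$ would itself be a directed path from $i$ to $i_0$, i.e.\ a transfer path starting at $i$. Write $D:=N\setminus R(t)$ for the set of agents already removed by the start of iteration $t$, so that $i\in D$ while $i_0\notin D$, and recall that agents are never re-added to $R$. Every transfer path at iteration $t$ begins at the agent selected from $R(t)$, hence at a vertex outside $D$. It therefore suffices to show that in the transfer graph at iteration $t$ no arc leaves $D$, i.e.\ every arc out of a vertex of $D$ ends at another vertex of $D$; then $D$ is closed under following arcs and no directed path starting in $D$ can reach $i_0$.

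First I rule out arcs from $D$ into the in-game set $R(t)$. Fix $a\in D$, removed at some iteration $t_a\le t$. Applying \Cref{prop: valuation of removed agent} to $a$ yields $v_a(X^t_j)=0$ for every $j\in R(t)$, so $a$ values no item held by an in-game agent and has no arc into $R(t)$. It remains to rule out an arc from $D$ to $i_0$, that is, to show that no removed agent values a currently unallocated item; this is the one point not covered by the previous proposition, and is the crux of the argument. When $a$ was removed at $t_a$ it had no transfer path, so in particular the single arc $(a,i_0)$ was absent and $v_a(X^{t_a-1}_{i_0})=0$. Since items only ever move away from $i_0$ along transfer paths and are never returned to it, the unallocated pool only shrinks, giving $X^t_{i_0}\subseteq X^{t_a-1}_{i_0}$; as $v_a$ is additive and non-negative, $v_a(X^t_{i_0})=0$. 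Hence $a$ has no arc to $i_0$ either.

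Combining the two facts, every arc of the transfer graph leaving $D$ terminates inside $D$, so no directed path from a vertex of $D$ reaches $i_0$; in particular $i$ has no such path, which is what was needed. The main obstacle is the second fact, that a removed agent cannot value an unallocated item at a later round: \Cref{prop: valuation of removed agent} only controls the bundles of in-game agents, so here I instead use the monotone shrinking of $i_0$'s bundle together with the fact that $a$ lacked even the length-one transfer path $(a,i_0)$ at the moment of its removal.
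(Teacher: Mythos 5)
Your proof is correct and follows essentially the same route as the paper's: both arguments combine \Cref{prop: valuation of removed agent} (no removed agent values an in-game agent's bundle) with the fact that a removed agent values no unallocated item, concluding that a transfer path can never enter the set of removed agents; your phrasing of this as closure of $D=N\setminus R(t)$ under arcs is just a repackaging of the paper's backward walk along the path. If anything, you are more careful than the paper on one point: the paper justifies $v_i(X_{i_0}^t)=0$ for $t>t'$ only by "otherwise $i$ would not have been removed," whereas you supply the missing observation that $i_0$'s bundle only shrinks over time, so the valuation of the unallocated pool cannot become positive later.
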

\begin{proof}
    First, note that $v_i(X_{i_0}^t) = 0$; otherwise, $i$ would not have been removed at the start of iteration $t'$. Next, any agent $j$, who receives an item from $i_0$'s bundle at iteration $t$, must be in $R(t)$. By \Cref{prop: valuation of removed agent}, $v_i(X_j^t) = 0$, which means that $i$ can not receive any item from $j$, who in $R(t)$, as compensation for another item from their own bundle. Thus, any transfer path in iteration $t$ includes only agents in $R(t)$.
\end{proof}

\begin{proposition} \label{at least one 0 valuation}
    Consider an iteration $t$ and an agent $ i \notin R(t)$, who was removed from the game at the start of iteration $t' < t$.  
Let $ P = (i = i_1, \ldots, i_k) $ be a path in the weighted envy graph starting at $ i $.  
If $cost_{X^t}(P) > cost_{X^{t'-1}}(P) $, then there must exist $ j \in \{1, \ldots, k-1\}$ such that $ i_j \notin R $ at $ t $, $ i_{j+1} \in R(t) $ at $t$, and $ v_{i_j}(X_{i_{j+1}}^t) = 0 $.
\end{proposition}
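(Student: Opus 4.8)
The plan is to argue by contradiction, and the first move is to notice that the third requirement in the conclusion, $v_{i_j}(X^t_{i_{j+1}})=0$, is not an extra condition at all: by \Cref{prop: valuation of removed agent}, whenever $i_j\notin R(t)$ (so $i_j$ was removed at some iteration $\le t$) and $i_{j+1}\in R(t)$, we automatically get $v_{i_j}(X^t_{i_{j+1}})=0$. Hence the negation of the conclusion is simply that no index $j$ has $i_j\notin R(t)$ and $i_{j+1}\in R(t)$; that is, $P$ never crosses from a removed agent to an agent still in the game. Since $P$ starts at $i_1=i\notin R(t)$, this forces every vertex of $P$ to lie outside $R(t)$: each $i_j$ was removed at some iteration $\tau_j\le t$, with $\tau_1=t'$. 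It then remains to reach a contradiction by showing $cost_{X^t}(P)\le cost_{X^{t'-1}}(P)$.

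I would compare the two costs edge by edge, writing the difference as $\sum_{j}\delta_j$ where each $\delta_j$ is the change in the forward cross-term $v_{i_j}(X_{i_{j+1}})/w_{i_{j+1}}$ minus the change in the tail self-term $v_{i_j}(X_{i_j})/w_{i_j}$. Three facts settle most edges. By \Cref{prop: removed agent and transfer path} a removed agent never lies on a transfer path, so its bundle is frozen from the iteration before removal onward; in particular $X^{t'-1}_{i_j}=X^{t}_{i_j}$ whenever $\tau_j\le t'$. By construction each Yankee-Swap transfer raises only the initiator's value by one and preserves everyone else's, so while $i_j$ is active $v_{i_j}(X_{i_j})$ is non-decreasing; thus the self-term change is always favourable (it enters with a minus sign). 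Finally, for an edge whose tail is removed \emph{strictly before} its head ($\tau_j<\tau_{j+1}$), applying \Cref{prop: valuation of removed agent} at iteration $\tau_{j+1}-1\ge\tau_j$ (where $i_{j+1}$ is still active) gives $v_{i_j}(X^{\tau_{j+1}-1}_{i_{j+1}})=0$, hence $v_{i_j}(X^{t}_{i_{j+1}})=0$ by freezing, so that cross-term vanishes at time $t$. Consequently $\delta_j\le 0$ for every self-term, every ascent edge, and every edge whose head is already frozen by $t'$.

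The main obstacle is the remaining edges: the \emph{descent/tie} edges $(i_j,i_{j+1})$ with $t'<\tau_{j+1}\le\tau_j$. Here $i_{j+1}$ freezes no later than $i_j$, both are active at the moment $i_{j+1}$ freezes, \Cref{prop: valuation of removed agent} does not apply, and the forward value $v_{i_j}(X^t_{i_{j+1}})$ can be positive, so the naive per-edge estimate may be positive. To control these I would abandon the edge-by-edge view and amortise over the window, writing $cost_{X^t}(P)-cost_{X^{t'-1}}(P)=\sum_{s=t'}^{t}\bigl(cost_{X^s}(P)-cost_{X^{s-1}}(P)\bigr)$ and analysing a single iteration $s$, in which (again by \Cref{prop: removed agent and transfer path}) items move only among agents of $R(s)$. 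Tracking the contribution of each item to $cost_{X^s}(P)$ — a quantity that depends only on the path-position of its current holder, and which by non-redundancy (\Cref{app:thm:wefable--iff-no-cycles9}) satisfies $v_{i_j}(X_{i_{j+1}})\le|X_{i_{j+1}}|=v_{i_{j+1}}(X_{i_{j+1}})$, letting descent cross-terms be charged against the head's own self-term — should show the accumulated change is non-positive, yielding the contradiction. Making this item-accounting precise, so that the surplus of the descent/tie edges is provably absorbed rather than left over, is the crux of the argument.
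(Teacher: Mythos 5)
Your first paragraph is correct, and it is in fact the same observation the paper uses to close its own proof: once you have an edge with $i_j\notin R(t)$ and $i_{j+1}\in R(t)$, the condition $v_{i_j}(X^t_{i_{j+1}})=0$ comes for free from \Cref{prop: valuation of removed agent}. Your edge classification in the second paragraph is also accurate. But the proposal has a genuine gap, and you have named it yourself: the descent/tie edges (edges $(i_j,i_{j+1})$ whose head is removed after $t'$ but no later than its tail) are exactly where the entire content of the proposition lives, and for them you offer only the hope that an item-accounting argument ``should'' absorb the surplus. Moreover, the specific charging scheme you sketch does not work. Non-redundancy (\Cref{app:thm:wefable--iff-no-cycles9}) lets you charge cross-term growth against growth of $|X_{i_{j+1}}|=v_{i_{j+1}}(X_{i_{j+1}})$, but an \emph{intermediate} agent on a transfer path swaps items: it loses one item it values and gains another, so $|X_{i_{j+1}}|$ and the head's self-term stay fixed, while $v_{i_j}(X^{}_{i_{j+1}})$ can strictly increase (the item surrendered may be worthless to $i_j$ while the item gained is valuable to $i_j$). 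So the descent-edge surplus is not covered by the head's self-term, and the amortization leaves an uncontrolled positive remainder; the crux is genuinely unproven.

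The paper avoids this difficulty entirely by arguing about the dynamics rather than comparing the two static costs. It takes the earliest iteration at which $cost_{X^t}(P)$ exceeds $cost_{X^{t'-1}}(P)$; since no agent's value for its own bundle ever decreases, the increase forces some cross-term to grow, i.e.\ some $i_{j'+1}$ on $P$ received a new item that $i_{j'}$ values. Receiving an item means lying on a transfer path, and by the contrapositive of \Cref{prop: removed agent and transfer path} such an agent is still in $R$ at that iteration. The path then starts at $i_1\notin R(t)$ and contains an agent of $R(t)$, so a crossing edge $(i_p,i_{p+1})$ exists, and your own first-paragraph observation supplies $v_{i_p}(X^t_{i_{p+1}})=0$. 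Note that you invoked \Cref{prop: removed agent and transfer path} only in its ``freezing'' direction (removed agents' bundles never change); its contrapositive --- any agent whose bundle gains a newly valued item is still in the game --- is the missing key, and it makes your hard case disappear: the all-removed configuration simply cannot produce a cost increase at the first moment one occurs, so no per-edge or amortized bound is ever needed.
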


\begin{proof}
Let $t > t'$ be the earliest iteration in which $cost_{X^t}(P) > cost_{X^{t'-1}}(P)$. This implies that there is at least one agent, say $1 \leq j' \leq k-1$, such that agent $i_{j'+1}$ has received a new item that $i_{j'}$ desires. In other words, $i_{j'+1}$ was part of a transfer path at the start of iteration $t$, and by \Cref{prop: removed agent and transfer path}, $i_{j'+1} \in R(t)$ (in particular, $j' \geq 2$).

Since $i_1 \notin R(t)$ and $i_{j'+1} \in R(t)$, there must exist some $1 \leq p < j'+1$ such that $i_p \notin R(t)$ at $t$ and $i_{p+1} \in R(t)$.
By \Cref{prop: valuation of removed agent}, $v_{i_p}(X^t_{i_{p+1}}) = 0$.
\end{proof}

\begin{proposition} \label{subsidy of agent not in the game}
Let $i\notin R(t)$, be an agent who was removed from the game at the start of iteration $t'<t$.
Then, for $X^t$ the resulting allocation from iteration $t$, $\ell_i(X^{t}) \leq \frac{1}{\wmin}$.
\end{proposition}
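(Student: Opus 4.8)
The plan is to analyze the highest-cost path $P = (i = i_1, \ldots, i_k)$ out of $i$ in $G_{X^t,w}$, so that $\ell_i(X^t) = cost_{X^t}(P)$, and to split into two cases according to whether this cost has grown since iteration $t'-1$. Throughout I will use that $X^t$ is non-redundant (\Cref{app:thm:wefable--iff-no-cycles9}), so $v_h(X_h^t) = |X_h^t|$ for every agent $h$.

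First I would treat the easy case $cost_{X^t}(P) \le cost_{X^{t'-1}}(P)$. Here $\ell_i(X^t) = cost_{X^t}(P) \le cost_{X^{t'-1}}(P) \le \ell_i(X^{t'-1})$, since $P$ is one of the paths from $i$ over which $\ell_i(X^{t'-1})$ maximizes. The key observation is that, because $t'$ is the iteration at which $i$ is \emph{first} removed, $i$ must still have possessed a transfer path at the start of iteration $t'-1$, i.e.\ $i \in R(t'-1)$; hence \Cref{app:Alg4_sub_in_game} gives $\ell_i(X^{t'-1}) \le \frac{1}{w_i} \le \frac{1}{\wmin}$. (For the degenerate case $t'=1$ the allocation $X^0$ is empty, so all edge costs vanish and $\ell_i(X^0)=0$ directly.)

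Second, in the case $cost_{X^t}(P) > cost_{X^{t'-1}}(P)$, I would invoke \Cref{at least one 0 valuation} to obtain an index $j$ with $i_j \notin R(t)$, $i_{j+1} \in R(t)$, and $v_{i_j}(X_{i_{j+1}}^t)=0$, and split $P$ at the edge $(i_j,i_{j+1})$ into a prefix $P_1 = (i_1,\ldots,i_j)$, this edge, and a suffix $P_2 = (i_{j+1},\ldots,i_k)$. By non-redundancy the crossing edge has cost $cost_{X^t}(i_j,i_{j+1}) = 0 - \frac{v_{i_j}(X_{i_j}^t)}{w_{i_j}} = -\frac{|X_{i_j}^t|}{w_{i_j}}$, while \Cref{obs:non-redundant} bounds the prefix by $cost_{X^t}(P_1)\le \frac{|X_{i_j}^t|}{w_{i_j}} - \frac{|X_i^t|}{w_i}$. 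Adding these two gives $cost_{X^t}(P_1) + cost_{X^t}(i_j,i_{j+1}) \le -\frac{|X_i^t|}{w_i} \le 0$, so the prefix-plus-edge contributes nothing positive. The suffix $P_2$ starts at $i_{j+1}\in R(t)$, so \Cref{app:Alg4_sub_in_game} yields $cost_{X^t}(P_2) \le \ell_{i_{j+1}}(X^t) \le \frac{1}{w_{i_{j+1}}} \le \frac{1}{\wmin}$. Summing the three pieces gives $\ell_i(X^t) = cost_{X^t}(P) \le \frac{1}{\wmin}$, as required.

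The main obstacle is controlling the prefix $P_1$, which may traverse agents already removed from the game, for whom the selection-rule bound of \Cref{app:Alg4_sub_in_game} is unavailable. The resolution is that $P_1$ is never bounded on its own: the crossing edge supplied by \Cref{at least one 0 valuation} is engineered to have exactly the non-positive cost needed to cancel the telescoping bound on $P_1$ from \Cref{obs:non-redundant}, collapsing the entire analysis to a suffix that starts inside $R(t)$. In writing this up I would take care to check the boundary sub-cases $j=1$ (empty prefix, cost $0$) and $j+1=k$ (empty suffix, cost $0$), and to invoke $w_{i_{j+1}}\ge \wmin$ when passing from $\frac{1}{w_{i_{j+1}}}$ to $\frac{1}{\wmin}$.
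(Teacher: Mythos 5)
Your proof is correct and takes essentially the same approach as the paper's: the same two-case split on whether the path cost has grown since iteration $t'-1$, the bound $\ell_i(X^{t'-1}) \leq \frac{1}{w_i}$ via $i \in R(t'-1)$ in the easy case, and in the hard case the identical decomposition at the zero-valuation crossing edge from \Cref{at least one 0 valuation}, with the prefix bound from \Cref{obs:non-redundant} cancelled by that edge and the suffix bounded via \Cref{app:Alg4_sub_in_game}. The only cosmetic differences are that you cite \Cref{app:Alg4_sub_in_game} at iteration $t'-1$ where the paper re-derives it from \Cref{obs:non-redundant} and \Cref{lem:non-redundant-difference}, and that you spell out the degenerate cases ($t'=1$, empty prefix or suffix) which the paper leaves implicit.
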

\begin{proof}
Denote by $P_i^{t'-1}, P_i^t$ the highest-cost paths starting from $i$ at iterations $t'-1$ (before agent $i$ removed) and $t$, correspondingly. In particular, 
$cost_{X^{t'-1}}(P_i^{t}) \leq cost_{X^{t'-1}}(P_i^{t'-1}) = \ell_i(X^{t'-1})$. 
Moreover, by \Cref{obs:non-redundant} we have $cost_{X^{t'-1}}(P_i^{t'-1}) \leq \frac{|X_{i_{k}}^{t'-1}|}{w_{i_{k}}} - \frac{|X_{i}^{t'-1}|}{w_{i}}$ when $i_k$ is the last agent in $P_i^{t'-1}$. Combined with \Cref{lem:non-redundant-difference}, this gives $\ell_i(X^{t'-1}) \leq \frac{1}{w_i}$.
Therefore, if
$cost_{X^{t}}(P_i^t) = \ell_i(X^t) \leq \ell_i(X^{t'-1})$, then we are done. 

Assume now that $\ell_i(X^t) > \ell_i(X^{t'-1})$. 
Denote the path $P_i^t$ by $(i = i_1,\ldots, i_k)$.

From \Cref{at least one 0 valuation}, there exists $j \in \{1,\ldots,k-1\}$ such that $i_j\notin R(t)$, $i_{j+1}\in R(t)$ and $v_{i_{j}}(X_{i_{j+1}}^t) = 0$.
Then, by \Cref{obs:non-redundant}: 
   \begin{align}
       & \ell_i(X^t) =  cost_{X^t}(i,...,i_j) + cost_{X^t}(i_{j}, i_{j+1}) + cost_{X^t}(i_{j+1},...,i_k) \leq \nonumber \\
       &\leq \left(\frac{|X_{i_j}^t|}{w_{i_j}} - \frac{|X_i^t|}{w_i}\right) + \left(0 - \frac{|X_{i_j}^t|}{w_{i_j}}\right) + \ell_{i_{j+1}}(X^t) \leq  \nonumber\\
       \label{cost of maximum path}
       & \ell_{i_{j+1}}(X^t).
   \end{align}
   Since $i_{j+1} \in R(t)$, it follows from \eqref{cost of maximum path} and \Cref{app:Alg4_sub_in_game} that \[\ell_i(X^t) \leq \ell_{i_{j+1}}(X^t) \leq \frac{1}{w_{i_{j+1}}} \leq 
   \frac{1}{\wmin}.\]
\end{proof}
    
\begin{theorem}\label{app:theorem_21}
    For additive binary valuations, \Cref{alg:binary-additive} computes a WEF-able allocation where the subsidy to each agent $i\in N$ is at most $\frac{w_i}{\wmin}$ in polynomial-time.
    Moreover, the total subsidy is bounded by $\frac{W}{\wmin} - 1$.
\end{theorem}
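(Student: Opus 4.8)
The plan is to assemble the theorem from the structural results already established for \Cref{alg:binary-additive}, since the genuinely delicate work (controlling the cost of paths originating at agents that have left the game) is already packaged in \Cref{subsidy of agent not in the game}. First I would dispose of WEF-ability: by \Cref{app:thm:wefable--iff-no-cycles9} the returned allocation $X$ is non-redundant, so for all $i,j\in N$ we have $v_j(X_i)\le |X_i| = v_i(X_i)$, and \Cref{lem:max-value} then immediately yields that $X$ is WEF-able. For the running time, I would observe that every iteration of the while loop that performs a transfer increases the total utility $\sum_{i\in N} v_i(X_i)=\sum_{i\in N}|X_i|$ by exactly one (the initiator gains a unit, everyone else is preserved by non-redundancy); since valuations are binary this total never exceeds $m$, so there are $O(m)$ iterations, and each iteration only computes a shortest transfer path to $i_0$ (e.g.\ by BFS on the agent graph), which is polynomial.

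The core of the argument is the per-agent subsidy bound. By \Cref{thm:minsubsidy}, the minimal subsidy for agent $i$ on the output allocation $X$ is $p_i = w_i\,\ell_i(X)$, so it suffices to bound $\ell_i(X)$, and I would do this by splitting on whether $i$ was still in the game at termination. If $i$ remained in $R$ through the final transfer, then \Cref{app:Alg4_sub_in_game} gives $\ell_i(X)\le \tfrac{1}{w_i}$, whence $p_i\le w_i\cdot\tfrac{1}{w_i}=1\le \tfrac{w_i}{\wmin}$, using $w_i\ge\wmin$. If instead $i$ was removed from the game at an earlier iteration, then \Cref{subsidy of agent not in the game} gives $\ell_i(X)\le \tfrac{1}{\wmin}$, whence $p_i\le \tfrac{w_i}{\wmin}$. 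In either case $p_i\le \tfrac{w_i}{\wmin}$, establishing the first claim.

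Finally, for the total subsidy I would invoke part~(\ref{agent with 0 subsidy}) of \Cref{thm:minsubsidy}, which guarantees an agent $i^*$ with $p_{i^*}=0$. Summing the per-agent bounds over the remaining agents gives
\[
\sum_{i\in N} p_i \;=\; \sum_{i\ne i^*} p_i \;\le\; \sum_{i\ne i^*}\frac{w_i}{\wmin} \;=\; \frac{W-w_{i^*}}{\wmin}\;\le\;\frac{W-\wmin}{\wmin}\;=\;\frac{W}{\wmin}-1,
\]
where the last inequality is just $w_{i^*}\ge\wmin$. I do not expect a substantive obstacle: the heavy lifting lives entirely in the preparatory propositions. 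The one point genuinely requiring care is the bookkeeping around the \emph{returned} allocation — ensuring that the ``in the game / removed'' dichotomy is applied to the allocation actually output (the exiting iteration performs no transfer, so it coincides with the last post-transfer allocation), and that the zero-subsidy agent is subtracted correctly so the bound tightens from $\tfrac{W}{\wmin}$ to $\tfrac{W}{\wmin}-1$.
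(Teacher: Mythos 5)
Your proposal is correct and follows essentially the same route as the paper's proof: WEF-ability from non-redundancy (\Cref{app:thm:wefable--iff-no-cycles9}) via \Cref{lem:max-value}, the per-agent bound $\ell_i(X)\le \frac{1}{\wmin}$ by combining \Cref{app:Alg4_sub_in_game} (agents still in $R$) with \Cref{subsidy of agent not in the game} (removed agents), and the final $\frac{W}{\wmin}-1$ bound by subtracting the zero-subsidy agent guaranteed by \Cref{thm:minsubsidy}. The only differences are cosmetic: your iteration count ($O(m)$ transfers, each raising total utility by one) is a slightly coarser version of the paper's bookkeeping, which bounds the loop by $m+n$ iterations and spells out the cost of finding transfer paths using the subroutines of \cite{viswanathan2023general}.
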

\begin{proof}
    Together \Cref{app:Alg4_sub_in_game} and \Cref{subsidy of agent not in the game} establish that for every $i\in N$ and $t \in [T]$, $\ell_i(X^t)\leq \frac{1}{\wmin}$. Along with \Cref{app:thm:wefable--iff-no-cycles9}, \Cref{alg:binary-additive} computes a WEF-able allocation $X^{T}$ where the required subsidy per agent $i \in N$ is at most $\frac{w_i}{\wmin}$. 
As there is at least one agent who requires no subsidy (see \Cref{thm:minsubsidy}), the total required subsidy is at most $\frac{W-\wmin}{\wmin} = \frac{W}{\wmin} - 1$.

We complete the proof of \Cref{app:theorem_21} by demonstrating that \Cref{alg:binary-additive} runs in polynomial-time. 

We represent the valuations using a binary matrix $A$ where $v_i(o_j) = 1 \Longleftrightarrow A(i,j) = 1$. Hence, the allocation of items to the bundle of $i_0$ at line 1 can be accomplished in $O(mn)$ time.

    At each iteration $t$ of the while loop, either $X_{i_0}$ or $R$ reduced by 1, ensuring that the loop runs at most $m+n$ times.

    Let $T_v$ represent the complexity of computing the value of a bundle of items, and $T_\phi$ denote the complexity of computing the gain function. Both are polynomial in $m$.

According to \cite{viswanathan2023general}, finding a transfer path starting from agent $i \in N$ (or determining that no such path exists) takes $O(T_v \log m)$. Removing agents at the start of each iteration incurs a complexity of $O(n T_v \log m)$. Furthermore, as stated in \cite{viswanathan2023general}, identifying $u$ requires $O(nT_v)$. Updating the allocation based on the transfer path, according to the same source, takes $O(m)$.

Thus, each iteration has a total complexity of $O(nT_v \log m + nT_v + T_v \log m + m) = O\left( nT_v \log m + m\right)$.

    In conclusion, \Cref{alg:binary-additive} runs in $O\left(\left(m+n\right)\left(nT_v \log m + m\right)\right)$, which is polynomial in both $m$ and $n$.
    \end{proof}
    In Appendix \ref{alg:binary-additive tightness}, we present a tighter bound that is closer to the lower bound, along with a detailed discussion on its tightness.
    
Notice that since the output allocation from \Cref{alg:binary-additive} is non-redundant, $X$ maximizes the social welfare. 
Moreover, as shown in \Cref{app:example: binary additive}, $X$ might not be \emph{WEF(1,0)} (No matter which item is removed from $i_2$'s bundle, $i_1$ still envies). 
However, it is \emph{WEF(0,1)}. 
\begin{proposition} \label{app:binary additive wef01}
    For additive binary valuations, \Cref{alg:binary-additive} computes a WEF$(0,1)$ allocation.
\end{proposition}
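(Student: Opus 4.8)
The plan is to reduce WEF$(0,1)$ to a purely weight-combinatorial inequality and then dispatch it with the structural lemmas already proved for \Cref{alg:binary-additive}. Recall WEF$(0,1)$ requires, for every ordered pair $(i,j)$, a set $B\subseteq X_j$ with $|B|\le 1$ such that $\frac{v_i(X_i)+v_i(B)}{w_i}\ge \frac{v_i(X_j)}{w_j}$. If $v_i(X_j)=0$ I take $B=\emptyset$ and the condition is immediate. If $v_i(X_j)\ge 1$, I pick any $o\in X_j$ with $v_i(o)=1$ and set $B=\{o\}$, so it remains to prove $\frac{v_i(X_i)+1}{w_i}\ge \frac{v_i(X_j)}{w_j}$. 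Since the output $X$ is non-redundant (\Cref{app:thm:wefable--iff-no-cycles9}) we have $v_i(X_i)=|X_i|$ and $v_i(X_j)\le |X_j|$, so it suffices to establish the weight-only bound $\frac{|X_j|}{w_j}\le \frac{|X_i|}{w_i}+\frac{1}{w_i}$ whenever $v_i(X_j)\ge 1$.

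Next I would fix the time structure. The while loop terminates only when $R=\emptyset$, so every agent $a$ leaves the game at some iteration $t_a$, and by \Cref{prop: removed agent and transfer path} it never lies on a later transfer path; hence its bundle is \emph{frozen} from then on, i.e. $X_a=X_a^{t_a-1}$ and $|X_a^{t}|=|X_a^{t_a-1}|$ for all $t\ge t_a-1$. Moreover, if $v_i(X_j)\ge 1$ then $i$ values some item, so $i$ had a transfer path already at the first iteration; thus $t_i\ge 2$ and $i\in R(t_i-1)$.

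I would then split on the removal order of $i$ and $j$. If $t_j>t_i$, then $j$ is in the game at iteration $t_j-1\ (\ge t_i)$, so \Cref{prop: valuation of removed agent}, applied to the removed agent $i$, gives $v_i(X_j^{t_j-1})=0$; as $X_j=X_j^{t_j-1}$ this forces $v_i(X_j)=0$, contradicting $v_i(X_j)\ge 1$. Hence only $t_j\le t_i$ is possible in this branch. In that case both bundles are frozen by iteration $t_i-1$: $|X_i|=|X_i^{t_i-1}|$, and since $t_j-1\le t_i-1$ also $|X_j|=|X_j^{t_j-1}|=|X_j^{t_i-1}|$. Applying the selection-rule bound \Cref{lem:non-redundant-difference} at $t=t_i-1$ (valid because $i\in R(t_i-1)$) yields $\frac{|X_j^{t_i-1}|}{w_j}-\frac{|X_i^{t_i-1}|}{w_i}\le \frac{1}{w_i}$, and substituting the frozen values gives exactly $\frac{|X_j|}{w_j}\le \frac{|X_i|}{w_i}+\frac{1}{w_i}$, completing the reduction.

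The main obstacle is the case where $i$ and $j$ are \emph{both} eventually removed with $j$ removed no later than $i$: there the convenient identity $v_i(X_j)=0$ is unavailable, and the global subsidy bound $\ell_i(X)\le 1/\wmin$ is too coarse, since WEF$(0,1)$ demands the tighter per-agent slack $1/w_i$ rather than $1/\wmin$. The key idea that overcomes this is to evaluate the in-game inequality \Cref{lem:non-redundant-difference} at the last iteration $t_i-1$ before $i$ leaves the game, and to note that both $|X_i|$ and $|X_j|$ are already frozen at that moment, so the sharp slack $1/w_i$ coming from the gain-function selection rule transfers verbatim to the final allocation. The only boundary issue to check, namely $t_i=1$, cannot occur under the hypothesis $v_i(X_j)\ge 1$, since such an $i$ necessarily possesses a transfer path at the outset.
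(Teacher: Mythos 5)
Your proof is correct, but it takes a genuinely different route from the paper's. The paper proves the proposition by induction over the iterations of \Cref{alg:binary-additive}, maintaining ``$X^t$ is WEF$(0,1)$'' as an invariant and performing a case analysis on how $v_i(X_j^t)$ compares with $v_i(X_j^{t-1})$ (bundle unchanged; value unchanged; value decreased; value increased, with two subcases according to whether $j$ initiated the transfer path or was an intermediate node on it), re-deriving the selection-rule consequences inline in the hardest subcase. You instead argue directly about the \emph{final} allocation: you reduce WEF$(0,1)$ via non-redundancy (\Cref{app:thm:wefable--iff-no-cycles9}) to the counting inequality $\frac{|X_j|}{w_j}\le\frac{|X_i|}{w_i}+\frac{1}{w_i}$ whenever $v_i(X_j)\ge 1$, and then exploit the removal-time structure: if $j$ outlives $i$, \Cref{prop: valuation of removed agent} together with the freezing of bundles after removal (\Cref{prop: removed agent and transfer path}) forces $v_i(X_j)=0$, a contradiction; otherwise both bundles are frozen by iteration $t_i-1$, at which moment $i$ is still in the game, so \Cref{lem:non-redundant-difference} delivers exactly the required slack $1/w_i$. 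Your treatment of the boundary case ($t_i\ge 2$ because the hypothesis $v_i(X_j)\ge 1$ guarantees $i$ has a transfer path at the outset) is also sound. What the two approaches buy: the paper's induction establishes the stronger statement that \emph{every intermediate} allocation $X^t$ is WEF$(0,1)$ (an anytime guarantee), at the cost of a longer case analysis; your argument is shorter and more modular, since it reuses the structural lemmas already proved for the subsidy bound and makes transparent that the per-agent slack $1/w_i$ (rather than the coarser $1/\wmin$) comes from evaluating the gain-function inequality at the last iteration in which $i$ remains in the game.
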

\begin{proof}
We prove by induction that at the end of each iteration $t\in [T]$
    $X^t$ satisfies WEF$(0,1)$. This means that for every $i,j \in N$, there exists a set of items $B \subseteq X_j^t$ of size at most $1$ such that $\frac{v_i(X_i^t) + v_i(B)}{w_i} \geq \frac{v_i(X_j^t)}{w_j}$.

The claim is straightforward for the first iteration. We assume the claim holds for the $(t-1)$-th iteration and prove it for the $t$-th iteration. Note that $v_i(X_i^{t-1}) \leq v_i(X_i^{t})$.
\begin{enumerate}
    \item $X_j^t = X_j^{t-1}$ and $X_i^t = X_i^{t-1}$: the claim holds due to the induction step.
    \item $v_i(X_j^t) = v_i(X_j^{t-1})$: This is the case where $j$ was not included in a transfer path, or was included but was not the first agent in the path, and exchanged an item for a new one, both having the same value for $i$. 
    By the induction assumption, there exists some singleton $B^{t-1} \subseteq X_j^{t-1}$
    such that $\frac{v_i(X_i^{t-1}) + v_i(B^{t-1})}{w_i} \geq \frac{v_i(X_j^{t-1})}{w_j} = \frac{v_i(X_j^t)}{w_j}$.
    There exists some singleton $B^{t} \subseteq X_j^{t}$, with $v_i(B^t) = v_i(B^{t-1})$.
    Hence, 
    $\frac{v_i(X_i^{t}) + v_i(B^t)}{w_i} \geq \frac{v_i(X_i^{t-1}) + v_i(B^{t-1})}{w_i} \geq \frac{v_i(X_j^{t-1})}{w_j} = \frac{v_i(X_j^t)}{w_j}$.
    \item  $v_i(X_j^t) < v_i(X_j^{t-1})$: This is the case where $j$ was included in a transfer path but exchanged an item $i$ values for an item $i$ does not value. Then $\frac{v_i(X_i^{t}) + v_i(B)}{w_i} \geq \frac{v_i(X_i^{t-1}) + v_i(B)}{w_i} \geq \frac{v_i(X_j^{t-1})}{w_j} > \frac{v_i(X_j^t)}{w_j}$
        for a set $B \subseteq X_j^t$ of size at most $1$.
    \item $v_i(X_j^t) > v_i(X_j^{t-1})$: there are two subcases:
    \begin{enumerate}
        \item If $j$ is the first agent in the transfer path and received a new item $o$ such that $v_i(o) = 1$, then 
        $\frac{v_i(X_i^{t-1}) + 1}{w_i} \geq \frac{v_j(X_j^{t-1}) + 1}{w_j}$ due to the selection rule,
        and 
        $\frac{v_j(X_j^{t-1}) + 1}{w_j}
        \geq \frac{v_i(X_j^{t-1}) + 1}{w_j}$ due to non-redundancy.
        
        We can conclude that $\frac{v_i(X_i^{t}) + 1}{w_i} \geq \frac{v_i(X_i^{t-1}) + 1}{w_i} \geq \frac{v_i(X_j^{t-1}) + 1}{w_j} = \frac{v_i(X_j^{t})}{w_j}$. The claim holds for $B = \{o\}\subseteq X_j^t$.
        \item If $j$ was not the first agent in the path, but exchanged an item that $i$ does not value for an item $o$ that $i$ values, $v_i(o) = 1$.
        Let $t' < t$ represent the most recent iteration in which agent $j$ was selected and received a new item. Note that $\frac{v_i(X_i^{t}) + 1}{w_i} \geq \frac{v_i(X_i^{t'-1}) + 1}{w_i} \geq \frac{v_j(X_j^{t'-1}) + 1}{w_j}$ due to the selection rule.
        
        Assume to the contrary that $\frac{v_i(X_i^t) + 1}{w_i} < \frac{v_i(X_j^t)}{w_j}$. Then,
        \begin{align*}
            & \frac{v_i(X_i^t) + 1}{w_i} < \frac{v_i(X_j^t)}{w_j} \leq \frac{v_j(X_j^{t})}{w_j} = \frac{v_j(X_j^{t'})}{w_j} = \\ &\frac{v_j(X_j^{t'-1}) + 1}{w_j} \leq \frac{v_i(X_i^{t'-1}) + 1}{w_i} \leq \frac{v_i(X_i^{t}) + 1}{w_i},
        \end{align*}
        a contradiction.
        Hence, $\frac{v_i(X_i^t) + 1}{w_i} \geq \frac{v_i(X_j^t)}{w_j}$ and the claim holds for $B = \{o\}\subseteq X_j^t$.
    \end{enumerate}
\end{enumerate}
\end{proof}

\subsection{Matroidal Valuation}
\label{sec:matroidal}
 Another valuation class,  slightly more general than binary additive, 
is the class of \emph{matroidal valuations}, also called \emph{matroid rank valuations}~ (\cite{BarmanVermaAAMAS2021,benabbou2020}). In this section we prove that the subsidy bound provided by \Cref{alg:binary-additive} for agents with binary additive valuations, which is $\frac{W}{\wmin} - 1$ (\Cref{app:theorem_21}),
is not applicable for agents with matroidal valuations. Specifically, we show a lower bound of $\frac{m}{n}\left(\frac{W}{\wmin} - n\right)$, which is linearly increasing with $m$.

A matroidal valuation is based on a matroid ${\mathcal F}$ over $M$.%
\footnote{
Formally, ${\mathcal F}$ is a family of subsets of $M$ such that
$\emptyset \in {\mathcal F}$, $F' \subseteq F \in {\mathcal F}$ implies $F' \in {\mathcal F}$, and for any $F, F' \in {\mathcal F}$ where 
$|F|>|F'|$, there exists an item $o \in F\setminus F'$ s.t. $F'\cup\{o\} \in {\mathcal F}$.
}
Then, the value of each subset $A\subseteq M$ equals $\max_{F \in {\mathcal F}} |A\cap F|$. 

Note that a matroidal valuation is submodular, but not necessarily additive.
A binary additive valuation is a special case of a matroidal valuation, in which each agent $i$ has ${\mathcal F}_i= \{F \mid F\subseteq B_i\}$, where $B_i:=\{o \mid o \in M, v_i(o)=1\}$.

We demonstrate that the total subsidy bound derived by \Cref{alg:binary-additive} for agents with binary additive valuations might not hold for agents with matroidal valuations.

\begin{theorem}\label{thm:subsidy-lb-matroidal}
There exists an instance with matroidal valuations for which, in any WEF non-wasteful allocation, 
the subsidy for some agent $i\in N$ is at least $\frac{m}{n} 
(\frac{w_{i}}{\wmin}-1)$, and the total subsidy is at least $\frac{m}{n}\left(\frac{W}{\wmin} - n\right)$.
\end{theorem}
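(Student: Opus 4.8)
The plan is to exhibit a single highly symmetric instance and show that non-wastefulness pins the allocation down up to relabeling, after which the subsidy bound drops out of the WEF inequalities. Concretely, I would take $n$ agents and $m$ items with $n \mid m$, and give \emph{all} agents the same matroidal valuation: the rank function of the uniform matroid of rank $r := m/n$ on $M$, so that $v_i(A) = \min(|A|, r)$ for every $i$ and every $A \subseteq M$. Since a single item has value $\min(1,r) = 1$ and $\min(|A|,r)/|A| \le 1$, we immediately get $V = 1$, matching the $V$-free form of the stated bound.

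The crux is to prove that every non-wasteful allocation $X$ is perfectly balanced: $|X_i| = r$, and hence $v_i(X_j) = r$ for all $i,j$. First I would note $\sum_i |X_i| = m = nr$, so if the allocation were not balanced, a pigeonhole argument forces some agent $a$ with $|X_a| > r$ and some agent $b$ with $|X_b| < r$. For any $o \in X_a$ we have $|X_a| - 1 \ge r$, so $v_a(X_a \setminus \{o\}) = r = v_a(X_a)$, which triggers the hypothesis in \Cref{def:non-wasteful}; but $|X_b| < r$ gives $v_b(X_b \cup \{o\}) = |X_b| + 1 > |X_b| = v_b(X_b)$, so moving $o$ from $a$ to $b$ is a Pareto improvement, contradicting non-wastefulness. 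Hence $|X_i| = r$ for all $i$, and because valuations are identical, $v_i(X_j) = \min(|X_j|, r) = r$ for every pair $i,j$. (By \Cref{lem:max-value} such an $X$ is WEF-able, so a finite subsidy vector exists.)

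With the allocation balanced, the subsidy bound follows from the WEF constraints directed at the lightest agent. For any subsidy vector $\mathbf{p}$ with $(X,\mathbf{p})$ WEF and any agent $i$, the WEF inequality from $i$ to agent $1$ (weight $\wmin$) reads $\frac{r + p_i}{w_i} \ge \frac{r + p_1}{\wmin} \ge \frac{r}{\wmin}$, whence $p_i \ge r\bigl(\frac{w_i}{\wmin} - 1\bigr) = \frac{m}{n}\bigl(\frac{w_i}{\wmin} - 1\bigr)$; this already establishes the per-agent claim for every $i$. Summing over all agents and using $r = m/n$ yields $\sum_{i} p_i \ge \frac{m}{n}\bigl(\frac{W}{\wmin} - n\bigr)$, the desired total subsidy.

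I expect the only real obstacle to be the balancedness step: one must confirm that \emph{non-wastefulness} — not mere Pareto efficiency or WEF-ability — is exactly the lever that rules out lopsided bundles, and that the uniform matroid is the right gadget to make the redundant-item/transfer argument go through (a free/additive matroid would \emph{not} force balance, which is precisely why binary additive valuations admit the much smaller $\frac{W}{\wmin}-1$ bound). If $n \nmid m$ one would instead set $r = \lfloor m/n \rfloor$ and absorb the rounding, which weakens the constants but not the $\Theta(m)$ growth; I would present the clean divisible case in the main argument.
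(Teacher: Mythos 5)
Your proposal is correct and follows essentially the same route as the paper: the same gadget (all agents share the uniform-matroid rank valuation $v_i(A)=\min(|A|,\,m/n)$ on $m=nk$ items), the same key step (non-wastefulness forces every bundle to have exactly $m/n$ items), and the same conclusion read off from the envy of heavier agents toward the minimum-weight agent. The only differences are cosmetic and in your favor: you actually prove the balancedness claim (pigeonhole plus the item-transfer contradiction) that the paper merely asserts, and you extract the per-agent bound directly from the WEF inequality and $p_1\ge 0$ rather than via the envy-graph path-cost machinery of \Cref{thm:minsubsidy}.
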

\begin{proof}
Assume there are $n$ agents with weights $w_{1} \leq \ldots \leq w_n$. 
There are $m=nk$ items, and the valuation of each agent $i$ for bundle $X_i$ 
is given by $\min(k, |X_i|)$, i,e., each agent values at most $k$ items
(We can assume that the feasible bundles are defined as ${\mathcal F} = \{F \mid F\subseteq M, |F|\leq k\}$). 

In this setting, the only non-wasteful allocation is one in which each agent is allocated exactly $k$ items.
An agent $i\in N$ envies another agent $j\neq i \in N$ by amount given by 
$\frac{k}{w_j} - \frac{k}{w_i}$. Equivalently, the cost of any path $P=(i,\ldots, j)$ for some $j \neq i \in N$ is
$
    k\left(\frac{1}{w_{j}} - \frac{1}{w_i}\right).
$
The maximum-cost path starting from agent $i$ ends at agent $1$, and its cost is
$
     k\left(\frac{1}{\wmin} - \frac{1}{w_i}\right).
$
To eliminate envy, each agent $i\in N$ must receive a subsidy of 
$
k\left(\frac{w_i}{\wmin} - 1\right) = \frac{m}{n}\left(\frac{w_i}{\wmin} - 1\right).
$
The total subsidy required is 
\begin{align*}
   &\sum_{1\leq i \leq n}\frac{m}{n}\left(\frac{w_i}{\wmin} - 1\right) =
   \sum_{1 < i \leq n}\frac{m}{n}\left(\frac{w_i}{\wmin} - 1\right) =  \frac{m}{n}\left( \frac{W - \wmin}{\wmin} - \left(n - 1\right) \right) = \\ &\frac{m}{n}\left( \frac{W}{\wmin} - n \right)
   .
\end{align*}
\end{proof}
Importantly, whereas the upper bound for binary additive valuations $\frac{W}{\wmin} - 1$ is independent of $m$, the lower bound for matroidal valuations is increasing with $m$.

\section{Additive Valuations and Identical Items} 
\label{sec:identical_items}
With a slight abuse of notation, we denote by $v_i$ the value of agent $i$ to a single item.
Thus, for an allocation $X$, if $|X_i| = m_i$, then $v_i(X_i) = v_i m_i$. 
We denote an allocation by a tuple of integers  $(m_1,m_2,\ldots, m_n)$,  representing the numbers of items allocated to the agents. Note that $m=m_1+\cdots+m_n$.


In this section, for simplicity, we order the agents in descending order of their \emph{value} rather than their weight, that is,
we sort the agents so that $(V \ge)~v_1 \ge v_2 \ge \ldots \ge v_n$.

We show matching upper and lower bounds on the subsidy per agent and on the total subsidy.

The following lemma states that weighted envy-freeability can be characterized by the weighted reassignment-stability condition for swapping only a pair of two agents.
\begin{lemma}
\label{lem:WEFability-additive-identical-items}
For additive valuations with identical items, 
an allocation $X = (m_1,m_2\dots, m_n)$ is WEF-able if and only if 
for each $1 \le i,j \le n$ with $v_i < v_j $ we have 
$\frac{m_i}{w_i} \le \frac{m_j}{w_j}$. 
\end{lemma}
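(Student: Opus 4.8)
The plan is to reduce the statement to the cycle / reassignment-stability characterization of \Cref{th:ef} and then invoke the rearrangement inequality. First I would rewrite everything in terms of the per-weight quantities $a_i := m_i/w_i$. Since the items are identical, $v_i(X_j) = v_i m_j$, so the edge cost in the weighted envy-graph simplifies to $cost_X(i,j) = \frac{v_i m_j}{w_j} - \frac{v_i m_i}{w_i} = v_i(a_j - a_i)$, and the cost of the length-two cycle on a pair $\{i,j\}$ is $v_i(a_j - a_i) + v_j(a_i - a_j) = (v_i - v_j)(a_i - a_j)$. This rank-one form is what drives the whole argument.

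For the forward direction, suppose $X$ is WEF-able. By \Cref{th:ef} the graph $G_{X,w}$ has no positive-cost cycle; applying this to the two-cycle on an arbitrary pair $\{i,j\}$ gives $(v_i - v_j)(a_i - a_j) \ge 0$. In particular, whenever $v_i < v_j$ the factor $v_i - v_j$ is negative, which forces $a_i \le a_j$, i.e. $\frac{m_i}{w_i} \le \frac{m_j}{w_j}$, exactly as claimed. This is the ``swapping only a pair of two agents'' part of the characterization, and it needs only two-cycles.

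For the backward direction, I would first observe that the stated hypothesis is equivalent to the \emph{comonotonicity} condition $(v_i - v_j)(a_i - a_j) \ge 0$ for every pair $i,j$: the case $v_i = v_j$ makes the product zero automatically, and the case $v_i \neq v_j$ is precisely the hypothesis (applied with the roles of $i,j$ chosen so that the smaller value comes first). By \Cref{th:ef}(\ref{th:ef:condition2}), $X$ is WEF-able if and only if it is weighted reassignment-stable, which for identical items reads $\sum_{i} v_i a_i \ge \sum_{i} v_i a_{\pi(i)}$ for every permutation $\pi$; equivalently, the identity permutation maximizes the rank-one assignment objective $\pi \mapsto \sum_i v_i a_{\pi(i)}$. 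The rearrangement inequality guarantees that $\max_\pi \sum_i v_i a_{\pi(i)}$ equals the value obtained by pairing the two sequences in the same (comonotone) order; under our hypothesis the identity already realizes such a comonotone pairing, so it attains the maximum, whence $X$ is reassignment-stable and therefore WEF-able.

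The main obstacle is this backward direction: the hypothesis is a purely pairwise (two-agent) condition, whereas WEF-ability forbids positive cycles of \emph{every} length, so one must lift local pairwise stability to global stability. The rank-one structure $cost_X(i,j) = v_i(a_j - a_i)$ is exactly what makes this possible, since it lets the rearrangement inequality convert pairwise comonotonicity into optimality of the identity over all permutations simultaneously, ruling out positive cycles of any length. I would take care that ties among the $v_i$ are handled correctly: where $v_i = v_j$ the hypothesis imposes no constraint on $a_i$ versus $a_j$, and this is consistent because equal values impose no ordering requirement in the comonotone maximizer either.
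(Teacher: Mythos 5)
Your proof is correct, and your forward direction coincides with the paper's: the paper applies reassignment stability to the permutation swapping a single pair, which is exactly the two-cycle condition you use. The backward direction, however, takes a genuinely different route. The paper verifies condition (3) of \Cref{th:ef} directly: after a suitable relabeling it proves $cost_X(i,j)\le cost_X(i,i-1)+\dots+cost_X(j+1,j)$ by induction on the index gap, then partitions an arbitrary cycle into ``ascending'' and ``descending'' edges, replaces each ascending edge by a chain of consecutive edges, and shows each descending edge together with its assigned chain has non-positive cost via a telescoping estimate. You instead verify condition (2): with identical items, reassignment stability is the rank-one assignment objective $\pi\mapsto\sum_i v_i a_{\pi(i)}$ with $a_i=m_i/w_i$, and pairwise comonotonicity plus the rearrangement inequality (sorting the pairs by $v$, with ties broken so the $a$'s are nondecreasing within blocks of equal $v$'s, so that both sequences become nondecreasing) shows the identity permutation is a maximizer. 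Your route is shorter, more conceptual, and avoids the index bookkeeping that makes the paper's writeup error-prone; your explicit handling of ties is also correct and necessary. What the paper's longer argument buys is structural information that is reused downstream: its chain inequality is precisely what lets the paper assume, in the proof of \Cref{thm:subsidy-ub-additive-identical-items}, that highest-cost paths run through consecutive agents $(i,i-1,\ldots,1)$; the rearrangement argument certifies WEF-ability without yielding that path structure.

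One small algebra slip in your forward direction: the two-cycle cost is $v_i(a_j-a_i)+v_j(a_i-a_j)=(v_i-v_j)(a_j-a_i)$, not $(v_i-v_j)(a_i-a_j)$. Non-positivity of this cycle cost is equivalent to $(v_i-v_j)(a_i-a_j)\ge 0$, which is the conclusion you state, so the direction stands once the sign in the displayed identity is corrected.
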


\begin{proof}
We first show the only-if part.
Assume that $(m_1,\ldots, m_n)$ is WEF-able.
Then, by Theorem~\ref{th:ef}, it is weighted reassignment-stable.
For the permutation that only swaps $i$ and $j$, inequality~\eqref{eq:WRS} in the definition of weighted reassignment-stability implies that 
$\frac{v_i \cdot m_i}{w_i} + \frac{v_j \cdot m_j}{w_j} \ge \frac{v_i \cdot m_j}{w_j} + \frac{v_j \cdot m_i}{w_i}$.
When $v_i < v_j$, this implies that $\frac{m_i}{w_i} \le \frac{m_j}{w_j}$.

We now show the if part.
WLOG, we can assume when $i> j$ and $v_i=v_j$, 
$m_i/w_i \ge m_j/w_j$ holds (otherwise we can rename agents' identifiers). 
We will show that for any $i > j$ 
we have 
\begin{align}
\label{eq:cost_pij}
    &cost_X(i,j)\le cost_X(i, i-1)+cost_X(i-1,i-2)+\dots +cost_X(j+1,j)
\end{align}

Once this is done, we can show that any cycle in the weighted envy-graph has a non-positive cost as follows.
Let $C$ be a cycle in the weighted envy-graph.
We partition the set of edges of $C$ into the two sets of ``ascending'' edges and ``descending'' edges: 
\begin{align*}
E(C) = E_+(C) \cup E_-(C),
\end{align*}
 where 
 \begin{align*}
 E_+(C) =\{ (i,j) \in E(C) \mid i>j \} \text{ and }
E_-(C) =\{ (j,i) \in E(C) \mid i>j \}. 
 \end{align*}

To show $C$ has non-positive 
cost is to show 
\begin{align*}
\sum_{(i,j)\in E_+(C)}cost_X(i,j)+\sum_{(j,i)\in E_-(C)}cost_X(j,i) \le 0    
\end{align*}
The inequality \eqref{eq:cost_pij} implies
\begin{align*}
 &\sum_{(i,j)\in E_+(C)}cost_X(i,j)
 \leq \sum_{(i,j)\in E_+(C)}(cost_X(i,i-1)+\dots +cost_X(j+1,j)).
\end{align*}
Let $E'_+(C)$ be a \emph{multiset} of edges of $C$ defined as 
\begin{align*}
E'_+(C)=\cup_{(i,j)\in E_+(C)}\{ (i,i-1),(i-1,i-2),\dots ,(j+1,j) \}.    
\end{align*}
Then it suffices to show that 
\begin{align*}
\sum_{(i,j)\in E'_+(C)}cost_X(i,j)+\sum_{(j,i)\in E_-(C)}cost_X(j,i) \le 0.    
\end{align*}
Now, to each edge $(j_0,i_0) \in E_-(C)$,
we assign the subset of edges 
$\{ (i_0,i_0-1),\dots, (j_0+1,j_0) \}$ from $E'_+(C)$;
this assignment partitions $E'_+(C)$ into disjoint subsets,  since $C$ is a cycle.

Then it suffices to show that the sum of costs of ``ascending''  edges assigned to $(j_0,i_0) \in E_-(C)$ plus the cost of $(j_0,i_0)$ itself (i.e., $cost_X(i_0,i_0-1)+cost_X(i_0-1,i_0-2)+\dots +cost_X(j_0+1,j_0)+cost_X(j_0,i_0)$) is non-positive, since $\sum_{(i,j)\in E'_+(C)}cost_X(i,j)+\sum_{(j,i)\in E_-(C)}cost_X(j,i)$ is the sum of these values.

Indeed, we have
\begin{align*}
\tiny
&cost_X(i,i-1)+ \dots 
 +cost_X(j+1,j)
+cost_X(j,i)\\
&=v_{i}\left(\frac{m_{i-1}}{w_{i-1}}-\frac{m_{i}}{w_{i}}\right)+\dots 
+v_{j+1}\left(\frac{m_{j}}{w_{j}}-\frac{m_{j+1}}{w_{j+1}}\right)   + v_{j}\left(\frac{m_{i}}{w_{i}}-\frac{m_{j}}{w_{j}}\right)
\\
&
\le v_{j}\left(\frac{m_{i-1}}{w_{i-1}}-\frac{m_{i}}{w_{i}}\right)+\dots 
+v_{j}\left(\frac{m_{j}}{w_{j}}-\frac{m_{j+1}}{w_{j+1}}\right) + v_{j}\left(\frac{m_{i}}{w_{i}}-\frac{m_{j}}{w_{j}}\right)\\
&=v_{j}\left(\frac{m_{j}}{w_{j}}-\frac{m_{i}}{w_{i}}\right) + v_{j}\left(\frac{m_{i}}{w_{i}}-\frac{m_{j}}{w_{j}}\right)= 0,
\end{align*}
where we use the fact that $v_i \ldots v_{j+1} \le v_{j}$ and $\frac{m_{k}}{w_{k}} \ge \frac{m_{k+1}}{w_{k+1}}$ for $i \le k \le j-1$ in the inequality.
Therefore, any cycle in the weighted envy-graph has a non-positive cost, and the allocation is WEF-able by Theorem~\ref{th:ef}.

It remains to prove
the inequality \eqref{eq:cost_pij}. 
We show this by induction on $j-i$.
If $i-j = 1$, then $cost_X(i,j) = cost_X(i,i-1)$ holds trivially.

Assume $i-j > 1$.
By the inductive hypothesis, we have 
\begin{align*}
&
cost_X(i,j+1)\le cost_X(i,i-1)
+cost_X(i-1,i-2)
+\dots +cost_X(j+2,j+1).    
\end{align*}
Hence, it suffices to show that $cost_X(i,j)\le cost_X(i,j+1)+cost_X(j+1,j)$.
Indeed, 
\begin{align*}
cost_X(i,j) &= v_{i}\left(\frac{m_{j}}{w_{j}}-\frac{m_{i}}{w_{i}}\right)= v_{i}\left(\frac{m_{j}}{w_{j}}-\frac{m_{j+1}}{w_{j+1}}\right) + v_{i}\left(\frac{m_{j+1}}{w_{j+1}}-\frac{m_{i}}{w_{i}}\right)\\
&\le v_{j+1}\left(\frac{m_{j}}{w_{j}}-\frac{m_{j+1}}{w_{j+1}}\right) + v_{i}\left(\frac{m_{j+1}}{w_{j+1}}-\frac{m_{i}}{w_{i}}\right)= cost_X(j+1,j)+cost_X(i,j+1),
\end{align*}
where we use the fact that $v_i \le v_{j+1}$ and $\frac{m_{j+1}}{w_{j+1}} \le \frac{m_{j}}{w_{j}}$ in the inequality.
\end{proof}

Using this \Cref{lem:WEFability-additive-identical-items}, we first prove a lower bound.

\begin{theorem}\label{thm:subsidy-lb-additive-identical-items}
Even with additive valuations and identical items, it is impossible to guarantee that the subsidy per agent is smaller than $V w_i \sum_{1 \le j < i} \frac{1}{w_j}$, or that the total subsidy is smaller than $V \sum_{2 \leq i \leq n} \left( w_i \sum_{1 \le j < i} \frac{1}{w_j} \right)$.
\end{theorem}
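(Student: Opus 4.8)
The plan is to translate the claim into a statement about maximum-cost paths in the weighted envy-graph and then exhibit an instance in which those paths are unavoidably long. Write an allocation of identical items as $(m_1,\dots,m_n)$ and put $r_i:=m_i/w_i$; since the agents are sorted by strictly decreasing value, \Cref{lem:WEFability-additive-identical-items} says the WEF-able allocations are exactly those with $r_1\ge r_2\ge\cdots\ge r_n\ge 0$, and \Cref{thm:minsubsidy} gives the minimum subsidy of agent $i$ as $p_i=w_i\,\ell_i(X)$. The first step is to evaluate $\ell_i(X)$. Using $cost_X(a,b)=v_a(r_b-r_a)$, replacing a direct edge $i\to j$ by the detour $i\to k\to j$ (with $i>k>j$) changes the cost by $(v_i-v_k)(r_k-r_j)\ge 0$, both factors being non-positive because values and densities are both monotone in the index. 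Hence the maximum-cost path out of $i$ is the full descending path $i\to i-1\to\cdots\to 1$, so that
\[
\ell_i(X)=\sum_{k=2}^{i}v_k\,(r_{k-1}-r_k),\qquad p_i=w_i\sum_{k=2}^{i}v_k\,(r_{k-1}-r_k).
\]
Summing over $i$ and swapping the order of summation gives the total optimal subsidy $\sum_{k=2}^{n}v_k\,(r_{k-1}-r_k)\,W_{\ge k}$, where $W_{\ge k}=\sum_{l\ge k}w_l$.

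The second step is to choose an instance in which every WEF-able allocation is forced to keep consecutive densities apart by essentially $1/w_{k-1}$. I would take values $v_i=V-(i-1)\varepsilon$ (strictly decreasing, tending to $V$) and weights that are strictly decreasing in the value order, and fix the number of items to the total of the minimal density-staircase. The decisive point is that strictly decreasing weights make equal consecutive densities impossible: $m_{k-1}=m_k$ already violates $r_{k-1}\ge r_k$, so WEF-ability forces $m_{k-1}\ge m_k+1$, and once the item count equals the minimal feasible total the allocation is pinned down with $r_{k-1}-r_k\to 1/w_{k-1}$. Feeding this into the path formula and letting $\varepsilon\to 0$ yields $\ell_i\to V\sum_{k=2}^{i}\tfrac1{w_{k-1}}=V\sum_{1\le j<i}\tfrac1{w_j}$, hence $p_i\to V\,w_i\sum_{1\le j<i}\tfrac1{w_j}$; and reindexing $\sum_{k=2}^{n}\tfrac1{w_{k-1}}W_{\ge k}=\sum_{1\le j<i\le n}w_i/w_j$ turns the total into $V\sum_{2\le i\le n}w_i\sum_{1\le j<i}\tfrac1{w_j}$. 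Since these are only approached, the conclusion is the ``impossible to guarantee smaller than'' statement.

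The main obstacle is the forcing step, and it is entirely an integrality phenomenon: \Cref{lem:WEFability-additive-identical-items} delivers only the weak inequalities $r_{k-1}\ge r_k$, so I must use the arithmetic of the chosen weights together with the fixed item count to rule out \emph{every} cheaper WEF-able completion, not merely to display one expensive allocation. The clean case is that of mutually close weights, where $m_{k-1}\ge m_k+1$ forces $m_i\ge n-i$ and the total $\binom{n}{2}$ makes the staircase $m_i=n-i$ the \emph{unique} WEF-able allocation; there the density gaps tend to $1/w_{k-1}$ and the bound is attained, already exhibiting the $\Omega(n^2V)$ blow-up for nearly-equal entitlements as opposed to the $O(nV)$ bound for equal ones. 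For a general weight vector the ideal staircase densities $r_i=\sum_{j\ge i}1/w_j$ need not be integral, so the honest reading is a supremum: I would argue that, as $\varepsilon\to 0$ and the weights are taken to realize (or approximate) the staircase, the minimum subsidy of \emph{every} WEF-able allocation is driven arbitrarily close to the formula. A final routine check, iterating the detour inequality above and noting that extending a path down to agent $1$ never lowers its cost, confirms that $\ell_i$ is indeed computed by the descending path, closing the reduction used in the first step.
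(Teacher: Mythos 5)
Your step~1 --- the reduction to the descending path $i\to i-1\to\cdots\to 1$ and the formulas $p_i=w_i\sum_{k=2}^{i}v_k(r_{k-1}-r_k)$, total $=\sum_{k=2}^{n}v_k(r_{k-1}-r_k)\sum_{l\ge k}w_l$ --- is correct and is the same machinery the paper uses. The gap is in step~2, precisely at what you call the decisive point. \Cref{lem:WEFability-additive-identical-items} forces $m_{k-1}\ge m_k+1$ \emph{only when} $m_k\ge 1$: if $m_k=0$ the condition $r_{k-1}\ge r_k$ is vacuous. Hence $m_i\ge n-i$ is \emph{not} forced, and the staircase is \emph{not} the unique WEF-able allocation of $\binom{n}{2}$ items; for instance $\bigl(\binom{n}{2},0,\dots,0\bigr)$, or any strictly decreasing positive prefix followed by zeros, is also WEF-able. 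Since the theorem quantifies over \emph{every} WEF-able allocation, you must still show that all of these alternatives need at least as much subsidy. This is fixable with your own formula: with all $v_k\to V$ the optimal total subsidy is $\approx V\left(W\frac{m_1}{w_1}-m\right)$, so it is minimized by minimizing $m_1$; the strict-decrease constraint on the positive prefix shows that any WEF-able allocation of $\binom{n}{2}$ items has $m_1\ge n-1$, with equality only at the staircase. But this argument, which is the actual crux of the lower bound, is absent, and in its place stands a false uniqueness claim.

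Even after that repair, your construction proves the statement only for nearly-equal weight vectors, where $Vw_i\sum_{1\le j<i}1/w_j\approx (i-1)V$ and the total is $\approx\binom{n}{2}V$; this is essentially the paper's \Cref{subsidy bound for identical items and almost equal weights}, not its proof of \Cref{thm:subsidy-lb-additive-identical-items}. The paper targets a different regime: it pairs descending values with \emph{ascending} weights satisfying $w_i=(k_i-\epsilon)w_{i-1}$ for integers $k_i\ge 2$, allocates $m_1=w_1$ and $m_i=k_i(m_{i-1}-1)$, and uses the near-integer ratios to force each density gap to be $\approx 1/w_{i-1}$. There the per-agent bound is of order $Vw_i/w_1$, growing with $w_i$, which nearly-equal weights can never exhibit. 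Your closing suggestion that for general weights one should ``take the weights to realize the staircase'' is exactly the integrality obstacle that the paper's $(k_i-\epsilon)$ device is designed to overcome (and, per the paper's remark after the theorem, some restriction on the weights is genuinely unavoidable --- e.g.\ the bound fails when weights are integer multiples of adjacent weights). So as written, the proposal both contains a concrete false step and recovers only the weakest instance family covered by the theorem.
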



\begin{remark}
\Cref{worst case allocation can be chosen with weights}
shows a lower bound for general additive valuations, which is worse than the lower bound of \Cref{thm:subsidy-lb-additive-identical-items} for identical additive valuations.
This is because the bound of \Cref{worst case allocation can be chosen with weights} holds for any weight vector, whereas the bound of \Cref{thm:subsidy-lb-additive-identical-items} holds only for weight vectors where weights are not integer multiples of adjacent weights (see the proof).
\end{remark}

\begin{proof}
We construct an instance in which $v_i = V + (n-i)\cdot \epsilon$ for some small $\epsilon > 0$. Note that $v_1 > \cdots > v_n$.
The weights are in the same order, that is, $\wmax = w_n \ge \cdots \ge w_1 = \wmin \ge 2$. 

The smallest weight $w_n$ is an integer. For any $i \in \{2, \ldots, n\}$, $w_i = \left(k_i - \epsilon \right) w_{i-1}$, for some integer $k_i \geq 2$ and some small $\epsilon>0$.

We aim to determine the number of items $m = \sum_{i \in N} m_i$ such that agent $n$ receives exactly $w_n$ items, and each agent $i \in \{2, \ldots, n\}$ envies agent $i-1$. 

By the WEF property, each agent $i$ must achieve a total utility of approximately $w_i v_i \sim w_i V$. Therefore, the total subsidy required is given by:
\[
\left(W - m\right)V = \sum_{i \in N} \left(w_i - m_i\right)V,
\]
The term $m V$ accounts for the utility generated by allocating $m$ items to the agents. 

Our goal is to find values $m_1, \ldots, m_n$ such that agent $i$ receives a subsidy of $\left(w_i - m_i\right)V$. 

First, note that for each $i \in \{2, \ldots, n\}$, we have $k_i = \frac{w_i}{w_{i-1}} + \epsilon \sim \frac{w_i}{w_{i-1}}$, where $\epsilon$ is small.

We proceed by allocating items to agents as follows:
\begin{enumerate}
    \item Agent $1$ receives $m_1 = w_1$ items, as assumed, and receives no subsidy.
    \item Agent $2$ receives 
    $m_{2} = k_{2} \left( m_1 - 1 \right) \sim \frac{w_{2}}{w_1} \left( m_1 - 1 \right) = w_{2} - \frac{w_2}{w_1}$ items. The subsidy for agent $2$ is $w_{2} v_{2} \left( \frac{m_1}{w_1} - \frac{m_{2}}{w_{2}} \right) \sim V w_{2} \frac{1}{w_1}$.
    \item Agent $3$ receives 
    $m_{3} = k_{3} \left( m_{2} - 1 \right) \sim \frac{w_{3}}{w_{2}} \left( m_{2} - 1 \right) = w_{3} - \frac{w_{3}}{w_n} - \frac{w_{3}}{w_{2}}.$ The subsidy for agent $3$ is $w_{3} \left( v_{2} \left( \frac{m_{1}}{w_{1}} - \frac{m_{2}}{w_{2}} \right) + v_{3} \left( \frac{m_{2}}{w_{2}} - \frac{m_{3}}{w_{3}} \right) \right) \sim V w_{3}\left(\frac{1}{w_{2}} + \frac{1}{w_1}\right)$.
    \item In general, for any agent $i \in \{2, \ldots, n\}$, the number of items they receive is:
    $m_i = k_i \left( m_{i-1} - 1 \right) \sim \frac{w_i}{w_{i-1}} \left( m_{i-1} - 1 \right) = w_i - \left( w_i \sum_{1 \le j < i} \frac{1}{w_j} \right).$ The subsidy for agent $i$ is $\sim V w_{i}\sum_{1 \le j < i} \frac{1}{w_j}$.
\end{enumerate}

By \Cref{lem:WEFability-additive-identical-items}, we can not remove any item from any agent $i \in \{1, \ldots, n-1\}$ because:
\[
\frac{m_{i+1}}{w_{i+1}} = \frac{k_{i+1} \left( m_i - 1 \right)}{w_{i+1}} > \frac{w_{i+1}}{w_i} \frac{m_i - 1}{w_{i-1}} = \frac{m_i - 1}{w_i}.
\]
This inequality ensures that reducing the number of items allocated to any agent would violate the WEF condition.

Thus, this allocation satisfies WEF with the smallest possible subsidy, as required, and the total required subsidy is 
\[
V \sum_{2\leq i \leq n} \left( w_i \sum_{1 \le j < i} \frac{1}{w_j} \right)
\]
\end{proof}


We now establish a matching upper bound on the subsidy. Our algorithm is defined in \Cref{alg:n-subsidy-additive-identical-items}.
First, the agents are sorted by their valuations such that $v_1 \ge v_2 \ge \ldots \ge v_n$. The number of items of each agent $i$ is initialized to $m_i=0$.
Then, while there are unallocated items, the algorithm finds the agent $i \in \{2, \ldots, n\}$ with the maximum index such that $\frac{1+m_i}{w_i} \le \frac{m_{i-1}}{w_{i-1}}$. If no such agent exists, the algorithm, selects agent $1$. The chosen agent $i\in N$ receives a new item, i.e., $m_i$ increases by 1.

\begin{algorithm}[t]
\caption{
Weighted Sequence Protocol For Additive Valuations and identical items}
\label{alg:n-subsidy-additive-identical-items}
\KwIn{ Instance $(N,M,v, \mathbf{w})$ with additive valuations and identical items.}
\KwOut{ WEF-able allocation $(m_1,\ldots,m_n)$ with required subsidy of at most $V w_{i}\sum_{1 \le j \le i}\frac{1}{w_{j}}$ per agent,
and total subsidy at most
 $V \sum_{2\leq i \leq n} \left( w_i \sum_{1\leq j \leq i} \frac{1}{w_j} \right)$.}
$m_i \gets 0$ for each $i\in N$\;
Sort the agents such that $v_1\geq \cdots \geq v_n$.\;
\For{$o:$ 1 to $m$}{
Let $N'\gets \{i\in \{2, \ldots, n\} \mid \frac{1+m_i}{w_i}\le \frac{m_{i-1}}{w_{i-1}}\} \cup \{1\}$  
\tcc{We always have $1\in N'$}
Let $u\gets\max_{i \in N'} i$\;
$m_u \leftarrow m_u + 1$\;
}
\Return $(m_1, \ldots, m_n)$
\end{algorithm}

\begin{theorem}
\label{thm:subsidy-ub-additive-identical-items}
\Cref{alg:n-subsidy-additive-identical-items} 
outputs a WEF allocation with subsidy at most $w_i V \sum_{1\leq j \leq i} \frac{1}{w_j}$ 
for each agent,
and total subsidy at most $V \sum_{2\leq i \leq n} \left( w_i \sum_{1\leq j \leq i} \frac{1}{w_j} \right)$.
\end{theorem}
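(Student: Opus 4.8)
The plan is to first establish that the output allocation $X=(m_1,\dots,m_n)$ is WEF-able via the characterization in \Cref{lem:WEFability-additive-identical-items}, and then, for each agent $i$, to bound the minimum subsidy $p_i^*=w_i\ell_i(X)$ from \Cref{thm:minsubsidy} by controlling the maximum-cost path length $\ell_i(X)$ in the weighted envy-graph. Throughout I write $a_i:=m_i/w_i$, so that for identical items $cost_X(k,k-1)=v_k\,(a_{k-1}-a_k)$.

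For WEF-ability I would prove by induction on the iterations that the algorithm maintains the invariant $a_1\ge a_2\ge\cdots\ge a_n$. Initially every $a_i=0$. When an item goes to $u$, only $a_u$ increases; the pair $(u-1,u)$ stays ordered because the selection rule requires $\frac{1+m_u}{w_u}\le\frac{m_{u-1}}{w_{u-1}}$ (for $u\ge 2$), while the pair $(u,u+1)$ can only become ``more ordered''. Since the agents are sorted with $v_1\ge\cdots\ge v_n$, whenever $v_i<v_j$ we have $i>j$ and hence $a_i\le a_j$, so the condition of \Cref{lem:WEFability-additive-identical-items} holds and $X$ is WEF-able. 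The same invariant makes each edge $cost_X(k,k-1)=v_k(a_{k-1}-a_k)$ non-negative and each ``ascending'' edge non-positive, so by inequality~\eqref{eq:cost_pij} the maximum-cost path from $i$ is the descending path $(i,i-1,\dots,1)$ and $\ell_i(X)=\sum_{k=2}^i v_k(a_{k-1}-a_k)$.

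The crux is a second invariant, which I would show holds after every iteration: for every $i$, $a_1-a_i<\sum_{j=1}^i \frac{1}{w_j}$. The quantity $D_i:=a_1-a_i$ increases only when agent $1$ receives an item (by $1/w_1$), decreases when agent $i$ does, and is unchanged otherwise. The decisive point is the max-index tie-break: agent $1$ is selected only when $N'=\{1\}$, i.e. when no agent $k\ge 2$ is eligible, so $a_{k-1}-a_k<1/w_k$ for all $2\le k\le i$ at that instant. Telescoping the gaps gives $D_i=\sum_{k=2}^i(a_{k-1}-a_k)<\sum_{k=2}^i 1/w_k$ just before agent $1$ is served, and the subsequent $1/w_1$ increment yields $D_i<\sum_{j=1}^i 1/w_j$ immediately after. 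As $D_i$ cannot grow again until agent $1$ is served once more, the bound persists at all times (and holds trivially before agent $1$ ever receives an item, where $D_i\le 0$).

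Finally I combine the pieces. Using $v_k\le V$ and $a_{k-1}-a_k\ge 0$,
\[
\ell_i(X)=\sum_{k=2}^i v_k(a_{k-1}-a_k)\le V\sum_{k=2}^i(a_{k-1}-a_k)=V(a_1-a_i)<V\sum_{j=1}^i\frac{1}{w_j},
\]
so by \Cref{thm:minsubsidy} the subsidy of agent $i$ satisfies $p_i^*=w_i\ell_i(X)\le w_i V\sum_{1\le j\le i}\frac{1}{w_j}$. Agent $1$ needs no subsidy, since every edge leaving it has non-positive cost and thus $\ell_1(X)=0$; summing the per-agent bounds over $i\ge 2$ gives the stated total $V\sum_{2\le i\le n}\big(w_i\sum_{1\le j\le i}\frac{1}{w_j}\big)$. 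I expect the main obstacle to be isolating and proving the invariant $a_1-a_i<\sum_{j\le i}1/w_j$: the naive per-edge estimate $v_k(a_{k-1}-a_k)\le V(1/w_{k-1}+1/w_k)$ double-counts and loses a factor of roughly two, and it is precisely the max-index selection that lets one replace it by the clean telescoping bound on $a_1-a_i$.
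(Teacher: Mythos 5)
Your proposal is correct and follows essentially the same route as the paper's own proof: WEF-ability via the ordering invariant $m_1/w_1 \ge m_2/w_2 \ge \cdots \ge m_n/w_n$, and the per-agent bound via the facts that the descending path $(i,i-1,\dots,1)$ is the maximum-cost path and that the relevant quantity can only grow when agent $1$ is served, at which instant every other agent's eligibility condition fails, so the gaps $a_{k-1}-a_k < 1/w_k$ telescope. The only difference is cosmetic: you isolate the value-free invariant $a_1 - a_i < \sum_{1\le j\le i} 1/w_j$ and multiply by $V$ once at the end, whereas the paper runs the same case analysis (who receives the item) as an induction directly on $\ell_i(X)$ with the values $v_k$ folded into each edge cost.
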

\begin{proof}
We first prove that the allocation output by \Cref{alg:n-subsidy-additive-identical-items} is WEF-able by \Cref{lem:WEFability-additive-identical-items}.
The definition of $N'$ ensures that, if agent $u\geq 2$ receives an item, then after the update, $\frac{m_u}{w_u} \leq \frac{m_{u-1}}{w_{u-1}}$.
Therefore, this condition holds throughout the algorithm for all adjacent pairs of agents.
By transitivity, 
$\frac{m_u}{w_u} \leq \frac{m_{j}}{w_{j}}$ for all $u>j\geq 1$.

We prove that while running the algorithm, 
the cost of the highest-cost path is always bounded by $V \sum_{1\leq j \leq i} \frac{1}{w_j}$. 

From the proof of Lemma~\ref{lem:WEFability-additive-identical-items}, WLOG, we can assume the highest-cost path from each agent $i$
is $P_i = (i, i-1, \ldots, 1)$, since $cost_X(j, j-1)$ is non-negative, $cost_X(j, k)$ where $j<k$ is non-positive, and there exists no positive cycle. 
Also, $cost_X(i, j)$ where $j\geq i+2$ is smaller than or equal to $\sum_{i < k \le j}cost_X(k, k-1)$.
In other words, $\ell_i(X) = \sum_{1 < j \le i} cost_X(j,j-1)$.

Now, we prove for each agent $i\in N$, $\ell_i(X) \le V \sum_{1\leq j \leq i} \frac{1}{w_j}$.
The proof is by induction over the iteration.

When there is no allocated item, this must be true. 
Now, suppose that the algorithm allocates an item to agent $u$.
We consider three cases.

\underline{Case 1:} $u > i$.
The cost of the highest-cost path starting at $i$ is not affected by allocating the item to $u$ in this iteration, as $\ell_i(X) = \sum_{1 < j \le i} cost_X(j,j-1)$. By the induction assumption, $\ell_i(X) \leq V \sum_{1 < j \le i} \frac{1}{w_j}$. 

\underline{Case 2:}  $1 > i\ge u$.
Then $cost_X(u+1, u)$ increases by at most $\frac{v_{u+1}}{w_u}$ (or does not increase at all if $u=i$), while $cost_X(u, u-1)$ decreases by $v\frac{v_{u}}{w_u}$. 
The costs of all other edges remain unchanged.
Thus, the total path cost weakly decreases.
By the induction assumption, $\ell_i(X) \leq V \sum_{1 < j \le i} \frac{1}{w_j}$. 

\underline{Case 3:} $u=1$. 
Then, by the fact that the algorithm chose $1$, before the allocation
we must have  $\frac{1+m_i}{w_i}>\frac{m_{i-1}}{w_{i-1}}$ for each $i>1$. Otherwise, the algorithm would have chosen some agent $i > 1$ to allocate the item. 

Thus, for each $i>1$, before the allocation,
\begin{align*}
&cost_X(i,i-1) =
\frac{v_i m_{i-1}}{w_{i-1}} - \frac{v_i m_i}{w_i} < v_i\left(\frac{1+m_i}{w_i}  - \frac{m_i}{w_i}\right)
= \frac{v_i}{w_i} \leq \frac{V}{w_i}.
\end{align*}
This implies that before allocating the item to agent $1$: $\ell_i(X) = \sum_{1< j \le i} cost_X(j,j-1) < \sum_{1< j \le i} \frac{1}{w_j}.$
After allocating the item to agent $1$, the path cost increases by at most $\frac{V}{w_{1}}$ 
Therefore, $\ell_i(X) < V\sum_{1 \le j \le i} \frac{1}{w_j}.$

Now we can conclude that the cost of a highest-cost path is bounded by $V \sum_{1 \le j \le i}\frac{1}{w_{j}}$. 
Thus, the subsidy for agent $i$ is at most $w_i\left(\sum_{1 \le j \le i}\frac{1}{w_{j}}\right).$

Note that agent $2$ receives a subsidy of $p_{2} = v_{2} w_{2} \left( \frac{m_{1}}{w_{1}} - \frac{m_{2}}{w_{2}} \right),$ which ensures that \begin{align*}
    &\frac{v_{2} m_{2} + p_{2}}{w_{2}} = \frac{v_{2} m_{2}}{w_{2}} + \frac{v_{2} m_{1}}{w_{1}} - \frac{v_{2} m_{2}}{w_{2}} = \frac{v_{2} m_{1}}{w_{1}}.
\end{align*}
Thus, agent $1$ must receive no subsidy, because otherwise, the subsidy $p_{2}$ would not eliminate the envy of agent $2$ towards agent $1$.

Therefore, 
the total subsidy is at most $V \sum_{2\leq i \leq n} \left( w_i \sum_{1\leq j \leq i} \frac{1}{w_j} \right).$
\end{proof}

\cite{Brustle2020} proved that when the weights are equal, the total subsidy required for agents with additive valuations is bounded by $(n-1)V$. The following proposition demonstrates that with different weights, even when the weights are nearly equal, the total subsidy bound can be linear in $n^2$.

\begin{proposition} \label{subsidy bound for identical items and almost equal weights}
    Even when the weights are nearly equal, the total subsidy lower bound for agents with additive valuations and identical items is linear in $n^2$.
\end{proposition}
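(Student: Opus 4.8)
The plan is to exhibit, for each $n$, an explicit instance whose weights satisfy $\wmax/\wmin \to 1$ (so they are nearly equal) yet whose minimum total subsidy over all WEF-able allocations is $\Omega(n^2 V)$. Concretely, I would take $n$ agents with strictly decreasing values $v_i = V\left(1 - \frac{i-1}{2n}\right)$ (so $v_1 = V$, $v_n > V/2$, consistent with the descending-value convention), nearly equal but strictly decreasing weights $w_i = 1 + (n-i)\delta$ for a tiny $\delta$ (say $\delta = n^{-4}$), and exactly $m = \binom{n}{2}$ identical items. Here $\wmax/\wmin = 1 + (n-1)\delta \to 1$; the essential design choice is that the values and the weights are ordered the \emph{same} way, which is precisely what rules out the cheap balanced allocation that keeps the subsidy at $O(nV)$ in the equal-weight regime.

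Writing $a_i := m_i/w_i$, \Cref{lem:WEFability-additive-identical-items} says $(m_1,\dots,m_n)$ is WEF-able iff $a_1 \ge a_2 \ge \dots \ge a_n$. The first key step is to turn this into a rigidity statement via integrality: since $w_i > w_{i+1}$, the inequality $a_i \ge a_{i+1}$ forces $m_i \ge m_{i+1}\, w_i/w_{i+1} > m_{i+1}$ whenever $m_{i+1} \ge 1$, hence $m_i \ge m_{i+1} + 1$; and if some $m_{i+1}=0$ then $m_{i+2}=\dots=m_n=0$. Thus every WEF-able allocation is a \emph{prefix staircase}: the agents receiving items form a prefix $1,\dots,t$, along which $m_i \le m_1 - (i-1)$, while the rest get nothing. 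Summing $\binom{n}{2} = \sum_{i\le t} m_i \le t m_1 - \binom{t}{2}$ yields the crucial bound $m_1 \ge \big(\binom{n}{2}+\binom{t}{2}\big)/t$ on the top agent's bundle size, valid for whatever support length $t$ the allocation uses.

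Next I would compute the minimum subsidy via \Cref{thm:minsubsidy}. Using that (as established in the proof of \Cref{thm:subsidy-ub-additive-identical-items}) the maximum-cost path from agent $i$ is $(i,i-1,\dots,1)$, the minimum total subsidy equals
\begin{align*}
S = \sum_{i=1}^n w_i\,\ell_i(X) = \sum_{j=2}^n v_j\,(a_{j-1}-a_j)\,W_{\ge j},
\qquad W_{\ge j} := \sum_{i=j}^n w_i .
\end{align*}
Bounding $v_j \ge v_n > V/2$ and $W_{\ge j} \ge n-j+1$, and then applying Abel summation to the telescoped coefficients, gives the identity $\sum_{j=2}^n (a_{j-1}-a_j)(n-j+1) = (n-1)a_1 - \sum_{k\ge 2} a_k$, so that
\begin{align*}
S \ge \tfrac{V}{2}\Big((n-1)a_1 - \textstyle\sum_{k\ge 2} a_k\Big) \ge \tfrac{V}{2}\big(n\,m_1 - m - \tfrac12\big),
\end{align*}
where the last step uses $a_1 \ge m_1(1-(n-1)\delta)$, $\sum_{k\ge2} a_k \le m-m_1$, and the choice $\delta = n^{-4}$ to absorb the $O(m_1 n^2\delta)$ error into $\tfrac12$.

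Finally I would substitute $m_1 \ge \big(\binom n2 + \binom t2\big)/t$ to get $n m_1 - m \ge f(t) := \binom n2\frac{n-t}{t} + \frac{n(t-1)}{2}$ and minimize $f$ over $t\in[1,n]$: for $t \le n/2$ the first term is $\ge \binom n2$, while for $t \ge n/2$ the second is $\ge \tfrac{n^2}{8}$, so $\min_t f(t) = \Omega(n^2)$ and hence $S = \Omega(n^2 V)$ for \emph{every} WEF-able allocation, which is the claim. I expect the main obstacle to be exactly this last quantifier: the bound must hold uniformly over all WEF-able allocations, including those that concentrate items on a short prefix and leave a suffix of agents empty, so the case analysis over the support length $t$ — together with choosing $\delta$ small enough that the integrality-forcing step and the weight-perturbation errors are simultaneously controlled — is the delicate part. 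The remaining algebra (the Abel-summation identity and the minimization of $f$) is routine once the staircase rigidity is in place.
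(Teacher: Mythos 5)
Your proof is correct and follows essentially the same route as the paper's: both use \Cref{lem:WEFability-additive-identical-items} together with integrality to show that nearly-equal weights, strictly decreasing in the same order as strictly decreasing values, force the staircase $m_i \le m_{i-1}-1$, whose accumulated envy yields a total subsidy of $\Omega(n^2 V)$. If anything, your write-up is more complete than the paper's: you specify the valuations explicitly, quantify uniformly over all WEF-able allocations via the support length $t$, and control the weight-perturbation error with an explicit $\delta$, whereas the paper's proof fixes a single intended allocation ($w_1 = 2n$, $m_i = m_{i-1}-1$, $m = n(3n+1)/2$) and asserts the per-agent subsidy bounds more informally.
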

\begin{proof}
    Assume that $w_1 = 2n$ and $w_i = w_{i+1} - \epsilon$ for each $i \in {2, \ldots, n-1}$, with $\epsilon > 0$. Note that all weights are positive, nearly equal, and $w_1 > w_2 > \ldots > w_n$.

Let $m_1 = w_1$ and 
$m_i = m_{i-1} - 1$ for $i\in \{2, \ldots, n\}$.
Let $m := \sum_i m_i = n\cdot(3n+1)/2$.

    By \Cref{lem:WEFability-additive-identical-items}, 
agent $i \in {2, \ldots, n}$ can receive at most $m_i = m_{i-1} - 1$ items. Thus, the subsidy for agent $i\in \{2, \ldots, n\}$ is at least $w_i V \sum_{1\le j < i } \frac{1}{w_j} \sim (i-1)V$.
    Hence, the total subsidy is at least $\sim V\sum_{2 \le i \le n} i-1 = V\sum_{1 \le i \le n-1} i = \frac{(n-1)(n-2)}{2}V = \Omega(n^2 V)$.
\end{proof}

\subsection{Optimal Algorithm for a Special Case of Additive Valuations and Identical Items}
In most settings studied in this paper, computing the optimal subsidy is NP-hard.
However, in the special case of identical items and additive valuations, we have a polynomial-time algorithm.

Here, for convenience, we assume that $v_1 < v_{2} < \dots < v_{n}$, contrary to the assumption in the previous subsection.
We   propose an algorithm based on dynamic programming
that computes an allocation with the minimum total subsidy.

We say that an allocation $(m_1,\dots, m_n)$ is \emph{feasible} if $m_1 + \dots + m_n = m$.

For $1 \le i \le n$, $0 \le j \le m$, and $0 \le m_i \le m$, 
let $T(i,j,m_i)$ be defined as the minimum total subsidy when 
\begin{itemize}
	\item the agents are restricted to $1, 2,\dots, i$,
	\item the number of items is $j$, and 
	\item the number of items allocated to agent $i$ is $m_i$.
\end{itemize}
Then the minimum total subsidy we want to compute equals $\min_{1 \le m_i \le m}T(n,m,m_i)$.

\begin{lemma}\label{lem:identical-item-correctness}
The following recursive formula holds.
\begin{align}\label{eq:identical-item-recursive-formula}
\resizebox{\textwidth}{!}{
$T(i,j,m_i) = 
\begin{cases}
0 & \text{$i=1$ and $j=m_i$}\\
\underset{0 \le m_{i-1} \le \min\left(j-m_i,\frac{w_{i-1}}{w_i}m_i\right)}{\min} \left\{ T(i-1,j-m_i,m_{i-1}) + {\displaystyle \left(\sum_{i'=1}^{i-1}w_{i'}\right)\cdot \left(\frac{m_i}{w_i}-\frac{m_{i-1}}{w_{i-1}}\right)v_{i-1}} \right\} & \text{$i\ge 2$ and $\frac{w_i}{\sum_{i'=1}^{i}w_{i'}}j \le m_i \le j$}\\
\infty & \text{otherwise}.
\end{cases}
$
}
\end{align}
\end{lemma}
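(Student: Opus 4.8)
The plan is to show that, for the relevant allocations, the minimum total subsidy decomposes additively over consecutive pairs of agents, and that the recursion \eqref{eq:identical-item-recursive-formula} simply builds this decomposition one agent at a time while enforcing WEF-ability between adjacent agents. First I would fix the restricted instance on agents $\{1,\dots,i\}$ and characterize its WEF-able allocations. By \Cref{lem:WEFability-additive-identical-items} (applied with the convention $v_1 < \dots < v_n$ used here), an allocation is WEF-able iff $\frac{m_1}{w_1}\le\frac{m_2}{w_2}\le\dots\le\frac{m_i}{w_i}$, which by transitivity is equivalent to the pairwise adjacent conditions $\frac{m_{i'-1}}{w_{i'-1}}\le\frac{m_{i'}}{w_{i'}}$ for $2\le i'\le i$. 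Next, mirroring the telescoping argument in the proof of \Cref{lem:WEFability-additive-identical-items}, I would show that the maximum-cost path from any agent $i'$ in the weighted envy-graph is the monotone path $(i',i'+1,\dots,i)$ up to the top-value agent $i$, so that by \Cref{thm:minsubsidy} its minimum subsidy is $p^*_{i'}=w_{i'}\ell_{i'}(X)=w_{i'}\sum_{k=i'}^{i-1}v_k\bigl(\frac{m_{k+1}}{w_{k+1}}-\frac{m_k}{w_k}\bigr)$.

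Summing $p^*_{i'}$ over $i'$ and swapping the order of summation, I would obtain the key identity that the minimum total subsidy of the restricted instance equals $\sum_{i'=2}^{i}\bigl(\sum_{k=1}^{i'-1}w_k\bigr)v_{i'-1}\bigl(\frac{m_{i'}}{w_{i'}}-\frac{m_{i'-1}}{w_{i'-1}}\bigr)$, i.e.\ a sum of per-edge contributions over the consecutive pairs $(i'-1,i')$. This is the crux: each edge contribution depends only on $m_{i'-1},m_{i'}$ and the data up to index $i'$, so it is unaffected by whether higher-indexed agents are present. Consequently, removing agent $i$ from the instance deletes exactly the single term for the edge $(i-1,i)$, yielding the decomposition $\mathrm{Sub}(\{1,\dots,i\})=\mathrm{Sub}(\{1,\dots,i-1\})+\bigl(\sum_{k=1}^{i-1}w_k\bigr)v_{i-1}\bigl(\frac{m_i}{w_i}-\frac{m_{i-1}}{w_{i-1}}\bigr)$ that underlies the recursion.

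With these two facts the three cases follow. For $i=1$ there is a single agent, no envy, and the subsidy is $0$, while any $j\ne m_1$ is infeasible; this is the base case. For the $\infty$ case, I would observe that since $i$ is the top-value agent, WEF-ability forces $m_{i'}\le\frac{w_{i'}}{w_i}m_i$ for all $i'\le i$; summing gives $j\le\frac{m_i}{w_i}\sum_{i'=1}^{i}w_{i'}$, so no WEF-able allocation exists unless $\frac{w_i}{\sum_{i'=1}^{i}w_{i'}}\,j\le m_i\le j$. For the main case, I would set up a bijection between WEF-able feasible allocations of $\{1,\dots,i\}$ with parameters $(j,m_i)$ and pairs consisting of a value $m_{i-1}$ together with a WEF-able feasible allocation of $\{1,\dots,i-1\}$ on $j-m_i$ items with $m_{i-1}$ going to agent $i-1$: the gluing is valid precisely because adding agent $i$ creates the single new adjacency $(i-1,i)$, whose WEF constraint $\frac{m_{i-1}}{w_{i-1}}\le\frac{m_i}{w_i}$ is the bound $m_{i-1}\le\frac{w_{i-1}}{w_i}m_i$, and the item-count constraint is $m_{i-1}\le j-m_i$. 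Combining the bijection with the additive decomposition, minimizing over the admissible $m_{i-1}$ while letting the $\infty$ values of $T(i-1,\cdot,\cdot)$ automatically discard infeasible sub-allocations yields exactly \eqref{eq:identical-item-recursive-formula}.

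The step I expect to be the main obstacle is cleanly establishing the additive per-edge decomposition of the minimum total subsidy, and in particular arguing that it is genuinely \emph{local} --- that the contribution of edge $(i-1,i)$ is independent of the agents below $i-1$, and that in the optimal subsidy vector every maximum-cost path terminates at the current top-value agent. Once this locality is in hand, the bijection and the recursion are routine bookkeeping.
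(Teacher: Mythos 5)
Your proposal is correct and takes essentially the same route as the paper's proof: both rely on the adjacent-pair WEF-ability characterization of \Cref{lem:WEFability-additive-identical-items}, the monotone highest-cost-path structure from the proof of \Cref{thm:subsidy-ub-additive-identical-items}, and \Cref{thm:minsubsidy} to obtain a per-edge decomposition of the minimum total subsidy, whose locality (together with the feasibility bounds on $m_i$ and $m_{i-1}$) yields the recursion. Your explicit summation swap merely spells out what the paper states tersely, namely that adding agent $i$ contributes $\left(\sum_{i'=1}^{i-1}w_{i'}\right)$ times the cost of edge $(i-1,i)$ to the total subsidy and that edges below $i-1$ do not affect the subsidies of agents $i-1$ and $i$.
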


\begin{proof}
When $i=1$, i.e., the number of agents in the market is one, an allocation is feasible iff $j=k$, and the total subsidy is zero for the feasible allocation.

Assume that  $i \ge 2$.
when agent $i$ is added to the market with $i-1$ agents, if $m_i$ (resp., $m_{i-1}$) items are allocated to $i$ (resp., $i-1$), then the cost of edge $(i-1,i)$ in the weighted-envy graph is $\left(\frac{m_i}{w_i}-\frac{m_{i-1}}{w_{i-1}}\right)v_{i-1}$. Since a highest-cost path from agent $i'(<i)$ is $(i', i'+1 , \dots, n)$ from the proof of \Cref{thm:subsidy-ub-additive-identical-items}, $\sum_{i'=1}^{i-1}w_{i'}$ times the cost of edge $(i-1,i)$ is added to the total subsidy.
Moreover, since the cost of the edges before $i$ does not affect any highest-cost path path starting from $i$, allocations for agents $i' \le i-2$ achieving $T(i-1,j-m_i,m_{i-1})$ do not affect the subsidies for agents $i'' \ge i-1$.
Hence, one can optimize the subsidies for agents $i' \le i-2$ in $T(i-1,j-k,k')$ and Equation~\ref{eq:identical-item-recursive-formula} is correct.

We note that 
by \Cref{lem:WEFability-additive-identical-items},
$\frac{m_i}{w_i} \geq 
\frac{\sum_{i'=1}^i m_{i'}}{\sum_{i'=1}^i w_{i'}} 
$.
Hence, 
if $m_i < \frac{w_i}{\sum_{i'=1}^{i}w_{i'}}j$, then for an allocation to be WEF-able, $\sum_{i'=1}^{i}m_{i'}$ should be less than $j$ and thus there exists no WEF-able and feasible allocation. 
Therefore, we require $m_i \ge \frac{w_i}{\sum_{i'=1}^{i}w_{i'}}j$ in the second case of Equation~\ref{eq:identical-item-recursive-formula}.
Moreover, for an allocation to be WEF-able, $m_{i-1}$ should be less than or equal to $\frac{w_{i-1}}{w_i}m_i$ and thus we require $m_{i-1} \le \frac{w_{i-1}}{w_i}m_i$ in the second case of Equation~\ref{eq:identical-item-recursive-formula}.
\end{proof}

\begin{theorem}\label{thm:identical-item-optimal-algorithm}
There exists a polynomial time algorithm to compute an allocation with the minimum total subsidy for additive valuations and identical items if the valuations of each agent are all different.
\end{theorem}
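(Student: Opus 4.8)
The plan is to turn the recursion of \Cref{lem:identical-item-correctness} into a bottom-up dynamic program and argue that it can be evaluated in polynomial time. Since \Cref{lem:identical-item-correctness} already establishes that the values $T(i,j,m_i)$ defined by \eqref{eq:identical-item-recursive-formula} equal the minimum total subsidy over all WEF-able feasible allocations of $j$ items among agents $1,\dots,i$ in which agent $i$ receives exactly $m_i$ items, correctness of the output $\min_{1\le m_n \le m} T(n,m,m_n)$ is immediate (recall that the agents are sorted with $v_1<\dots<v_n$, so the sign conventions in the recursion are consistent). What remains is to organize the computation and bound its running time.

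First I would allocate a three-dimensional table indexed by $(i,j,m_i)$ with $1 \le i \le n$, $0 \le j \le m$, and $0 \le m_i \le m$, and fill it in order of increasing $i$. The base layer $i=1$ is read off directly: $T(1,j,m_1)=0$ if $j=m_1$ and $+\infty$ otherwise. For $i \ge 2$, each entry $T(i,j,m_i)$ is obtained from the previously computed layer $i-1$ by the minimization in \eqref{eq:identical-item-recursive-formula} over $m_{i-1} \in \{0,\dots,\min(j-m_i,\,\lfloor \tfrac{w_{i-1}}{w_i} m_i\rfloor)\}$, while entries violating the feasibility condition $\frac{w_i}{\sum_{i'\le i} w_{i'}} j \le m_i \le j$ are set to $+\infty$. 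The prefix sums $\sum_{i'=1}^{i-1} w_{i'}$ appearing in the increment can be precomputed once in $O(n)$ time, so each candidate $m_{i-1}$ is processed with a constant number of arithmetic operations. To recover the allocation itself (and not merely its cost), I would store with each entry the minimizing value of $m_{i-1}$, then backtrack from the optimal cell $(n,m,m_n^\ast)$ through these pointers to read off $m_n^\ast, m_{n-1}^\ast,\dots,m_1^\ast$.

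For the running time, the table has $O(n m^2)$ cells, and each cell is evaluated by a minimization over at most $m+1$ candidate values of $m_{i-1}$. Hence the whole table is filled in $O(n m^3)$ arithmetic operations, and the backtracking adds only $O(n)$. Because the items are listed explicitly in the instance, $m = |M|$ is bounded by the input size, so $O(n m^3)$ is polynomial in the size of the instance. If numbers are counted in bits, every quantity manipulated is a sum or ratio of the input weights and values, hence a rational of polynomially bounded bit-length, so the bit-complexity remains polynomial as well.

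The one step I would be most careful about is precisely this running-time argument rather than correctness: one must confirm that every index range in \eqref{eq:identical-item-recursive-formula} — in particular the number of items $m$ and the transition range for $m_{i-1}$ — is polynomially bounded, and that the arithmetic stays within polynomial bit-size. Once \Cref{lem:identical-item-correctness} is granted, no further combinatorial work is needed; the content of the theorem is exactly that this dynamic program has polynomially many states and polynomial-time transitions, which the counting above confirms.
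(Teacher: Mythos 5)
Your proposal is correct and follows essentially the same route as the paper's own proof: both rely on Lemma~\ref{lem:identical-item-correctness} for correctness, fill the $O(nm^2)$-size table bottom-up with an $O(m)$ minimization per cell for a total of $O(nm^3)$ time, and recover the allocation by recording the minimizing $m_{i-1}$ at each cell. Your additional remarks on prefix-sum precomputation and polynomial bit-length are sound refinements but do not change the argument.
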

\begin{proof}
The minimum subsidy equals $\min_{1 \le k \le m}T(n,m,k)$ from Lemma~\ref{lem:identical-item-correctness}. 
We can also compute an allocation achieving the minimum by keeping the value $m_{i-1}$ attaining the minimum in Equation~\ref{eq:identical-item-recursive-formula}.
The size of the table $T$ is ${\rm O}(nm^2)$, and it takes ${\rm O}(m)$ time to fill each cell of the table.
Therefore, the total running time is ${\rm O}(nm^3)$.
\end{proof}

\section{Monetary Weighted Envy-Freeness}
\label{sec:MWEF}
In practice, the available subsidy may be smaller than what's required for WEF. A natural question is what relaxation of fairness can be achieved. One solution is to allocate the subsidy only to agents whom nobody envies, preventing additional envy. This motivates the following concept:
\begin{definition}
	An outcome $(X,\mathbf{p})$ is called \emph{monetarily weighted envy-free (MWEF)} if 
    $p_j = 0$ for all $j\in N$ such that some agent $i\in N$ has weighted envy towards $j$.
\end{definition}
 Note that the definition avoids the issue of whether the indivisible item allocation is fair.
MWEF formalizes the idea that a limited subsidy is used effectively to improve fairness, regardless of whether the allocation 
$X$ is ``good'' or ``bad''.
Denote the total amount of money available for subsidy by $d$.
If $d=0$, so there is no subsidy at all ($p_i=0$ for all $i\in N$), then the allocation is vacuously MWEF. 
Also, every WEF allocation (with or without subsidy) is MWEF.
  
Our main result in this section is that MWEF can be achieved using any WEF-able allocation and any total subsidy amount $d$.
As far as we know, this algorithm is new even for the unweighted setting.


\begin{theorem}
There is a polynomial-time algorithm that, 
for any instance with monotone valuations, 
given any WEF-able allocation $X$ and any amount of money $d$, finds a subsidy vector $\mathbf{p}$ with $\sum_i p_i = d$, such that $(X,\mathbf{p})$ is MWEF.
\end{theorem}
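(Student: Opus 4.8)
The plan is to pass to the \emph{weighted subsidies} $\phi_i := p_i/w_i$, so that the budget constraint $\sum_i p_i = d$ becomes $\sum_i w_i\phi_i = d$, and $(X,\mathbf{p})$ is MWEF exactly when every agent $j$ with $\phi_j>0$ is not strictly weighted-envied, i.e. $\phi_i-\phi_j\ge cost_X(i,j)$ for all $i\in N$. I would first dispose of the easy regime $d\ge D^{*}$, where $D^{*}:=\sum_i p_i^{*}$ and $p^{*}$ is the minimum WEF subsidy of \Cref{thm:minsubsidy}: there I output $p_i:=p_i^{*}+(d-D^{*})w_i/W$, i.e. the WEF vector plus a uniform weighted top-up. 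Raising every $\phi_i$ uniformly preserves all differences $\phi_i-\phi_j$, hence keeps the outcome WEF (so nobody is envied and MWEF holds vacuously), while adding exactly $d-D^{*}$ to the total, giving total $d$.

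For the remaining regime $d<D^{*}$ I would run a \emph{water-filling} (tide-raising) process. Starting from $\phi=0$, maintain the set $Z$ of strictly weighted-envied agents (kept at $\phi=0$), the set $B$ of agents reachable from $Z$ along \emph{tight} edges (those with $\phi_i-\phi_j=cost_X(i,j)$), and the \emph{raisable} set $R:=N\setminus B$. In each phase I raise $\phi_i$ for all $i\in R$ at a common unit rate, so the spent amount grows at rate $\sum_{i\in R}w_i$, and I stop at the first breakpoint: (a) a slack edge from $B$ into $R$ reaches tightness, which blocks the corresponding $R$-agent (it joins $B$ exactly at the boundary); (b) the last strict envy towards some $z\in Z$ dissolves, removing $z$ from $Z$; or (c) the spent amount reaches $d$, at which point I return $p_i:=w_i\phi_i$.

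Correctness rests on an invariant and a non-stalling lemma. The invariant is that no agent with $\phi_j>0$ is ever strictly envied: inside $R$ all differences $\phi_i-\phi_j$ are frozen, and towards $B$ I always halt at slack $0$ before strict envy can form, so in particular $Z$ only loses members and never gains one. The non-stalling lemma is the crucial use of WEF-ability. If $R=\emptyset$ then $B=N$, so every agent is tight-reachable from $Z$ (and $Z\neq\emptyset$ since $\phi$ is not yet fully WEF, as $d<D^{*}$). Picking a strictly-envied $z_0$ with strict envier $a_0$, a tight path from some $z_1\in Z$ to $a_0$ has cost telescoping to $\phi_{z_1}-\phi_{a_0}$; appending the strict edge $a_0\to z_0$ yields a path $z_1\rightsquigarrow z_0$ of cost strictly above $\phi_{z_1}-\phi_{z_0}=0$. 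Iterating over the finite set $Z$ produces a closed walk of positive cost, hence a positive-cost cycle, contradicting \Cref{th:ef}. Thus $R\neq\emptyset$ throughout, the spent amount rises continuously and without bound, and every target $d$ is reached.

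The main obstacle is bounding the number of breakpoints. I would argue it is $O(n^2)$: since $Z$ is monotonically non-increasing it changes at most $n$ times, and between two consecutive changes of $Z$ the potentials on $B$ are frozen, so the only structural events are of type (a), each of which moves one agent permanently from $R$ into $B$, giving at most $n$ such events per $Z$-interval. Each phase recomputes $Z$, $B$, the tight edges, and the distance $\tau$ to the next breakpoint by a single all-pairs longest-path / cycle computation (negate costs and run Floyd--Warshall, exactly as in \Cref{thm:minsubsidy}), costing $O(mn+n^3)$. Multiplying the $O(n^2)$ phases by the per-phase cost yields a polynomial-time algorithm, and the invariant guarantees the returned $\mathbf{p}\ge 0$ satisfies $\sum_i p_i=d$ with $(X,\mathbf{p})$ MWEF.
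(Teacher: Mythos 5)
Your proof is correct, and in the hard regime $d<D^{*}:=\sum_{i}w_i\ell_i(X)$ it takes a genuinely different route from the paper's. The paper also runs a continuous water-filling process, but it raises the weighted subsidies of the set $N^{*}$ of agents whose residual need $\ell_i(X,\mathbf{p})$ (maximum path cost in the subsidy-respecting envy graph) is currently maximal; its key lemma is an edge-sign argument (any agent with a non-negative edge into $N^{*}$ would itself be in $N^{*}$, so $N^{*}$-members are never envied), and $N^{*}$ only grows, giving at most $n$ phases. You instead raise the complement $R$ of the tight-edge closure $B$ of the strictly-envied set $Z$, and your progress guarantee is a different key lemma: if $R=\emptyset$, tight paths plus strict-envy edges concatenate into a positive-cost closed walk, contradicting \Cref{th:ef}. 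The two processes produce genuinely different outputs: with three agents, $cost_X(1,2)=5$ and all other edge costs very negative, the paper's process initially pays only agent $1$ (the unique maximizer of $\ell_i$), whereas yours also pays agent $3$ (unenvied and unblocked) from the start; both outcomes are MWEF. What each buys: the paper's choice tracks the minimal WEF vector $\mathbf{p}^{*}$ of \Cref{thm:minsubsidy} (money goes to the neediest first) and has a trivially monotone phase structure, while your choice needs the $O(n^{2})$ phase bound you prove (since $B$ can shrink when $Z$ loses a member); in exchange, your write-up nails down the invariant, the non-stalling argument, and polynomial termination, which the paper's proof only sketches (``eventually, all the money is allocated''). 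One small repair: in your non-stalling lemma, $Z\neq\emptyset$ follows from $B=N\neq\emptyset$ and the definition of $B$ as the tight-closure of $Z$, not from $d<D^{*}$.
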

\begin{proof}
For any given outcome $(X,\mathbf{p})$, the corresponding \emph{weighted envy-graph respecting the subsidy}, denoted $G_{X,w,\mathbf{p}}$, is a complete directed graph with vertex set $N$. For any pair of agents $i,j\in N$, 
$cost_X(i,j) \ =\  \frac{v_i(X_j)+p_j}{w_j}-\frac{v_i(X_i)+p_i}{w_i}$, represents the envy agent $i$ has for agent $j$ under $(X,\mathbf{p})$.

Let $\ell_i(X)$ be the maximum cost of any path in the weighted envy-graph $G_{X,w}$ that starts from $i$. 

If the total money $d$ is at exactly $\sum_{i\in N}\ell_i (X)\cdot w_i$, we can let each agent $i$'s subsidy be $p_i=\ell_i(X) \cdot w_i$. The outcome is WEF by \Cref{thm:minsubsidy},  hence also MWEF.

If $d>\sum_{i\in N}\ell_i(X)\cdot w_i$, we first allocate $p_i=\ell_i(X)\cdot w_i$, and then allocate the surplus amount in proportion to the weights of the agents, which is WEF as well. 

The challenging case is if $d<\sum_{i\in N}\ell_i(X) \cdot w_i$. In this case, we initialize $p_i=0$ for all $i\in N$, and then gradually increase the subsidies of some agents as follows.

Let $\ell_i(X,\mathbf{p})$ be the maximum cost of any path in the weighted envy-graph $G_{X,w,\mathbf{p}}$ that starts at $i$. 
We identify the set of agents $N^*$ who have the highest $\ell_i(X,\mathbf{p})$. 

Note that there is no agent $j$ outside $N^*$ who has zero or positive edge to any agent in $N^*$ because if this is the case, $j$ would be in $N^*$. Therefore, any agent $i\in N^*$ can be given a tiny amount of money without leading some agent outside $N^*$ to become envious. 
Moreover, no agent $i\in N^*$ has a strictly positive edge to $k\in N^*$ or else $k$ would not be a part of $N^*$. 

When we allocate the money in proportion to the weights, all the $\ell_i(X,\mathbf{p})$ for $i\in N^*$ decrease at the same rate (so they remain maximum), and $\ell_j(X,\mathbf{p})$ for $j\not \in N^*$ might increase.

As we do this, the set $N^*$ may increase. Eventually, all the money is allocated.  
\end{proof}


\section{Experiments}
\label{sec:exp}
In this section, we compare the minimum subsidy required for weighted envy-free allocation and the subsidy obtained by our proposed algorithms, along with their theoretical guarantees. We generate synthetic data for our experiments. We consider $n\in\{5,8,10\}$ agents and choose the number of items $m \in \{n, 2n, 3n, 4n, 5n\}$ and fix the weight vector $\mathbf{w}=(1,2, \ldots ,n)$.  For each agent-item pair $(i,o) \in N \times M$,  the valuation $ v_i(o) \in  \{ \text{Discrete Uniform}(5,6), \text{Bernoulli}(0.5)\} $ is randomly generated.\footnote{Discrete Uniform(5, 6) meaning we uniformly sampled from the set $\{5,6\}$ and Bernoulli(0.5) means values are sampled from the set \{0,1\} each with probability $0.5$.}  We assume the additive valuations.

 For each realization of the random instance, we solved the following integer linear programming (ILP) problem using the Gurobi Optimizer solver (version 11.0.3) to compute the minimum subsidy and also we computed the total subsidy obtained by our Algorithms~\ref{alg:general-additive}-\ref{alg:n-subsidy-additive-identical-items}. We repeated the experiment $50$ times and reported the average total  subsidy in Table~\ref{tab:Alg1_averagedover50instances_disunif(5,6)} - \ref{tab:Alg4_averagedover50instances_disunif(5,6)}. 
\begin{gather*}
\min \sum_{i \in N} p_i \\
\begin{aligned}
\textup{s.t.}\quad\sum_{o \in M} \frac{v_i(o)x_{i,o}+p_i}{w_i}  &\geq  \sum_{o \in M} \frac{v_i(o)x_{j,o}+p_j}{w_j} ~\forall i, j \in N\\
\quad \sum_{i \in N}x_{i,o}&= 1 ~\forall o \in M \\
                  x_{i,o}  &  \in\{0,1\}~ \forall o \in M \\
        p_i  &\geq  0 ~ \forall i \in N.
\end{aligned}
\end{gather*}

\begin{table}[H]
\caption{The table shows the minimum subsidies obtained by solving the ILP problem, subsidies obtained by Algorithm~\ref{alg:general-additive}, and subsidies theoretically guaranteed by Algorithm~\ref{alg:general-additive} for $n\in \{5,8,10\}$ respectively.  The valuation function is $ v_i(o) \sim \text{Discrete Uniform}(5,6)$.}
\label{tab:Alg1_averagedover50instances_disunif(5,6)}
\begin{tabular}{|c|c|c|c|}
\hline
\textbf{Number of} & \multicolumn{3}{|c|}{\textbf{Total subsidy}} \\ \cline{2-4}
\textbf{items} & \textbf{Algorithm~\ref{alg:general-additive}}  & \textbf{Minimum} & \textbf{Theoretical bound} \\ \hline
5  & 62.5  & 10.3  & 84     \\ \hline
10 & 35.02 & 7.615 & 84     \\ \hline
15 & 7.84  & 2.085 & 84     \\ \hline
20 & 55.06 & 3.42  & 84     \\ \hline
25 & 29.2  & 5.555 & 84     \\ \hline
\end{tabular}

\vspace{0.2cm}
\begin{tabular}{|c|c|c|c|}
\hline
\textbf{Number of} & \multicolumn{3}{|c|}{\textbf{Total subsidy}} \\ \cline{2-4}
\textbf{items} & \textbf{Algorithm~\ref{alg:general-additive}}  & \textbf{Minimum} & \textbf{Theoretical bound} \\ \hline
8 & 171.7800 & 15.78   & 210    \\ \hline
16 & 128.24   & 7.9529  & 210    \\ \hline
24 & 84.06    & 9.6214  & 210    \\ \hline
32 & 40.08    & 4.5557  & 210    \\ \hline
40 & 176.1    & 3.7569  & 210    \\ \hline
\end{tabular}

\begin{tabular}{|c|c|c|c|}
\hline
\textbf{Number of} & \multicolumn{3}{|c|}{\textbf{Total subsidy}} \\ \cline{2-4}
\textbf{items} & \textbf{Algorithm~\ref{alg:general-additive}}  &\textbf{ Minimum} & \textbf{Theoretical bound} \\ \hline
 10 & 275      & 14.7583 & 324    \\ \hline
 20 & 220.24   & 14.3654 & 324    \\ \hline
 30 & 165.06   & 11.476  & 324    \\ \hline
 40 & 109.9    & 10.2745 & 324    \\ \hline
50 & 55.06    & 3.5088  & 324    \\ \hline
\end{tabular}
\end{table}


\begin{table}[H]
\caption{The table shows the minimum subsidies obtained by solving the ILP problem, subsidies obtained by Algorithm~\ref{alg:one-subsidy-identical-additive}, and subsidies theoretically guaranteed by Algorithm~\ref{alg:one-subsidy-identical-additive} for $n\in \{5,8,10\}$ respectively. The valuation function is $ v(o) \sim \text{Discrete Uniform}(1,2)$.}
\label{tab:Alg2_averagedover50instances_disunif(1,2)}
\begin{tabular}{|c|c|c|c|}
\hline
\textbf{Number of} & \multicolumn{3}{|c|}{\textbf{Total subsidy}} \\ \cline{2-4}
\textbf{items} & \textbf{Algorithm~\ref{alg:one-subsidy-identical-additive}}  & \textbf{Minimum} & \textbf{Theoretical bound }\\ \hline
5 & 3.515 & 2.17  & 8      \\ \hline
10 & 4.24  & 1.455 & 8      \\ \hline
15 & 3.85  & 2.005 & 8      \\ \hline
20 & 4.02  & 1.68  & 8      \\ \hline
25 & 4.205 & 2.14  & 8      \\ \hline
\end{tabular}
\vspace{0.2cm}
\begin{tabular}{|c|c|c|c|}
\hline
\textbf{Number of} & \multicolumn{3}{|c|}{\textbf{Total subsidy}} \\ \cline{2-4}
\textbf{items} & \textbf{Algorithm~\ref{alg:one-subsidy-identical-additive} } & \textbf{Minimum} & \textbf{Theoretical bound} \\ \hline
8  & 6.5531 & 3.9654 & 14     \\ \hline
16 & 6.9571 & 3.6863 & 14     \\ \hline
24 & 7.7911 & 2.1414 & 14     \\ \hline
32 & 6.0966 & 3.9274 & 14     \\ \hline
40 & 6.6254 & 3.8506 & 14    \\ \hline
\end{tabular}

\begin{tabular}{|c|c|c|c|}
\hline
\textbf{Number of} & \multicolumn{3}{|c|}{\textbf{Total subsidy}} \\ \cline{2-4}
\textbf{items} & \textbf{Algorithm~\ref{alg:one-subsidy-identical-additive}}  & \textbf{Minimum }& \textbf{Theoretical bound} \\ \hline
 10 & 8.5921 & 4.4552 & 18     \\ \hline
 20 & 9.5916 & 4.5768 & 18     \\ \hline
30 & 8.9475 & 5.843  & 18     \\ \hline
 40 & 9.1292 & 3.9327 & 18     \\ \hline
 50 & 8.8797 & 4.9171 & 18     \\ \hline
\end{tabular}
\end{table}


\begin{table}[H]
\caption{The table shows the minimum subsidies obtained by solving the ILP problem, subsidies obtained by Algorithm~\ref{alg:binary-additive}, and subsidies theoretically guaranteed by Algorithm~\ref{alg:binary-additive} for $n\in \{5,8,10\}$ respectively.  The valuation function is $ v_i(o) \sim \text{Bernoulli}(0.5)$.}
\label{tab:Alg3_averagedover50instances_Ber(0.5)}
\begin{tabular}{|c|c|c|c|}
\hline
\textbf{Number of} & \multicolumn{3}{|c|}{\textbf{Total subsidy}} \\ \cline{2-4}
\textbf{items} & \textbf{Algorithm~\ref{alg:binary-additive}}  & \textbf{Minimum} & \textbf{Theoretical bound} \\ \hline
5  & 1.69033  & 1.48699 & 14     \\ \hline
10 & 0.98299  & 0.23533 & 14     \\ \hline
15 & 0.370666 & 0       & 14     \\ \hline
20 & 0.29333  & 0       & 14     \\ \hline
25 & 0.422    & 0       & 14     \\ \hline
\end{tabular}
\vspace{0.2cm}
\begin{tabular}{|c|c|c|c|}
\hline
\textbf{Number of} & \multicolumn{3}{|c|}{\textbf{Total subsidy}} \\ \cline{2-4}
\textbf{items} & \textbf{Algorithm~\ref{alg:binary-additive}}  & \textbf{Minimum} & \textbf{Theoretical bound }\\ \hline
8  & 3.1364   & 2.4306 & 35     \\ \hline
16 & 1.8120   & 0.5452 & 35     \\ \hline
24 & 1.0444   & 0.0688 & 35     \\ \hline
32 & 1.1500   & 0 & 35     \\ \hline
40 & 0.2393   & 0 & 35     \\ \hline
\end{tabular}

\begin{tabular}{|c|c|c|c|}
\hline
\textbf{Number of} & \multicolumn{3}{|c|}{\textbf{Total subsidy} }\\ \cline{2-4}
\textbf{items} & \textbf{Algorithm~\ref{alg:binary-additive} } &\textbf{ Minimum} & \textbf{Theoretical bound} \\ \hline
 10 & 3.5305   & 3.0417 & 44     \\ \hline
 20 & 3.9967   & 0.7505 & 44     \\ \hline
 30 &  1.9807   &  0.1652      & 44     \\ \hline
 40 & 0.9708   & 0 & 44     \\ \hline
 50 & 2.2950   & 0 & 44     \\ \hline
\end{tabular}
\end{table}


\begin{table}[H]
\caption{The table shows the minimum subsidies obtained by solving the ILP problem, subsidies obtained by Algorithm~\ref{alg:n-subsidy-additive-identical-items}, and subsidies theoretically guaranteed by Algorithm~\ref{alg:n-subsidy-additive-identical-items} for $n\in \{5,8,10\}$ respectively. The valuation function is $ v_i(o) \sim \text{Discrete Uniform}(5,6)$. Here, an agent has valuation identical for all items.}
\label{tab:Alg4_averagedover50instances_disunif(5,6)}
\begin{tabular}{|c|c|c|c|}
\hline
\textbf{Number of }& \multicolumn{3}{|c|}{\textbf{Total subsidy}} \\ \cline{2-4}
\textbf{items }& \textbf{Algorithm~\ref{alg:n-subsidy-additive-identical-items}}  & \textbf{Minimum} & \textbf{Theoretical bound} \\ \hline
5  & 70.8417 & 26.1153 & 169.5  \\ \hline
10 & 98.3267 & 23.2863 & 169.5  \\ \hline
15 & 85.0533 & 0       & 169.5  \\ \hline
20 & 98.8933 & 25.0247 & 169.5  \\ \hline
25 & 102.16  & 22.494  & 169.5  \\ \hline
\end{tabular}
\vspace{0.2cm}
\begin{tabular}{|c|c|c|c|}
\hline
\textbf{Number of} & \multicolumn{3}{|c|}{\textbf{Total subsidy}} \\ \cline{2-4}
\textbf{items} & \textbf{Algorithm~\ref{alg:n-subsidy-additive-identical-items} } & \textbf{Minimum} & \textbf{Theoretical bound} \\ \hline
8  & 228.1196 & 51.0788 & 497.0574 \\ \hline
16 & 265.4938 & 59.0252 & 497.0574 \\ \hline
24 & 274.1384 & 50.6484 & 497.0574 \\ \hline
32 & 324.5231 & 20      & 497.0574 \\ \hline
40 & 344.4849 & 45.7897 & 497.0574 \\ \hline
\end{tabular}

\begin{tabular}{|c|c|c|c|}
\hline
\textbf{Number of} & \multicolumn{3}{|c|}{\textbf{Total subsidy}} \\ \cline{2-4}
\textbf{items} & \textbf{Algorithm~\ref{alg:n-subsidy-additive-identical-items}}  & \textbf{Minimum} & \textbf{Theoretical bound }\\ \hline
 10 & 374.8001 & 75.628   & 825.5598 \\ \hline
 20 & 413.9721 & 104.0536 & 825.5598 \\ \hline
 30 & 489.8345 & 93.5287  & 825.5598 \\ \hline
 40 & 496.2941 & 68.3228  & 825.5598 \\ \hline
 50 & 529.3542 & 25       & 825.5598 \\ \hline
\end{tabular}
\end{table}

From Tables~\ref{tab:Alg1_averagedover50instances_disunif(5,6)}-\ref{tab:Alg4_averagedover50instances_disunif(5,6)}, 
 we observe that our algorithm consistently provides a lower average subsidy than theoretically obtained bound, while it is larger than the ILP computed minimum subsidy.

\section{Future Work}
\label{sec:conclusions}
Although several important techniques from the unweighted setting do not work in the weighted setting, we have managed to develop new techniques and used them to prove that WEF allocations with subsidies can be computed for agents with general monotone valuations.
We even proved a worst-case upper bound on the amount of subsidy required to attain WEF.
Our bounds are tight for general monotone valuations, superadditive and supermodular valuations, and identical additive valuations;
however, for additive and binary valuations our bounds are not tight.
Tightening these bounds is one of the main problems left open by the present paper.

Preliminary simulation experiments (\Cref{sec:exp}) show that our algorithms require less subsidy than the worst-case bound, but more than the optimal amount. We plan to perform more comprehensive experiments in the future.

Moreover, it remains an open problem to find a WEF-able allocation for additive valuations when the ratios between the agents' weights are non-integer, as well as to find a WEF-able allocation for identical items with a minimum subsidy.

\newpage
\bibliographystyle{ACM-Reference-Format}
\bibliography{sample-bibliography}

\appendix
\section*{APPENDIX}
\section{Tightness of the Subsidy Bounds}
\subsection{Subsidy Bound of \Cref{alg:general-additive}} \label{alg:general-additive tightness}
     As \Cref{theorem: sub general additive} implies, \Cref{alg:general-additive} computes a WEF-able allocation with a total subsidy of at most $(W-\wmin)V$. However, this
     bound is not tight. To understand why, consider the case of $2$ items, each valued at $V$ by agent $i\in \{1, \ldots, n-1\}$, who has an entitlement of $w_i \geq 2$, and $V - \epsilon$ by all other agents. Our algorithm will allocate all the items to agent $i$, resulting in a subsidy of $\frac{w_j}{w_i} 2 (V-\epsilon)$ by each other agent $j\neq i \in N$, leading to a total subsidy of $(W-w_i)\frac{2(V-\epsilon)}{w_i}$, for arbitrarily small $\epsilon>0$.
     
     In general, a WEF-able allocation can achieve a lower subsidy by allocating one item to another agent with higher index $j > i$, i.e, $w_j \geq w_i$. For instance, if one item is allocated to such agent $j$, agent $i$ envies agent $j$ by $\frac{V}{w_j} - \frac{V}{w_i} \leq 0$, and agent $j$ envies agent $i$ by $\frac{V - \epsilon}{w_i} - \frac{V - \epsilon}{w_j} < \frac{2(V - \epsilon)}{w_i}$. If $\frac{V - \epsilon}{w_i} - \frac{V - \epsilon}{w_j} \leq 0$, then no subsidy is required. Otherwise, the subsidy required by agent $j$ is $\left( \frac{V - \epsilon}{w_i} - \frac{V - \epsilon}{w_j} \right) w_j < \frac{w_j}{w_i} \cdot 2(V - \epsilon)$. The subsidy required by each other agent $k \neq i,j$ is significantly lower than $w_k \cdot \frac{w_j}{w_j} \cdot \frac{2(V - \epsilon)}{w_i} = \frac{w_k}{w_i} \cdot 2(V - \epsilon)$. Therefore, the required total subsidy is significantly lower than $(W - w_i) \frac{2(V - \epsilon)}{w_i}$.

     In both cases, the resulting total subsidy bound is better than the bound obtained by allocating all items to agent $i$.

     \subsection{Subsidy Bound of Algorithm~\ref{alg:one-subsidy-identical-additive}} \label{alg:identical-additive tightness}

\begin{example}
    Consider 2 agents with weights $w_1 = 1$ and $w_2 = 2$, and 2 items $o_1$ and $o_2$, where $v(o_1)=\frac{V}{2}$ and $v(o_2)=V$.
    The allocation $X$ output by Algorithm~\ref{alg:one-subsidy-identical-additive} is $X = (\emptyset, \{o_1,o_2\})$ with total subsidy $\frac{3}{4}V$.
    On the other hand, total subsidy needed for allocation $(\{o_1\},\{o_2\})$ is zero.
    Hence, Algorithm~\ref{alg:one-subsidy-identical-additive} does not necessarily output an allocation with minimum total subsidy and the upper bound on the total subsidy given in Theorem~\ref{thm:poly-on-subsidy-identical-additive} is not always optimal.
 \end{example}

 \subsection{Subsidy Bound of \Cref{alg:binary-additive}} \label{alg:binary-additive tightness}
As \Cref{app:theorem_21} implies, \Cref{alg:binary-additive} computes a WEF-able allocation with a total subsidy of at most $\frac{W}{\wmin} - 1$. However, with more careful analysis, we can prove a tighter bound.

There are two cases to consider:
\begin{enumerate}
    \item \textbf{Agent 1 with the minimum entitlement receives a positive subsidy.} Together, \Cref{app:Alg4_sub_in_game} and \Cref{subsidy of agent not in the game} imply that $p_i \leq \frac{w_i}{\wmin}$ for each agent $i \in N$. Since agent 1 does receive a positive subsidy, and by \Cref{thm:minsubsidy}, there exists at least one agent who requires no subsidy, the total required subsidy is bounded by $\frac{W - w_2}{\wmin}$.
    \item \textbf{Agent 1 with the minimum entitlement receives no subsidy.} We can modify \Cref{subsidy of agent not in the game} in the following way: for each agent $i \notin R(t)$, where $t \in [T]$, $\ell_i(X^t) \leq \frac{1}{w_2}$. By the proof of \Cref{subsidy of agent not in the game}, $\ell_i(X^t) \leq \ell_{i_{j+1}}(X^t)$. If $i_{j+1} = i_1$, then $\ell_i(X^t) \leq \ell_{i_{j+1}}(X^t) \leq 0$ (because agent 1 requires no subsidy). Otherwise, $\ell_i(X^t) \leq \ell_{i_{j+1}}(X^t) \leq \frac{1}{w_{i_{j+1}}} \leq \frac{1}{w_2}$.
    Overall, the subsidy required by each agent is bounded by $\frac{w_i}{w_2}$, and by \Cref{thm:minsubsidy}, there exists at least one agent who requires no subsidy. Therefore, the total required subsidy is bounded by $\frac{W - \wmin}{w_2}$.
\end{enumerate}
To sum up, the total required subsidy is at most $\max\Big\{\frac{W - \wmin}{w_2}, \frac{W - w_2}{\wmin}\Big\}$.

\subsection{Subsidy Bound of Algorithm~\ref{alg:n-subsidy-additive-identical-items}} \label{alg:identical-items tightness}
The following example shows that \Cref{alg:n-subsidy-additive-identical-items} does not necessarily output an allocation with minimum total subsidy:
\begin{example}
    Consider 3 agents with weights $w_1 = w_2 = w_3 = 1$ and 4 identical items, where $v_1 > v_2 > v_3$.
    The allocation $X$ output by Algorithm~\ref{alg:n-subsidy-additive-identical-items} is $X = (2, 1, 1)$ with total subsidy $2v_2$.
    On the other hand, total subsidy needed for allocation $(2,2,0)$ is $2v_1$, which is smaller than $2v_2$.
 \end{example}
 This raises an important question that remains open: with identical items and additive valuations, is it possible to find a WEF allocation with minimum subsidy in polynomial time?
\end{document}